 \newtheorem{theorem}{Theorem}
  \newtheorem{proposition}{Proposition}
  \newtheorem{lemma}{Lemma}
\numberwithin{equation}{section}
\def\lfig#1#2#3#4#5{
\begin{figure}[t]
 \centerline{\includegraphics[width=#3]{#2}}
 \vspace{#5}
  \caption{#1 \label{#4}}
 \end{figure}
}
\def\det{\,{\rm det}\, }
\def\sign{{\rm sgn}}
\def\Ch{{\rm Ch}}
\def\Sym{\,{\rm Sym}\, }
\def\Im{\,{\rm Im}\,}
\def\Re{\,{\rm Re}\,}
\def\({\left(}
\def\){\right)}
\def\[{\left[}
\def\]{\right]}
\def\<{\left\langle}
\def\>{\right\rangle}
\def\hf{{1\over 2}}
\def\haf{\textstyle{1\over 2}}
\newcommand{\de}{\mathrm{d}}
\newcommand{\I}{\mathrm{i}}
\newcommand{\cL}{\mathcal{L}}
\newcommand{\cD}{\mathcal{D}}
\newcommand{\p}{\partial}
\newcommand{\cV}{\mathcal{V}}
\newcommand{\cG}{\mathcal{G}}
\newcommand{\cM}{\mathcal{M}}
\newcommand{\cN}{\mathcal{N}}
\newcommand{\cE}{\mathcal{E}}
\newcommand{\cX}{\mathcal{X}}
\newcommand{\cT}{\mathcal{T}}
\newcommand{\cJ}{\mathcal{J}}
\DeclareSymbolFont{AMSa}{U}{msa}{m}{n}
\DeclareSymbolFont{AMSb}{U}{msb}{m}{n}
\DeclareMathSymbol{\fieldR}{\mathalpha}{AMSb}{"52}
\newcommand{\kahler}{{K\"ahler}\xspace}
\newcommand{\qk}{{quaternion-K\"ahler}\xspace}
\newcommand{\cZ}{\mathcal{Z}}
\newcommand{\cI}{\mathcal{I}}
\newcommand{\cO}{\mathcal{O}}
\newcommand{\cH}{\mathcal{H}}
\newcommand{\pa}{\partial}
\newcommand{\nn}{\nonumber}
\newcommand{\eps}{\epsilon}
\newcommand{\IT}{\mathds{T}}
\newcommand{\IR}{\mathds{R}}
\newcommand{\IC}{\mathds{C}}
\newcommand{\IZ}{\mathds{Z}}
\newcommand{\IQ}{\mathds{Q}}
\newcommand{\IP}{\mathds{P}}
\newcommand{\Nint}{\mathds{N}}
\newcommand{\Tr}{\mbox{Tr}}
\newcommand{\sgn}{\mbox{sgn}}
\newcommand{\q}{\mbox{q}}
\newcommand{\tc}{\tilde c}
\newcommand{\tPhi}{\tilde\Phi}
\newcommand{\vth}{\vartheta}
\newcommand{\tM}{M}
\def\bea{\begin{eqnarray}}
\def\eea{\end{eqnarray}}
\def\be{\begin{equation}}
\def\ee{\end{equation}}
\def\ba{\begin{align}}
\def\ea{\end{align}}
\def\bse{\begin{subequations}}
\def\ese{\end{subequations}}
\def\ba{\bar a}
\def\by{\bar y}
\def\bw{\bar w}
\def\bZ{\bar Z}
\def\hPhi{\hat\Phi}
\def\hM{\hat M}
\def\hV{\hat V}
\def\Hcl{H^{\rm cl}}
\def\Hclr{H^{\mbox{\scriptsize ref-cl}}}
\newcommand{\CL}{{\cal{L}}}
\newcommand{\cB}{\mathcal{B}}
\def\cij#1{c}
\def\ci#1{c}
\def\XXint#1#2#3{{\setbox0=\hbox{$#1{#2#3}{\int}$}
\vcenter{\hbox{$#2#3$}}\kern-.5\wd0}}
\def\Lv#1{L(#1)}
\def\Rv#1{R(#1)}
\def\bv{\bar v}
\newcommand{\expe}[1]{{\bf e}\!\left( #1\right)}
\def\gamD#1{\tilde\gamma}
\def\CY{\mathfrak{Y}}
\def\CYm{{\widehat{\mathfrak{Y}}}}
\def\cl0{\tilde c_0}
\newcommand{\bfu}{{\boldsymbol u}}
\newcommand{\bfb}{{\boldsymbol b}}
\newcommand{\bfc}{{\boldsymbol c}}
\newcommand{\bfv}{{\boldsymbol v}}
\newcommand{\bfp}{{\boldsymbol p}}
\newcommand{\bfq}{{\boldsymbol q}}
\newcommand{\bfx}{{\boldsymbol x}}
\newcommand{\bfy}{{\boldsymbol y}}
\newcommand{\bfmu}{{\boldsymbol \mu}}
\newcommand{\bfnu}{{\boldsymbol \nu}}
\newcommand{\bfchi}{{\boldsymbol \chi}}
\newcommand{\gammap}{\gamma}
\def\thetasn{\vartheta^{\rm \,SN}}
\def\bOm{\bar\Omega}
\def\cXsn{\cX^{\rm SN}}
\def\cXt{\cX^{\,\theta}}
\def\cXcl{\cX^{\rm cl}}
\def\Hcl{H^{\rm cl}}
\def\tcV{\tilde\cV}
\def\gf{g^{(0)}}
\def\whh{\widehat h}
\def\whg{\widehat g}
\def\whPhi{\widehat\Phi}
\def\Phip{\Phi^{(+)}}
\def\Phif{\Phi^{(0)}}
\def\tPhif{\tilde\Phi^{(0)}}
\def\gref{g^{\rm ref}}
\def\gfr{g^{\rm ref(0)}}
\def\cEf{\cE^{(0)}}
\def\cEr{\cE^{\rm ref}}
\def\cErf{\cE^{\rm ref(0)}}
\def\cJr{\cJ^{\rm ref}}
\def\cGr{\cG^{\rm ref}}
\def\cXr{\cX^{\rm ref}}
\def\Rr{R^{\rm ref}}
\def\Fref{F^{\rm ref}}
\def\Hr{H^{\rm ref}}
\def\hr{h^{\rm ref}}
\def\hrDT{h^{\rm DT,ref}}
\def\hVW{h^{\rm VW}}
\def\hrVW{h^{\rm VW,ref}}
\def\whhrVW{\widehat{h}^{\rm VW,ref}}
\def\whhr{\widehat h^{\rm ref}}
\def\cMr{\cM_V^{\rm ref}}
\def\whcZr{\widehat\cZ^{\rm ref}}
\def\whcZrp{\widehat\cZ'^{\rm ref}}
\def\vthref{\vartheta^{\rm ref}}
\def\cErp{\cE^{{\rm ref(+)}}}
\def\whcZ{\widehat\cZ}
\def\tPhi{\widetilde\Phi}
\def\tPhif{\widetilde\Phi^{(0)}}
\def\intPhi{\Phi^{\scriptscriptstyle\,\int}}
\def\trPhi{\Phi^{\rm tr}}
\def\cEPhi{\Phi^{\,\cE}}
\def\tcEPhi{\widetilde\Phi^{\,\cE}}
\def\cEp{\cE^{(+)}}
\def\bOmMSW{{\bar \Omega}^{\rm MSW}}
\def\Zv{\mathscr{Z}}
\def\Zv{\mathscr{Z}}
\def\Lat{\mathbf{\Lambda}}
\def\ptt{\mathfrak{p}}
\def\ver{\mathfrak{v}}
\newcommand{\gtr}{g_{{\rm tr},n}}
\def\cs{S}
\def\cl{c^{(\ell)}}
\def\vz{\mathbb{z}}
\def\vu{\mathbb{u}}
\def\vchi{\mathbb{\chi}}
\def\tbfv{\tilde\bfv}
\def\hbfc{\hat\bfc}
\def\hbfb{\hat\bfb}
\def\hbfv{\hat\bfv}
\def\gama{\check\gamma}
\def\fp{\mathscr{D}}
\def\Vop{\mathbb{V}}
\DeclareRobustCommand{\cev}[1]{%
  {\mathpalette\do@cev{#1}}%
}
\newcommand{\do@cev}[2]{%
  \vbox{\offinterlineskip
    \sbox\z@{$\m@th#1 x$}%
    \ialign{##\cr
      \hidewidth\reflectbox{$\m@th#1\vec{}\mkern4mu$}\hidewidth\cr
      \noalign{\kern-\ht\z@}
      $\m@th#1#2$\cr
    }%
  }%
}
\def\under#1#2{\mathop{#1}\limits_{#2}}
\title{S-duality and  refined BPS indices}
\preprint{arXiv:1910.03098v2}
\author{Sergei Alexandrov$^{1}$, Jan Manschot$^{2,3}$ and Boris Pioline$^{4}$
\\
$^1$ {\it
Laboratoire Charles Coulomb (L2C), Universit\'e de Montpellier,
CNRS, F-34095, Montpellier, France}\\

$^2$ {\it School of Mathematics, Trinity College, Dublin 2, Ireland}\\

$^3$ {\it Hamilton Mathematical Institute, Trinity College, Dublin 2, Ireland}\\

$^4$ {\it Laboratoire de Physique Th\'eorique et Hautes
Energies (LPTHE), UMR 7589 CNRS-Sorbonne Universit\'e,
Campus Pierre et Marie Curie,
4 place Jussieu, F-75005 Paris, France} \\

\vspace*{2mm} {\tt e-mail:
\email{sergey.alexandrov@umontpellier.fr},
\email{manschot@maths.tcd.ie},
\email{pioline@lpthe.jussieu.fr}
}

\vspace*{-3mm}

}
\abstract{Whenever available, {\it refined} BPS indices provide considerably more information
on the spectrum of BPS states than their unrefined version.
Extending earlier work on the modularity of generalized Donaldson-Thomas invariants counting D4-D2-D0 brane bound states in type IIA strings on a Calabi-Yau threefold $\mathfrak{Y}$, we construct the modular completion of generating functions of refined BPS indices supported on a divisor class.
Although for compact $\mathfrak{Y}$ the refined indices are not protected, switching on the refinement
considerably simplifies the construction of the modular completion.
Furthermore, it leads to a non-commutative analogue of the TBA equations, which suggests a quantization of the moduli space consistent with S-duality. In contrast, for a local CY threefold given by
the total space of the
canonical bundle over a complex surface $S$, refined BPS indices are well-defined,
and equal to Vafa-Witten invariants of $S$.  Our construction provides a modular completion
of the generating function of these refined invariants  for arbitrary rank. In cases where
all reducible components of the divisor class are collinear
(which occurs e.g. when $b_2(\mathfrak{Y})=1$, or in the local case), we show that the holomorphic anomaly equation satisfied by  the completed generating function truncates at quadratic order.
In the local case, it agrees with  an earlier proposal by Minahan et al for unrefined invariants,
and extends it to the refined level using the afore-mentioned non-commutative structure.
Finally, we show that these general predictions reproduce known
results for $U(2)$ and $U(3)$ Vafa-Witten theory on $\mathbb{P}^2$, and
make them explicit for $U(4)$.
}
\begin{document}

\section{Introduction}
\label{sec-intro}

Explaining the microscopic origin of the Bekenstein-Hawking entropy of black holes is
one of the primary targets of any theory of quantum gravity. As shown in  \cite{Strominger:1996sh}
and many follow ups, string theory successfully achieves this goal in the case of supersymmetric black holes,
whose micro-states can be counted using D-brane techniques at weak coupling, and then reliably extended to strong coupling.
In string vacua with maximal or half-maximal supersymmetry,
such as type II strings compactified on $T^6$ or $K3\times T^2$,
the index $\Omega(\gamma)$ counting (with signs) micro-states of electromagnetic charge $\gamma$
is given by a Fourier coefficient of a suitable modular form, giving access to its asymptotics as
$|\gamma|\to\infty$ with arbitrary
precision, and allowing for detailed comparisons between microscopic
degeneracy and macroscopic entropy.

In contrast, for type IIA strings compactified on a generic Calabi-Yau (CY) threefold $\CY$, or type IIB
on the mirror $\CYm$, the indices $\Omega(\gamma)$ (defined mathematically as generalized Donaldson-Thomas (DT) invariants
of the derived category of coherent sheaves on $\CY$, or derived category of Lagrangian submanifolds of $\CYm$) are not known in general,
except for particular charge configurations.
One source of complication is that the indices depend on the \kahler moduli $z^a$ (or complex structure moduli of $\CYm$),
and include contributions from bound states of an arbitrary number $n$ of BPS black holes with charges
$\gamma_i$ such that $\sum_{i=1}^n \gamma_i=\gamma$ \cite{Denef:2000nb}.
These bound states are only stable in certain chambers of the \kahler moduli space $\cM_K(\CY)$,
and the index correspondingly jumps across the boundary of these domains,
leading to the so-called wall-crossing phenomenon, well-known both in the physics \cite{Denef:2007vg,Manschot:2010qz}
and  mathematics literature \cite{ks,Joyce:2008pc,Joyce:2009xv}. While for $\CY=K3\times T^2$ only bound states with
two constituents occur \cite{Dabholkar:2009dq, Dabholkar:2012nd}, for a generic threefold $\CY$ the number of
constituents may be arbitrary large (though only a finite number of
bound states occur for a given  charge $\gamma$). As a result, the duality group in $D=4$
(generated by  monodromies in $\cM_K$)  equates $\Omega(\gamma,z^a)$ to  $\Omega(h\cdot \gamma,h\cdot z^a)$,
but not necessarily to $\Omega(h\cdot \gamma,z^a)$, since the monodromy $z^a\to h\cdot z^a$ may encounter walls of marginal stability.
Thus, the constraints from duality are much weaker than for
string vacua with 16 supercharges or more.

Nevertheless, since BPS black holes in $D=4$ induce
instanton corrections to the metric on the vector multiplet moduli space $\cM_V$
after compactification on a circle, and
since  type IIA string theory compactified on $\CY$ is equivalent to M-theory compactified on $\CY\times T^2$,
it is clear that the indices $\Omega(\gamma,z^a)$ are strongly constrained by invariance under the large diffeomorphisms $SL(2,\IZ)$
of the M-theory torus (or equivalently, by S-duality in type IIB string theory) \cite{Alexandrov:2008gh}.
This is particularly so for black holes obtained by wrapping a D4-brane on a divisor $\cD \subset \CY$,
which lifts to an M5-brane wrapped around $\cD\times T^2$ in M-theory on $\CY\times T^2$:  invariance under $S$-duality is then essentially
equivalent to modular invariance of the superconformal field theory obtained by reducing the five-brane along $\cD$ \cite{Maldacena:1997de}.

\medskip

In a series of recent papers \cite{Alexandrov:2012au,Alexandrov:2016tnf,Alexandrov:2017qhn,Alexandrov:2018lgp},
building on related works \cite{Manschot:2009ia,Manschot:2010xp,Alexandrov:2016enp,Alexandrov:2018iao},
we have studied  the modular properties of the generating functions $h_{p,\mu}(\tau)$
of MSW invariants\footnote{MSW invariants are defined as the values of generalized DT invariants
$\bOm(\gamma,z^a)$ where $\gamma$ is supported on a divisor class, and the
\kahler moduli $z^a$ are evaluated at the  large volume attractor point $z_\infty(\gamma)$, see \eqref{defbOmMSW}. The
name `MSW invariants' was coined in \cite{Alexandrov:2012au} in reference to \cite{Maldacena:1997de},
but  the relevance of the large volume attractor chamber for modularity was later identified in
\cite{deBoer:2008fk,Andriyash:2008it,Manschot:2009ia}.
The rational DT invariants $\bOm(\gamma)$ are related to the integer-valued DT invariants $\Omega(\gamma)$ by \eqref{OmtoOmb} below
and are better suited for expressing the constraints of modularity \cite{Manschot:2010xp}.
\label{foo1}}
$\bOmMSW(\gamma):=\bOm(\gamma,z^a_\infty(\gamma))$ supported on a divisor class $[\cD]=p^a\omega_a$
(where $\omega_a$, $a=1,\dots, b_2(\CY)$, is a basis of $\Lambda=H_4(\CY,\IZ)$,
and $\mu\in\Lambda^*/\Lambda$ keeps track of the residual flux  after spectral flow,
see \eqref{decq} below). By imposing the existence of an isometric action of S-duality
on the instanton corrected moduli space $\cM_V$, we showed that $h_{p,\mu}(\tau)$ must
transform as a vector-valued {\it mock} modular form under fractional linear transformations of $\tau$.
More precisely, we derived a specific non-holomorphic modular completion
$\whh_{p,\mu}$, made out of all the holomorphic  functions $h_{p_i,\mu_i}$ with
$\sum_{i=1}^n p_i^a=p^a$ for any number $n$ of effective divisor classes $[\cD_i]=p_i^a\omega_a$ (see \eqref{exp-whh} below),
such that $\whh_{p,\mu}$ transforms as an (ordinary, but non-holomorphic) vector-valued modular form with
a specific multiplier system. When the divisor  class $[\cD]$ is irreducible, such that $n=1$, $\whh_{p,\mu}$ coincides
with  $h_{p,\mu}$, which is therefore an ordinary weakly holomorphic modular form, as anticipated in
\cite{Maldacena:1997de}. When the divisor  class $[\cD]$ decomposes into a sum of at most $n=2$
irreducible divisors, then the modular completion involves a sum of products
$ \widehat \Psi_{p_1,p_2,\mu_1,\mu_2} \, h_{p_1,\mu_1}\, h_{p_2,\mu_2}$ with $p=p_1+p_2$, where
$\Psi_{p_1,p_2,\mu_1,\mu_2}$ is the indefinite theta series constructed in \cite{Manschot:2009ia},
whose kernel can be written as an Eichler integral\footnote{
Recall that the Eichler integral
$\Phi(\tau)= \int_{-\bar\tau}^{\I\infty} \frac{\overline{F(-\bar u,\bar\tau)}\de u}{[-\I(u+ \tau)]^{2-w}}$
of  an analytic modular form $F(\tau,\bar\tau)$ of weight $(w,\bar w)$
transforms with modular weight $(2-w+\bar w,0)$ under $\tau\to (a\tau+b)/(c\tau+d)$,
up to a non-homogeneous term proportional to the period integral
$\int_{d/c}^{\I\infty} \frac{\overline{F(-\bar u,\bar\tau)}\de u}{[-\I(u+ \tau)]^{2-w}}$ (see e.g. \cite[(A.15)]{Alexandrov:2012au})}
of a Siegel-Narain theta series of signature $(1,b_2(\CY)-1)$ \cite{Alexandrov:2016tnf}.
The holomorphic generating function  $h_{p,\mu}$ is therefore an example of a mixed mock modular form \cite{Zwegers-thesis,MR2605321}.

In general \cite{Alexandrov:2018lgp}, the difference $\whh_{p,\mu}-h_{p,\mu}$ is a sum of products
$\widehat \Psi_{p_1,\dots p_n,\mu_1,\dots \mu_n}
\prod_{i=1}^n h_{p_i,\mu_i}$ with $p=p_1+\dots+p_n$,
where $\widehat \Psi_{p_1,\dots p_n,\mu_1,\dots \mu_n} $ is an indefinite theta series
whose kernel can be written as an $(n-1)$-times iterated Eichler integral of a Siegel-Narain
theta series of signature $(1,b_2(\CY)-1)$ \cite{Alexandrov:2016enp,bringmann2017higher, Manschot:2017xcr}.
As a result, the holomorphic generating function $h_{p,\mu}$ must transform non-homogeneously under
$SL(2,\IZ)$ like a vector-valued mock modular form of depth $n-1$ \cite{ZagierZwegersunpublished}.
Correspondingly, the modular completion $\whh_{p,\mu}$ satisfies a holomorphic anomaly equation,
sourced by a combination of the $\whh_{p_i,\mu_i}$'s with $\sum_i p_i^a=p^a$.
Importantly, the modular anomaly affects the growth of the Fourier coefficients
of $h_{p,\mu}$, hence would have consequences for a detailed comparison with the macroscopic entropy.

The construction of the modular completion $\whh_{p,\mu}$ in \cite{Alexandrov:2018lgp}
involved an interplay between different expansions, most notably
\begin{itemize}
\item
the multi-instanton expansion of the `instanton generating potential'\footnote{This
potential is closely related to the `contact potential' on $\cM_V$, or to the \kahler potential on its
twistor space \cite{Alexandrov:2008gh}, and therefore to the $\IR^3$ index studied in \cite{Alexandrov:2014wca}.}
$\cG$, a modular function of weight $(-\frac32,\frac12)$ on the vector multiplet
moduli space $\cM_V$ in $D=3$;
\item
the `attractor flow tree expansion' \cite{Manschot:2010xp, Alexandrov:2018iao} of the DT invariants $\bOm(\gamma,z^a)$ in terms of
$\bOmMSW(\gamma_i)$'s, obtained by iterating the
 primitive wall-crossing formula \cite{Denef:2007vg}
\be
\label{primwc}
\Delta \bOm(\gamma_L+\gamma_R) = (-1)^{\langle \gamma_L,\gamma_R\rangle+1} \,
\langle \gamma_L,\gamma_R\rangle\, \bOm(\gamma_L) \, \bOm(\gamma_R).
\ee
\end{itemize}
After summing over all possible types of trees, partitions etc.,
one eventually arrives at indefinite theta series with kernels expressed in terms of sums of multiple derivatives of generalized error functions,
where the sums and derivatives appear due to the presence of the Dirac product $\langle \gamma_L,\gamma_R\rangle$ in \eqref{primwc}.

One of the goals of the present work is to extend and streamline this construction by introducing a
refinement parameter $y$, drawing inspiration from earlier works on wall-crossing
and on the attractor flow tree formula \cite{Manschot:2010qz,Alexandrov:2018iao}.
At the most basic level, this involves replacing the factor $\langle \gamma_L,\gamma_R\rangle$ in \eqref{primwc}  by
the character
\be
\label{replacegLR}
\langle \gamma_L,\gamma_R\rangle \ \to \ \frac{y^{\langle \gamma_L,\gamma_R\rangle}-y^{-\langle \gamma_L,\gamma_R\rangle}}{y-1/y}\, ,
\ee
of the $SU(2)$ representation of spin $j=\frac12|\langle \gamma_L,\gamma_R\rangle|-1$.
The explicit powers of $y^{\pm \langle \gamma_L,\gamma_R\rangle}$ can be absorbed
in the parameters of the theta series, leading to simpler kernels expressed directly in terms
of generalized error functions, as opposed to sums of multiple derivatives thereof.
The price to pay is that
inverse powers of $y-1/y$ appear in front of individual contributions, and the
unrefined limit $y\to 1$ can only be taken after summing over all contributions.

While the introduction of the parameter $y$ allows to considerably simplify the construction of the
modular completion of the generating functions $h_{p,\mu}(\tau)$ of {\it unrefined} MSW invariants, it can also be used to obtain a natural
non-holomorphic completion  $\whhr_{p,\mu}(\tau,w)$ for the generating functions $\hr_{p,\mu}(\tau,w)$ of {\it refined}
invariants $\bOmMSW(\gamma,y)$ with $y=e^{2\pi\I w}$, in situations where a refined version of
$\bOmMSW(\gamma)$ can be defined. In such a situation, this construction
also suggests a natural generalization
$\cGr(w)$ of the instanton generating potential $\cG$, such that $\cGr(w)$ transforms with modular weight $(-\frac12,\frac12)$
whenever the completions $\whhr_{p,\mu}(\tau,w)$ transform as vector-valued Jacobi modular forms
of weight $(-\frac12 b_2(\CY),0)$ and suitable index $m(p)$.
 Remarkably, for a suitable choice of the index (more precisely, whenever $m(p)-\frac16 p^3$ is linear in $p^a$),
 $\cGr(w)$  has a simple representation \eqref{nonpert-cGr}
in terms of solution to  a certain integral equation \eqref{inteqH-star},
analogous to the TBA-like equations for Darboux coordinates on the twistor space of
$\cM_V$ \cite{Gaiotto:2008cd,Alexandrov:2008gh,Alexandrov:2009zh,Alexandrov:2010pp}. This equation involves a non-commutative star product
of functions on $\cM_V$ similar to the one which has appeared in the context of line operators \cite{Ito:2011ea,Hayashi:2019rpw}.
It would be interesting to understand the relation (if any) to other  refined versions of TBA-like equations that have already appeared in the
literature \cite{Gaiotto:2010be,Cecotti:2014wea}, and understand the  physical significance of $\cGr$.
Meanwhile, our result confirms the general expectation \cite{Nekrasov:2009rc,Gaiotto:2010be,Bullimore:2015lsa, Beem:2016cbd}
that the refinement (or equivalently, introducing an $\Omega$-background along two directions in $\IR^3$) effectively quantizes
the moduli space $\cM_V$ as well as its twistor space, and strongly suggests that this refinement preserves the action of S-duality,
which seems to be consistent with the analysis in \cite{Gaiotto:2019wcc}.

Unfortunately, for type II strings on a {\it compact}  CY threefold $\CY$, the refined
BPS invariants defined by $\Omega(\gamma,y)=\Tr_{\cH_\gamma} (-y)^{2J_3}$, where $J_3$ generates rotations around a fixed axis in $\IR^3$,
are emphatically not protected
by supersymmetry, in particular they may well depend both on the K\"ahler and complex structure of $\CY$. On the mathematical side,
the moduli spaces of semi-stable objects do not carry
any $\IC^\times$ action which would allow to refine the DT invariants in a deformation-invariant
way (although it may still be possible to define motivic invariants \cite{ks}).
Therefore the status of the refined construction in this setup is not clear.
In contrast, in the `local Calabi-Yau' case where  $\CY={\rm Tot}(K_S)$ is the total space
of the canonical
bundle over a smooth complex surface $S$ (leading to $\cN=2$ rigid supersymmetry in $D=4$),
an additional $SU(2)$ R-symmetry arises allowing to define
refined indices (or `protected spin character')
$\Omega(\gamma,y)=\Tr_{\cH_\gamma} (-1)^{2J_3} y^{2(J_3+I_3)}$
\cite{Gaiotto:2010be}, corresponding to the $\chi_{y^2}$-genus of the moduli space of stable objects.
In this case, the refined DT invariants counting D4-branes
wrapped $N$ times around the surface $S$ \cite{Gholampour:2013jfa} are expected to coincide
\cite{Minahan:1998vr,Alim:2010cf,gholampour2017localized} with
the refined Vafa-Witten (VW) invariants with gauge group $U(N)$
on $S$ \cite{Vafa:1994tf,gottsche2017virtual,thomas2018equivariant,Toda:2019opw}.
Moreover, for smooth projective surfaces with $b_1(S)=0$ and $b_2^+(S)>1$,
the generating function of refined VW invariants is conjectured to have precise modular properties~\cite{Vafa:1994tf, Gottsche:2017vxs, gottsche2018refined}.
In contrast, for $b_2^+(S)=1$, the generating function is expected
to be mock modular, as already evident for VW theory on $\IP^2$ in the rank 2 case \cite{Vafa:1994tf}.

For any Fano surface and arbitrary $N$, the refined generating functions
$\hrVW_{N,\mu}$ can be determined using a sequence of blow-ups and wall-crossing
transitions  \cite{Gottsche:1990, Yoshioka:1994, Yoshioka:1995,
 Manschot:2010nc} in terms of generalized
Appell-Lerch sums \cite{Manschot:2014cca}, which indicates that they should transform as
mock Jacobi forms of depth $N-1$. By viewing the generating functions of VW invariants
$\hrVW_{N,\mu}$ as
special instances of generating functions $\hr_{p,\mu}$  for local Calabi-Yau threefolds,
our formalism in principle predicts the precise modular completion
$\whhrVW_{N,\mu}$ for arbitrary rank $N$ and any smooth Fano surface with
$b_2^+(S)=1$ and $b_1(S)=0$.
We verify that this prescription precisely reproduces\footnote{As explained
in footnote \ref{foot-sign}, the agreement apparently requires a minus sign in the relation
between DT and  VW invariants. We do not yet understand the origin of this sign flip.}
the known modular completions for $S=\IP^2$, $N=2$ and $N=3$ obtained by different methods
in \cite{Bringmann:2010sd,Manschot:2017xcr}.
Importantly, the structure of the modular completion is
universal and does not require prior knowledge
of the VW invariants themselves. The knowledge of the
detailed modular properties could in principle be leveraged
to determine $\hrVW_{N,\mu}$ from the knowledge of its polar terms,
e.g. using a Rademacher sum \cite{Bringmann:2010sd, Bringmann:2018cov}.

More generally, the refined VW invariants are expected to be  well-defined
and deformation-invariant for any almost complex four-manifold
$S$ (this is because the twisting preserves 4 topological charges in this case \cite{Dijkgraaf:1997ce}).
We conjecture that for such manifolds, the generating function of
refined VW invariants (completed with non-holomorphic terms when $b_2^+(S)=1$)
transforms as a vector-valued Jacobi form, whose weight $w_S$ and index
$m_S(N)$ are given by
\be
\label{genindex0}
w_S=\frac14(\sigma(S)-\chi(S)),
\qquad
m_S(N)=-\frac{1}{6}(2N^3+N)\,\chi(S)-\frac{1}{2}N^3\sigma(S) ,
\ee
where $\chi(S)$ and $\sigma(S)$ are the Euler number and signature of $S$.

While mathematically attractive, our conjecture apparently presents a conundrum: as shown
in  \cite{Alexandrov:2018lgp},  the general
holomorphic anomaly equation satisfied  by the completion $\whh_{p,\mu}$ is sourced by
products  of generating functions $\whh_{p_i,\mu}$ for any decomposition $p=\sum_{i=1}^n p_i$,
whereas according to the conjecture in \cite{Minahan:1998vr}, the holomorphic anomaly equation for
the partition function $Z_N^{\rm VW}$ of rank $N$ VW invariants should only be sourced
by sums of products $Z_{N_1}^{\rm VW} \, Z_{N_2}^{\rm VW}$ with $N_1+N_2=N$.
As a crucial test of this conjecture, we show that in the special case where all magnetic charges
$p_i$ are collinear, which is pertinent for local CY threefolds, all contributions
to the holomorphic anomaly equation satisfied  by  $\whhr_{p_i,\mu}$ (and by its
unrefined counterpart $\whh_{p_i,\mu}$)
vanish whenever\footnote{This truncation to quadratic order is reminiscent
of the holomorphic anomaly equations for topological string amplitudes \cite{Bershadsky:1993ta}.
In fact, the two are related by T-duality for elliptically fibered threefolds \cite{Klemm:2012sx}.}
$n>2$. Assuming the validity of our conjecture, we establish the
general holomorphic anomaly equation for $Z_N^{\rm VW}$, and find
precise agreement with \cite{Minahan:1998vr}, which up until now had
only been tested for $\frac{1}{2}K3$, and for rational surfaces for $N\leq 3$.
Furthermore, we derive a refined version \eqref{holanomZpref} of this holomorphic anomaly equation, which involves
the same non-commutative star product which appeared in the TBA-like equations \eqref{inteqH-star}.
It would be interesting to relate it to the holomorphic anomaly equation for the Taylor coefficients
of the refined topological string amplitude near $w=0$ proposed in \cite[Eq.(8.16)]{Huang:2013yta}.

\medskip

The outline of this work is as follows. In \S\ref{sec-review} we review the definition of the generating functions of MSW invariants
and the construction in \cite{Alexandrov:2018lgp} of their modular completion.
In \S\ref{sec-ref} we simplify and generalize this construction to include the refinement parameter.
In \S\ref{sec-potential} we construct a `refined  instanton generating potential' $\cGr$,
which is modular invariant whenever the non-holomorphic completions
of the generating functions of refined MSW invariants are themselves modular.
In \S\ref{subsec-TBA}
we provide a non-perturbative
definition of this object, in terms of solutions to non-commutative TBA-like equations.
In \S\ref{sec_localCY}, we apply our construction to local Calabi-Yau threefolds, and conjecture
the precise form of the modular completion of the generating function of  VW invariants
for any rank. We check this prediction against results for $U(2)$ and $U(3)$ VW
theory on $\IP^2$. In \S\ref{sec-holan} we establish the holomorphic anomaly equation for
completed generating functions of refined invariants, and show that this equation truncates whenever
all magnetic charges lie in a one-dimensional lattice. Moreover, we establish a refined
version of the truncated holomorphic anomaly equations, based on the same non-commutative
structure used in \S\ref{subsec-TBA}, and show that it reduces to the conjecture  \cite{Minahan:1998vr}
in the unrefined case. \S\ref{sec-concl} is devoted to the discussion of our results.
Finally, a few appendices contain various useful information and details of our proofs.

\section{DT invariants, MSW invariants and modular completions}
\label{sec-review}

In this section  we review general  properties of  generalized DT invariants $\bOm(\gamma,z^a)$
supported on an effective divisor class $[\cD]=p^a\omega_a$
where $\omega_a$ denotes an integer basis of irreducible divisors in $ \Lambda=H_4(\CY,\IZ)$,
dual to the basis $\omega^a$ of curve classes in $\Lambda^*=H_2(\CY,\IZ)$.
In such cases the charge vector $\gamma\in H_{\rm even}(\CY,\IQ)$ has the form $\gamma=(0,p^a,q_a,q_0)$
where the entries corresponds to the D6, D4, D2 and D0-brane charges. Since the D6-brane
charge vanishes, the Dirac product on the charge lattice is given by $\langle \gamma_1,\gamma_2 \rangle = p_2^a q_{1,a} - p_1^a q_{2,a}$
and depends only on the  {\it reduced} charge vector $\gama=(p^a,q_a) \in H_4(\CY,\IQ)\oplus H_2(\CY,\IQ)$.
The charges satisfy
the following quantization conditions \cite{Alexandrov:2010ca}:
\be
\label{fractionalshiftsD5}
p^a\in\IZ ,
\qquad
q_a \in \IZ  + \frac12 \,(p^2)_a  ,
\qquad
q_0\in \IZ-\frac{1}{24}\,c_{2,a} p^a.
\ee
Here $c_{2,a}$ are components of the second Chern class of $\CY$ and we introduced convenient notations
$(kp)_a=\kappa_{abc}k^b p^c$ and $(lkp)=\kappa_{abc}l^a k^b p^c$ where $\kappa_{abc}$ are the intersection numbers on $H_4(\CY,\IZ)$.
The condition that $\cD$ is an effective divisor in $\CY$ and
belongs to the K\"ahler cone means that
\be
\label{khcone}
p^3> 0,
\qquad
(r p^2)> 0,
\qquad
k_a p^a > 0,
\ee
for all  divisor classes $r^a \omega_a \in H_4(\CY,\IZ)$ with $r^a>0$, and
curve classes $k_a \gamma^a \in H_2^+(\CY,\IZ)$ with $k_a>0$.
The charge $p^a$ induces a quadratic form $\kappa_{ab}=\kappa_{abc} p^c$ on $\Lambda\otimes \IR$ of signature $(1,b_2(\CY)-1)$.
This quadratic form allows to embed $\Lambda$ into $\Lambda^*$, but the map $\epsilon^a \mapsto \kappa_{ab} \epsilon^b$
is in general not surjective, the quotient $\Lambda^*/\Lambda$ being a finite group of order $|\det\kappa_{ab}|$.

For primitive charge vector $\gamma$, the rational
DT invariants
\be
\label{OmtoOmb}
\bOm(\gamma,z^a):=
\sum_{m|\gamma} \frac{1}{m^2} \Omega(\gamma/m,z^a)
\ee
coincide with the generalized DT invariants
$\Omega(\gamma,z^a)$, which are defined schematically
as the Euler number\footnote{More precisely, $(-1)^{d_{\IC}}$ times the Euler number, where
$d_{\IC}$ is the complex dimension of $\cM_{\gamma,z^a}$.}
of the moduli space $\cM_{\gamma,z^a}$ of semi-stable coherent sheaves on $\cD$, for the stability condition determined by
the complexified \kahler moduli $z^a=b^a+\I t^a \in \cM_K(\CY)$.
When the vector $\gamma$ is not primitive, the moduli space $\cM_{\gamma,z^a}$ is singular
and its Euler number can be determined using intersection cohomology
in favourable cases \cite{Meinhardt:2017, Manschot:2016gsx}.
Due to the Bogomolov bound, DT invariants are known to vanish for
$\hat q_0 \geq \hat q_0^{\rm max}=\frac{1}{24}\chi(\cD)=\tfrac{1}{24}(p^3+c_{2,a}p^a)$ where
\be
\label{defqhat}
\hat q_0 \equiv
q_0 -\frac12\, \kappa^{ab} q_a q_b
\ee
and $\kappa^{ab}$ is the inverse of the quadratic form $\kappa_{ab}$.
The generating series
\be
\label{defchimu}
h^{\rm DT}_{p,q}(\tau,z^a) = \sum_{\hat q_0 \leq \hat q_0^{\rm max}}
\bOm(\gamma,z^a)\,\expe{-\hat q_0 \tau },
\ee
where  $\expe{x}:=e^{2\pi\I x}$,
defines  a holomorphic function on the Poincar\'e upper half-plane $\tau\in \mathbb{H}$,
locally constant as a function of the \kahler moduli $z^a\in \cM_K$. However, due to wall-crossing phenomena,
this function is not expected to possess any simple modular properties.

Instead, to eliminate the dependence on the moduli $z^a$, it is natural to consider generating functions
of MSW invariants, defined as  generalized DT invariants evaluated at their respective large volume attractor point,
\be
\bOmMSW(\gamma)=\bOm(\gamma,z^a_\infty(\gamma)),
\qquad
z^a_\infty(\gamma)= \lim_{\lambda\to +\infty}\(-q^a+\I\lambda  p^a\).
\label{defbOmMSW}
\ee
They are invariant under spectral flow transformations acting on the D2 and D0 charges via
\be
\label{flow}
q_a \mapsto q_a - \kappa_{abc}p^b\epsilon^c,
\qquad
q_0 \mapsto q_0 - \epsilon^a q_a + \frac12\, (p\epsilon \epsilon)
\ee
with $\epsilon^a\in\Lambda$, which leave the charge $\hat q_0$ \eqref{defqhat} invariant.
As a result, they only depend on $\hat q_{0}$, $p^a$ and $\mu_a\in \Lambda^*/\Lambda$,
the residue class of $q_a$ modulo shifts \eqref{flow}, defined by
\be
\label{decq}
q_a=\mu_a+  \kappa_{ab}\eps^b+\hf\, \kappa_{abc} p^b p^c.
\ee
Thus, we can write $\bOmMSW(\gamma)=\bOm_{p,\mu}( \hat q_0)$ and the generating series
\be
h_{p,\mu}(\tau) =\sum_{\hat q_0 \leq \hat q_0^{\rm max}}
\bOm_{p,\mu}(\hat q_0)\,\expe{-\hat q_0 \tau }
\label{defhDT}
\ee
defines a vector-valued holomorphic function of $\tau$, with components labelled by elements of the finite group $\Lambda^*/\Lambda$.

\medskip

In \cite{Alexandrov:2012au, Alexandrov:2016tnf,Alexandrov:2017qhn,Alexandrov:2018lgp},
by postulating the existence of an isometric action of S-duality
on  the vector multiplet moduli space $\cM_V$ in $D=3$, it was shown that the generating functions
$h_{p,\mu}$ must satisfy precise modular properties, which depend on the reducibility
of the divisor class $[\cD]$. If $[\cD]$ is an {\it irreducible} divisor class, then $h_{p,\mu}$
should be a vector-valued modular form of weight $-\frac12\, b_2(\CY)-1$,
transforming as \eqref{ST}
as anticipated in \cite{Maldacena:1997de,Gaiotto:2006wm,deBoer:2006vg,Denef:2007vg}.
If instead $[\cD]$ is a sum of $n\geq 2$ {\it irreducible} effective divisors $[\cD_i]$,
then $h_{p,\mu}$  is a mock modular form of depth $n-1$, which means
that it admits a non-holomorphic modular completion $\whh_{p,\mu}$,
which can be expressed in terms of $n-1$ iterated Eichler integrals involving the generating functions
$h_{p_i,\mu_i}$ of the constituents. Specifically, $\widehat h_{p,\mu}$ should transform
as\footnote{In \cite{Alexandrov:2016tnf}, based on earlier work \cite{Alexandrov:2010ca,Alexandrov:2012au} we found
that $\whh_{p,\mu}$ must transform
with multiplier system $M_\eta^{c_2\cdot p} \times \overline M_{\theta}$, where $M_\eta$
is the multiplier system of the Dedekind eta function, and $M_{\theta}$ is that of the Siegel-Narain
theta series for the lattice $\Lambda$, given in \cite[Eq.(2.4)]{Alexandrov:2016enp} with $n=b_2(\CY)$, $\lambda=-1$.
Eq. \eqref{ST} follows by combining these two observations, and ensures that
the partition function and the instanton generating potential defined below in \eqref{defZp} and \eqref{treeFh-flh},
respectively, transform as modular forms of weight $(-\frac32,\frac12)$ and trivial multiplier system.}
\be
\label{ST}
\begin{split}
\whh_{p,\mu}(-1/\tau)=&\, -\frac{(-\I\tau)^{-\frac{b_2(\CY)}{2}-1}}{\sqrt{|\Lambda^*/\Lambda|}}\,
\expe{-\tfrac{1}{4}\, p^3 -\tfrac18\, c_{2,a}p^a }
\sum_{\nu \in \Lambda^*/\Lambda}
\expe{-\mu \cdot \nu}\whh_{p,\nu}(\tau),
\\
\whh_{p,\mu}(\tau+1)=&\, \expe{\tfrac{1}{24}\, c_{2,a}p^a +
\tfrac{1}{2}(\mu+\tfrac12 p)^2} \whh_{p,\mu}(\tau).
\end{split}
\ee

\lfig{An example of Schr\"oder tree contributing to $R_8$. Near each vertex we showed the corresponding factor
using the shorthand notation $\gamma_{i+j}=\gamma_i+\gamma_j$.}
{WRtree-new3}{9.75cm}{fig-Rtree}{-1.2cm}

The explicit form of the completion  for a generic reducible divisor was found to be \cite[Eq.(5.1)]{Alexandrov:2018lgp}
\be
\whh_{p,\mu}(\tau)= \sum_{n=1}^\infty
\sum_{\sum_{i=1}^n \gama_i=\gama}
R_n(\{\gama_i\},\tau_2)
\, e^{\pi\I \tau Q_n(\{\gama_i\})}
\prod_{i=1}^n h_{p_i,\mu_i}(\tau).
\label{exp-whh}
\ee
Here $Q_n$ denotes the quadratic form, originating  from the quadratic term in the definition \eqref{defqhat} of
the invariant charge $\hat q_0$,
\be
Q_n(\{\gama_i\})= \kappa^{ab}q_a q_b-\sum_{i=1}^n\kappa_i^{ab}q_{i,a} q_{i,b} \, ,
\label{defQlr}
\ee
where $\kappa_i^{ab}$ is the inverse of $\kappa_{i,ab}=\kappa_{abc}p_i^c$.
The sum in \eqref{exp-whh} runs over all ordered decompositions of $\gama$ (for any fixed choice of
$q_a$ consistent with the residue class $\mu_a$, for example $q_a=\mu_a+\frac12 (p^2)_a$)
into a sum of reduced charge vectors $\gama_i=(p_i^a,q_{i,a}) \in H_4^+(\CY) \oplus H_2(\CY)$,
with first components satisfying \eqref{khcone} and
with arbitrary residue classes $\mu_i$.
It produces a theta series of signature $(n-1) (1,b_2(\CY)-1)$, whose kernel is determined by the coefficient $R_n$.
This coefficient is in turn given by
\be
R_n(\{\gama_i\},\tau_2)= \Sym\left\{\sum_{T\in\IT_n^{\rm S}}(-1)^{n_T-1} \cEp_{v_0}\prod_{v\in V_T\setminus{\{v_0\}}}\cEf_{v}\right\},
\label{solRn}
\ee
where $\Sym$ denotes symmetrization (with weight $1/n!$) with respect to the charges $\gama_i$.
Here the sum goes over so-called Schr\"oder trees with $n$ leaves (see Figure \ref{fig-Rtree}), i.e. rooted planar
trees such that all vertices $v\in V_T$ (the set of vertices of $T$ excluding the leaves) have $k_v\geq 2$ children,
$n_T$ is the number of elements in $V_T$, and $v_0$ labels the root vertex.
The vertices of $T$ are labelled by charges so that the leaves carry charges $\gama_i$, whereas the charges assigned to other vertices
are given recursively by
the sum of charges of their children, $\gama_v\in\sum_{v'\in\Ch(v)}\gama_{v'}$.
Finally, the functions $\cEf_v$ and $\cEp_v$ are determined from a set of functions
$\cE_n(\{\gama_i\},\tau_2)$ depending on $n$ charges as follows.
Namely, each function has a {canonical} decomposition
\be
\cE_n(\{\gama_i\},\tau_2)=\cEf_n(\{\gama_i\})+\cEp_n(\{\gama_i\},\tau_2),
\label{twocEs}
\ee
where the first term $\cEf_n$ does not depend on $\tau_2$,
whereas the second term $\cEp_n$ is exponentially suppressed as $\tau_2\to\infty$ keeping
the charges $\gama_i$ fixed. Then, given a Schr\"oder tree $T$,
we set $\cE_{v}\equiv \cE_{k_v}(\{\gama_{v'}\})$ (and similarly for $\cEf_{v}, \cEp_{v}$)
where $v'\in \Ch(v)$ runs over the $k_v$ children of the vertex $v$. The functions $\cE_n$
are defined as suitable combinations of derivatives of the generalized error functions introduced in
\cite{Alexandrov:2016enp,Nazaroglu:2016lmr}. Since the concrete form of $\cE_n$ will not be important
in this paper, we relegate their definition to appendix \ref{ap-E}.

\medskip

In the following we will also need the expansion  of the generating function of DT invariants \eqref{defchimu}
in terms of the completion \eqref{exp-whh}, given by \cite[Eq.(5.2)]{Alexandrov:2018lgp}
\be
\begin{split}
h^{\rm DT}_{p,q}(\tau,z^a)=&\,
\sum_{n=1}^\infty \sum_{\sum_{i=1}^n \gama_i=\gama}
\whg_n(\{\gama_i,c_i\}, \tau_2) \,e^{\pi\I \tau Q_n(\{\gama_i\})}
\prod_{i=1}^n \whh_{p_i,\mu_i}(\tau).
\end{split}
\label{multihd-full}
\ee
Here, the sum runs over the same range as in \eqref{exp-whh}, but
 the coefficients are expressed through $\cE_n(\{\gama_i\},\tau_2)$ and another set of functions $\gf_{n}(\{\gama_i,c_i\})$,
 also specified in appendix \ref{ap-E},
as follows:
\be
\whg_n(\{\gama_i,c_i\}, \tau_2) = \Sym\left\{\sum_{T\in\IT_n^{\rm S}}(-1)^{n_T-1}
\(\gf_{v_0}-\cE_{v_0}\)\prod_{v\in V_T\setminus{\{v_0\}}}\cE_{v}\right\},
\label{soliterg}
\ee
where $\gf_{v}= \gf_{k_v}(\{\gama_{v'},c_{v'}\})$ similarly to the definition of  $\cE_v$ in terms of $\cE_n$ above.
The stability parameters $c_i$ are defined in terms of the reduced charges $\gama_i$ and moduli $z^a$ by the standard formula
\be
c_i(z^a) =  \Im\bigl[ Z_{\gamma_i}\bZ_\gamma(z^a)\bigr] ,
\label{fiparam}
\ee
where $Z_\gamma(z^a)$ is the central charge function {and $\gamma=\sum_{i=1}^n \gamma_i$.
Note that $c_i(z)$ is independent of the D0-brane charge in the large volume limit, and thus depends only on $\gama_i$.}
The relation \eqref{multihd-full} allows to recover the DT invariants from MSW ones: it is suffices to drop all
$\tau_2$-dependent terms on the right-hand side, which is {tantamount} to replacing $\cE_n$ by $\cEf_n$.
In this way one obtains
\be
\label{Omsumtree}
\bOm(\gamma,z^a) =
\sum_{\sum_{i=1}^n \gamma_i=\gamma}
\gtr(\{\gamma_i,c_i\})\,
\prod_{i=1}^n \bOmMSW(\gamma_i),
\ee
where the coefficients
are given by
\be
\begin{split}
\gtr(\{\gamma_i,c_i\})
=&\, \Sym\left\{\sum_{T\in\IT_n^{\rm S}}(-1)^{n_T-1} \(\gf_{v_0}-\cEf_{v_0}\)\prod_{v\in V_T\setminus{\{v_0\}}}\cEf_{v}\right\}.
\end{split}
\label{soliterg-tr}
\ee
The formula \eqref{Omsumtree} is a special case of the `flow tree formula' introduced in \cite{Alexandrov:2018iao},
and gives an alternative way of computing the coefficients
$\gtr$, which {were} called `tree indices' in \cite{Alexandrov:2018iao}.

Finally, it is also important to note that the function $\cEf_n$ is a special case of $\gf_n$, evaluated
at the large volume attractor point,  namely
\be
\cEf_n(\{\gama_i\})=\gf_n(\{\gama_i,\beta_{ni}\}),
\label{rel-gE}
\ee
where $\beta_{k\ell}=\sum_{i=1}^k \gamma_{i\ell}$ and
$\gamma_{ij}=\langle\gama_i,\gama_j\rangle$. This relation shows that, as explained in appendix \ref{ap-E},
the knowledge of the functions $\gf_n$ uniquely determines both the coefficients $R_n$ in the
modular completion \eqref{exp-whh}, and the tree index, since the exponentially decaying contribution $\cEp_n$ is
determined from $\cEf_n$ by the requirement that the sum $\cE_n$
(after a suitable rescaling) is a smooth solution of Vign\'eras equation \eqref{Vigdif},
which {ensures that the theta series with kernel $\cE_n$ is modular}.

\section{Refined construction}
\label{sec-ref}

In this section, we refine the previous construction, by introducing a deformation parameter $y$
which physically may be thought {of} as a fugacity conjugate to the physical angular momentum $J_3$
carried by BPS states in $D=4$ dimensions (and therefore also by BPS instantons in $D=3$).

\subsection{Refined BPS indices}

A natural way to refine generalized DT invariants is to replace the Euler number of the moduli space
$\cM_{\gamma,z^a}$ of semi-stable coherent sheaves with charge vector $\gamma$
by the Poincar\'e polynomial (more precisely, Laurent polynomial)\footnote{Note the following
properties of the Poincar\'e-Laurent polynomial:
$P(\cM,-1)=(-1)^{d_\IC(\cM)} \chi(\cM):=\tilde\chi(\cM)$;
if $\cM$ is compact, then $P(\cM,y)=P(\cM,1/y)$ by Poincar\'e duality. If $\cM$ is compact and \kahler,
then $P(\cM,y)$ is the character for the $SU(2)$-Lefschetz action on $H^*(\cM)$.
When the Dolbeault cohomology of $\cM$ is supported in degree $(p,p)$ only, then
$P(\cM,-y)$ coincides with the $\chi_{y^2}$-genus $\chi_{y^2}(\cM)=\sum_{p,q} (-1)^{p+q-d} y^{2p-d} h_{p,q}(\cM)$.
\label{fooP}}
\be
\label{defOmref}
\Omega(\gamma,z^a,y) =  P(\cM_{\gamma,z^a},-y),
\qquad
P(\cM,y):=\sum_{p=0}^{2d_{\IC}(\cM)} y^{p-d_{\IC}(\cM)}\, b_p(\cM),
\ee
where $d_\IC(\cM)$ is the complex dimension of $\cM$.
As in the unrefined case \eqref{OmtoOmb}, it is advantageous to define
the rational invariant
\be
\label{defbOm}
\bOm(\gamma,z^a,y) = \sum_{m|\gamma} \frac{y-1/y}{m(y^m-1/y^m)}\, \Omega(\gamma/m, z^a,y^m) ,
\ee
where $m$ runs over all positive integers such that $\gamma/m$ is in the charge lattice. If
all the moduli spaces $\cM_{\gamma/m,z^a}$ are compact (or at least compactifiable), then $\bOm(\gamma,z^a,y)$
is a rational function of $y$, invariant under $y\to 1/y$. The advantage of \eqref{defbOm} compared
to \eqref{defOmref} is that the wall-crossing formula for these invariants, and hence the
flow tree formula,
takes a much simpler form \cite{Manschot:2010xp,Alexandrov:2018iao}.

\subsection{Refined modular completions}

In \cite[Section 5.4.2]{Alexandrov:2018lgp}
a natural refinement of the function $\gf_n$, called $\gref_n$, was introduced,
satisfying various consistency conditions.
In particular, using \eqref{soliterg-tr} and \eqref{rel-gE},
it reproduces the refined tree index defined in \cite{Alexandrov:2018iao},
and it reduces  to $\gf_n$ in the limit $y\to 1$.
Explicitly \cite[Eqs.(5.49-51)]{Alexandrov:2018lgp},
\be
\gref_n(\{\gama_i,c_i\},y)
= \frac{(-1)^{\sum_{i<j} \gamma_{ij} }}{(y-y^{-1})^{n-1}}
\,\Sym\Bigl\{
\Fref_n(\{\gama_i,c_i\})\,y^{\sum_{i<j} \gamma_{ij}}\Bigr\},
\label{whgF}
\ee
where $\Fref_n$ is defined by a sum over ordered partitions of $n$,
\be
\Fref_n(\{\gama_i,c_i\})=\frac{(-1)^{n-1}}{2^{n-1}}\sum_{n_1+\cdots +n_m= n\atop n_k\ge1}
\prod_{k=1}^m  b_{n_k}\prod_{k=1}^{m-1}\sgn(\cs_{j_k}).
\label{defFref}
\ee
The variables $\cs_j=\sum_{i=1}^j c_i,$ are linear combinations of stability parameters defined in \eqref{fiparam},
$j_k=n_1+\cdots + n_k$, while the $b_k$'s are  the Taylor coefficients of $\tanh t=\sum_{k\geq 1} b_k \, t^k$.

\medskip

Upon using the representation \eqref{whgF}, the dependence on the refinement parameter $y$ is mainly due to multiplication
by a power of $(y-y^{-1})$ and insertion of the factor $y^{\sum_{i<j} \gamma_{ij}}$ into the sum over D2-brane charges.
The former can be absorbed into a redefinition of the generating functions \eqref{defchimu}, \eqref{defhDT},
\bea
\label{defchimur}
\hrDT_{p,q}(\tau,z^a,w) &=& \sum_{\hat q_0 \leq \hat q_0^{\rm max}}
\frac{\bOm(\gamma,z^a,y)}{y-y^{-1}}\,\expe{-\hat q_0 \tau },
\\
\hr_{p,\mu}(\tau,w) &=& \sum_{\hat q_0 \leq \hat q_0^{\rm max}}
\frac{\bOm_{p,\mu}(\hat q_0,y)}{y-y^{-1}}\,\expe{-\hat q_0 \tau },
\label{defhDTr}
\eea
where we recall that $y=\expe{w}$.
As for  the factor $y^{\sum_{i<j} \gamma_{ij}}$, we show in appendix \ref{ap-theta} that it
can be absorbed into a shift of  the elliptic parameter
of the indefinite theta series implementing the sum over D2-brane charges.
As a result, if one starts from a modular theta series, it continues to transform as a (vector-valued) modular form after the refinement,
provided the argument of its kernel, depending on the  imaginary part of the elliptic parameter,
involves the same shift (see \eqref{thetaref}) and
$w=\alpha-\tau \beta$ transforms as an elliptic variable, $w\to w/(c\tau+d)$.

\medskip

The above observations suggest a simple way to include the refinement while preserving modularity
of all relevant theta series, and in turn, a simple  procedure for constructing a modular
completion of the generating function \eqref{defhDTr} of the {\it refined} MSW invariants:
\begin{enumerate}
\item
Define $\cErf_n$  from $\gref_n$  by a formula analogous to \eqref{rel-gE};
\item
Complete $\cErf_n$ into a smooth solution of Vign\'eras equation \eqref{Vigdif} as explained in appendix \ref{ap-E},
page \pageref{procfun} in the unrefined case and let $\cEr_n$ be that solution evaluated at
the argument $\bfx$ shifted by the refinement (cf. \eqref{argx})
\be
x_i^a=\sqrt{2\tau_2}(\kappa_i^{ab} q_{i,b}+ b^a+\beta\ptt_i^a),
\label{argxref}
\ee
where $b^a=\Re z^a$ is the Kalb-Ramond field and $\ptt$ is introduced in \eqref{vecptt};
\item
Define $\cErp_n$ to be the difference between $\cEr_n$ and $\cErf_n$, as in \eqref{twocEs}.
Importantly, these functions are still exponentially suppressed at large $\tau_2$ provided one keeps fixed $w$ and $\bw$:
since $\beta=\frac{\bw-w}{2\I\tau_2}$, the additional shift appearing in \eqref{argxref} disappears and
$\cEr_n$ reduce to $\cErf_n$, similarly to the unrefined case\footnote{It might seem more natural
to do the shift induced by the refinement also in $\cErf_n$ and $\gref_n$.
However, it would make the limit $y\to 1$ discontinuous and
lead to a $\beta$-dependence in the position of walls of marginal stability in the moduli space.
Furthermore, we will see that the definitions given here allow to get a nice integral equation for refined Darboux coordinates
on the twistor space of $\cM_V$ and agree with known results in VW theory.};
\item
Obtain the modular completion $\whh_{p,\mu}$ by equations analogous to \eqref{exp-whh}, \eqref{solRn}.
\end{enumerate}
Although this prescription is not derived from any known S-duality  constraint on the moduli space
$\cM_V$, it is a mathematically natural generalization of the construction in \cite{Alexandrov:2018lgp}. Moreover, in
\S\ref{sec-potential} we shall see that it ensures that
a certain potential $\cGr$ constructed out of the refined DT invariants transforms with modular
weight $(-\frac12,\frac12)$, as expected from the refined elliptic genus of the superconformal field theory on the wrapped 5-brane,
after factoring out the contribution $\tau_2^{1/2}$ from the translational zero-mode.

\medskip

Following the steps  above, one finds that all contributions corresponding to partitions
in \eqref{defFref} other than the trivial partition $n_k=1$ for all $k$  {remarkably}
cancel after summing over Schr\"oder trees.\footnote{This is similar to the cancellation
of contributions
of marked trees noted in the remark in the end of appendix \ref{ap-E}.
The proof of this cancellation is completely analogous to the proof of proposition 10 in \cite{Alexandrov:2018lgp}
and we omit it here.}
Therefore, we can further simplify the construction and take as a starting point
the function
\be
\gfr_n(\{\gama_i,c_i\},y)=
\frac{(-y)^{\sum_{i<j} \gamma_{ij} }}{2^{n-1}}\prod_{k=1}^{n-1}\sgn(-\cs_{k}).
\label{rel-gF}
\ee
Although the function $(y-y^{-1})^{1-n}\gfr_n$ does not have a smooth $y\to 1$ limit
due to omitted symmetrization and contributions of non-trivial partitions,
one can check that it leads to the same completion as the more complicated version based on
\eqref{defFref}. Starting from \eqref{rel-gF},
using \eqref{rel-gE} and promoting products of sign functions to smooth solutions of Vign\'eras equation \eqref{Vigdif},
we arrive at the definitions
\bea
\cErf_n(\{\gama_i\},y)&=&
\frac{(-y)^{\sum_{i<j} \gamma_{ij} }}{2^{n-1}}\prod_{k=1}^{n-1}\sgn(\Gamma_{k}),
\label{rel-gEf}
\\
\cEr_n(\{\gama_i\},y)&=& \frac{(-y)^{\sum_{i<j} \gamma_{ij} }}{2^{n-1}}\Phi^E_{n-1}(\{ \bfv_{\ell}\};\bfx),
\label{Erefsim}
\eea
where (the vectors $\bfv_{ij}$ are defined explicitly in \eqref{defvij})
\be
\Gamma_{k}=\sum_{i=1}^k\sum_{j=k+1}^n \gamma_{ij},
\qquad
\bfv_k=\sum_{i=1}^k\sum_{j=k+1}^n\bfv_{ij},
\label{defbfvk}
\ee
while $\Phi^E_{n-1}$ are (boosted) generalized error functions described in appendix \ref{ap-generr}
whose argument $\bfx$ has components \eqref{argxref}.

In terms of these functions, we claim that
the modular completion of the refined generating function \eqref{defhDTr}
 is given by
\be
\whhr_{p,\mu}(\tau,w)= \sum_{n=1}^\infty
\sum_{\sum_{i=1}^n \gama_i=\gama}
\Rr_n(\{\gama_i\},\tau_2,y)
\, e^{\pi\I \tau Q_n(\{\gama_i\})}
\prod_{i=1}^n \hr_{p_i,\mu_i}(\tau,w),
\label{exp-whhr}
\ee
where $\Rr_n$ is defined by a formula similar to \eqref{solRn},
\be
\Rr_n(\{\gama_i\},\tau_2,y)= \Sym\left\{\sum_{T\in\IT_n^{\rm S}}(-1)^{n_T-1} \cErp_{v_0}\prod_{v\in V_T\setminus{\{v_0\}}}\cErf_{v}\right\}.
\label{solRnr}
\ee
More specifically, we propose that the completion defined in \eqref{exp-whhr} transforms as a vector-valued Jacobi modular form
of weight $(-\frac12 b_2,0)$ and suitable index $m(p)$, namely
\bea
\label{STref}
\whhr_{p,\mu}\left(-\frac{1}{\tau},\frac{w}{\tau}\right)
&=&-\I\, \frac{(-\I\tau)^{-\frac{b_2}{2}}}{\sqrt{|\Lambda^*/\Lambda|}}\,
\expe{-\tfrac{1}{4}\, p^3 -\tfrac18\, c_{2,a}p^a + m(p)\, \frac{w^2}{\tau}  }
\sum_{\nu \in \Lambda^*/\Lambda}
\expe{-\mu \cdot \nu}\whhr_{p,\nu}(\tau,w),
\nn\\
\whhr_{p,\mu}(\tau+1,w)&=&\expe{ \tfrac{1}{24}\, c_{2,a} p^a +
\tfrac{1}{2}(\mu+\tfrac12 p)^2} \, \whhr_{p,\mu}(\tau,w),
\\
\whhr_{p,\mu}(\tau,w+k\tau+\ell)&=&
\expe{- m(p)\, \left( k^2\tau + 2 k w}\right)\whhr_{p,\mu}(\tau,w).
\nn
\eea
Note that these transformations imply \eqref{ST} provided
\be
\label{hrlim}
\whh_{p,\mu}(\tau) = \lim_{y\to 1} (y-1/y) \,
\whhr_{p,\mu}(\tau,w),
\ee
consistently with the relation between the holomorphic generating functions \eqref{defhDT} and \eqref{defhDTr}.
In \S\ref{sec-potential} below, we shall give support to this proposal, and find strong indications that the index $m(p)$ should be equal
to $-\frac16 p^3$ up to a linear term in $p^a$ which we do not know yet how to fix in the compact
case. For non-compact CY threefolds, we shall show that the modular completion \eqref{exp-whhr}
reproduces known results for VW invariants on $\IP^2$, closely related to Donaldson-Thomas invariants on local $\IP^2$, and make
a specific proposal for the index in \eqref{mp2} below.

\subsubsection*{A subtle point: the case of vanishing Dirac products}

For particular sets of charges $\gama_i$, the Dirac products $\Gamma_k$ \eqref{defbfvk} may vanish
so that, to find the modular completion $\whhr_{p,\mu}$, one needs to evaluate the function \eqref{rel-gEf} on its loci of discontinuity.
This problem does not arise in the unrefined case where the sign functions always come together with the factors of $\gamma_{ij}$
in front (see e.g. \eqref{defDf-gen})  ensuring continuity (but not smoothness) on this locus.
Naively, one could apply the standard prescription $\sign(0)=0$ as in \eqref{defsign}, leading to $\cErf_n(\{\gama_i\},y)=0$
for any set $\{\gama_i\}$ for which at least one $\Gamma_k$ vanishes.
However,  the functions $\Rr_n$ so defined would not have\footnote{This problem can be traced back to
the sign identities like
$$
(\sgn(x_1)+\sgn(x_2))\,\sgn(x_1+x_2)=1+\sgn(x_1)\,\sgn(x_2),
$$
on which the construction of completion in \cite{Alexandrov:2018lgp} heavily relies.
Whereas this identity holds if either $x_1$ or $x_2$ vanishes upon using $\sign(0)=0$,
it is violated when two of them vanish simultaneously.}
 a zero of order $n-1$ at $y=1$,
which is necessary to cancel the higher order poles in \eqref{exp-whhr} and produce a simple pole
consistent with \eqref{hrlim}.
Thus the above naive extension of the definition of $\cErf_n$ to the discontinuity loci is incorrect and must be  modified.

To solve this problem, we observe that the naive prescription for $\cErf_n$
spoils the exponential fall-off of $\cErp_n$ at large $\tau_2$
because the generalized error functions with even rank do not vanish at $\bfx=0$.
This suggests a simple remedy: {\it define}
\be
\cErf_n(\{\gama_i\},y):=\under{\lim}{\tau_2\to\infty}\cEr_n(\{\gama_i\},y),
\label{redef-cErf}
\ee
such that the exponential suppression is automatic.
For configurations where all $\Gamma_k\ne 0$, this definition is automatically consistent with \eqref{rel-gEf}.
The issue arises only when some of them vanish.
In particular, applying \eqref{redef-cErf} to the configurations where all charges $\gama_i$ are equal
(which contribute when $(p,\mu)$ is not primitive) so that all $\Gamma_k=0$,
one finds that the product of $m$ sign functions with vanishing arguments
must be replaced by
\be
e_{m}=\Phi^E_{m}(\{ \bfv_{\ell}\};0).
\label{en-PhiEn}
\ee
The function on the right-hand side coincides with the generalized error function
$E_{m}(\cM^{(0)};\vu)$ evaluated at $\vu=0$ and $\cM^{(0)}$ the matrix of parameters computed in \eqref{matM0}, with $n=m+1$.
For $m$ odd, it vanishes because the generalized error function is odd in $\vu$,
but for $m$ even it gives a non-vanishing constant $e_{m}$.

Unfortunately,  it appears to be non-trivial to compute $e_m$ directly using \eqref{en-PhiEn} for generic $m$.
To determine these coefficients, we resort to another consistency condition --- namely,
by requiring smoothness of the completion in the limit $y\to 1$.
To this end, note that \eqref{hrlim} relies
on the regularity of the function \eqref{whgF} at $y=1$, which in turn is ensured by the identity \cite{Alexandrov:2018lgp}
\be
\Sym\Fref_n(\{c_i\})=0
\label{eqB}
\ee
satisfied by the function \eqref{defFref}. The latter is also constructed from products of  sign functions and
exhibits the same ambiguity as $\cErf_n$.
Therefore, a natural guess is that the correct value $e_m$ must be such
that the identity \eqref{eqB} continues to hold even at these loci.
In particular, choosing the locus where all arguments vanish, we see that
the coefficients $e_m$ should satisfy
\be
f_n:= \sum_{n_1+\cdots +n_m= n\atop n_k\ge1}
e_{m-1}\prod_{k=1}^m  b_{n_k}=0,
\qquad n>1.
\ee
To resolve this constraint, we set $f_1=e_0=1$ and construct the generating function
\be
\sum_{n=1}^\infty f_n \, t^n = \sum_{m=1}^{\infty} e_{m-1} \left( \sum_{i=1}^{\infty} b_i t^i \right)^m =
\sum_{m=1}^\infty  e_{m-1} \, (\tanh t )^m
\ee
Clearly, all $f_n$ with $n>1$ vanish and this generating function reduces to $t$,
if  $e_{m-1}$ is the $m$-th Taylor coefficient of ${\rm arctanh}$, namely
\be
e_m=\left\{ \begin{array}{cc}
0 & \mbox{ if }m \mbox{ is odd},
\\
\frac{1}{m+1}& \mbox{ if }m \mbox{ is even}.
\end{array}\right.
\label{valek}
\ee
Comparing with \eqref{en-PhiEn}, this leads us to conjecture  that
the following non-trivial identity must hold,
\be
\label{EnM0}
E_{m}(\cM^{(0)};0)=\frac{1}{m+1},
\ee
where $\cM^{(0)}$ is given in \eqref{matM0}. For $m=2$, this equation holds true thanks to the identity \eqref{E2zero}
with $\alpha=1/\sqrt{3}$, and we have checked numerically that it holds also for $m=4,6$, giving strong
evidence for this conjecture.

Applying the definition \eqref{redef-cErf}, we conclude that  \eqref{rel-gEf}  must be modified as follows,
\be
\cErf_n(\{\gama_i\},y)=
\frac{e_{|\cI|}}{2^{n-1}}\, (-y)^{\sum_{i<j} \gamma_{ij} }\prod_{k\in \Zv_{n-1}\setminus \cI}\sgn(\Gamma_{k}),
\quad \mbox{where }\cI\subseteq \Zv_{n-1}:\
\left\{ \begin{array}{ll}
\Gamma_k=0 & \mbox{ for } k\in \cI,
\\
\Gamma_k\ne 0 & \mbox{ for } k\notin \cI,
\end{array}\right.
\label{rel-gEf-zero}
\ee
where $\Zv_{n}=\{1,\dots,n\}$ and $|\cI|$ is the cardinality of the set.
With this prescription,  the coefficients $(y-y^{-1})^{1-n}\Rr_n$ now
reduce to $R_n$ in the limit $y\to 1$, while $\hr_{p,\mu}$ and
$\whhr_{p,\mu}$ multiplied by $y-y^{-1}$ reduce to $h_{p,\mu}$ and $\whh_{p,\mu}$, respectively.
In particular, whereas each term on the {r.h.s. of \eqref{exp-whhr}} has a pole of order
$n$ as $y\to 1$, the l.h.s. has only a simple pole, and all higher order poles cancel after summing over Schr\"oder trees.
Furthermore, the exponential fall-off of $\cErp_n$ at large $\tau_2$ ensures the same property for
$\Rr_n$ and hence in this limit $\whhr_{p,\mu}$ reduces to $\hr_{p,\mu}$, as appropriate for the modular completion of a mock modular form.

\subsection{Refined instanton generating potential}
\label{sec-potential}

The starting point of the derivation of the modular completion \eqref{exp-whh} in \cite{Alexandrov:2018lgp} was the existence
of the instanton generating potential $\cG$ transforming as a modular form of weight $(-\frac32,\frac12)$
and the knowledge of its {\it non-perturbative} definition
\be
\cG= \sum_{\gamma\in\Gamma_+}\bOm(\gamma)\int_{\ell_{\gammap}} \de z\, H_{\gamma}(z)
 -\hf\,\sum_{\gamma_1,\gamma_2\in\Gamma_+}  \bOm(\gamma_1)\,  \bOm(\gamma_2)\,
\int_{\ell_{\gamma_1}}\de z_1\int_{\ell_{\gamma_2}} \de z_2
K_{\gamma_1\gamma_2}(z_1,z_2)\,H_{\gammap_1}(z_1)H_{\gammap_2}(z_2)
\label{defcF2}
\ee
in terms of solutions of a system of TBA-like equations
\be
H_\gamma(z)=\Hcl_\gamma(z) \, \exp\[\sum_{\gamma'\in\Gamma_+}
 \bOm(\gamma) \int_{\ell_{\gamma'}}\de z'\, K_{\gamma\gamma'}(z,z')\,H_{\gamma'}(z')\].
\label{expcX}
\ee
Here\footnote{We deviate from the notation in \cite{Alexandrov:2018lgp},
where $\bOm(\gamma)$
was included in the definition of $H_\gamma(z)$, which led to shorter formulae at the cost of breaking
the multiplicativity property $H_{\gamma_1} H_{\gamma_2} =
\frac{(-1)^{\langle \gamma_1,\gamma_2\rangle}}{2\pi^2}\,  H_{\gamma_1+\gamma_2}$.}
 $H_\gamma(z)= \frac{\sigma_\gamma}{(2\pi)^2}\cX_\gamma(z)$
is proportional to the Fourier mode $\cX_\gamma(z)$ playing the role of the holomorphic Darboux coordinate on the twistor space over
the \qk moduli space $\cM_V$, whereas $\Hcl_\gamma(z)$ is the same function
with $\cX_\gamma$ replaced by its classical limit
\be
\cXcl_\gamma(z) =e^{-S^{\rm cl}_p}\,
\expe{- \hat q_0\tau- \frac{\tau}{2}\,(q+b)^2+c^a (q_a +\haf (pb)_a)+\I \tau_2 \((pt^2)z^2+2\I z t^a (q_a+(pb)_a)\)}.
\label{Xtheta}
\ee
Here $S^{\rm cl}_p=\pi\tau_2(pt^2) - 2\pi \I  p^a \tc_a$
is the modular invariant leading part of the Euclidean D3-brane action in the large volume limit $t^a\to\infty$,
and $b^a$, $c^a$, $\tc_a$ are periods of the Neveu-Schwarz $B$-field
and Ramond-Ramond fields along bases of 2 and 4-cycles.\footnote{In our conventions,
$b^a$ is also the real part of the complexified K\"ahler moduli, $z^a=b^a+\I t^a$.}
Besides, in the above equations, $\sigma_\gamma$ is a quadratic refinement of the DSZ pairing,
$K_{\gamma\gamma'}(z,z')=2\pi\((tp_1p_2)+\frac{\I\langle\gamma_1,\gamma_2\rangle}{z-z'}\)$ is a meromorphic function
with a simple pole at $z=z'$, and $\ell_\gamma$ is a suitable contour known as BPS ray \cite[\S3.2]{Alexandrov:2018lgp}.
Upon solving \eqref{expcX} iteratively, plugging into \eqref{defcF2}, expanding in powers of DT invariants
and performing a theta series decomposition expressing them in terms of MSW invariants,
it was shown that $\cG$ takes the following form
\be
\cG=\sum_{n=1}^\infty\frac{2^{-\frac{n}{2}}}{\pi\sqrt{2\tau_2}}\[\prod_{i=1}^{n}
\sum_{p_i,\mu_i}\sigma_{p_i}\whh_{p_i,\mu_i}\]
e^{-S^{\rm cl}_p}V_\bfp\,\vartheta_{\bfp,\bfmu}\bigl(\whPhi^{{\rm tot}}_{n},n-2\bigr),
\label{treeFh-flh}
\ee
where
$V_\bfp$ is the factor \eqref{Jacfac} cancelling the  non-vanishing index of the theta series,
$\sigma_p$ is a phase factor related to the quadratic refinement through
$\sigma_\gamma =\expe{\hf\, p^a q_a}\sigma_p$,
and $\vartheta_{\bfp,\bfmu}\bigl(\whPhi^{{\rm tot}}_{n},n-2\bigr)$ is the theta series
\eqref{Vignerasth} with the kernel $\whPhi^{{\rm tot}}_{n}$  \cite[Eq.(5.39)]{Alexandrov:2018lgp} solving Vign\'eras equation
\eqref{Vigdif} with parameter $\lambda=n-2$.
Since such theta series is a vector valued modular form of weight $(n(1+\tfrac12 b_2(\CY))-2,0)$,
the modularity of $\whh_{p,\mu}$ follows from the modularity of $\cG$. In fact,
each term in the sum over $n$ and $p_i$ in \eqref{treeFh-flh} transforms separately as
a modular form, starting with the first term
\be
\label{Gexp}
\cG = \frac{1}{4\pi^2 \sqrt{2\tau_2 (p t^2)}} \sum_p \whcZ_p + \cdots\, ,
\ee
where
\be
\label{defZp}
\whcZ_{p} = \sigma_p\, e^{-S^{\rm cl}_p} \,V_p
\sum_{\mu\in\Lambda^*/\Lambda}
\whh_{p,\mu}\, \thetasn_{p,\mu}
\ee
and $ \thetasn_{p,\mu}$ is the Siegel-Narain theta series
\be
\label{defthSN}
\thetasn_{p,\mu}(\tau,v)  =    \sum_{q\in\Lambda+ \mu+\hf p^2}
(-1)^{p^aq_a} \, \expe{ -\frac{\tau}{2} \,q^2 +\I \tau_2 (q+b)_+^2 + q_a v^a},
\ee
where $v^a=c^a-\tau b^a$ and the subscript $+$ denotes the projection
$q_+=q_a t^a/{\sqrt{(pt^2)}}$ along the K\"ahler form.
The latter is known to transform as a vector-valued Jacobi form of weight
$(\frac{b_2(\CY)-1}{2},\frac12)$,
so that \eqref{defZp} transforms as a multivariate Jacobi form of weight $(-\frac32,\frac12)$,
and may be identified with the modified elliptic genus of the superconformal field theory obtained by
wrapping the M5-brane along the divisor $\cD$ \cite{Maldacena:1997de,Alexandrov:2016tnf}.
In this context, the non-holomorphic terms in $\whh_{p,\mu}$ are expected to come
from a spectral asymmetry in the continuous spectrum of the conformal field theory, similarly to
the analysis in \cite{Troost:2010ud}.
The explicit form of the $n=2$ contribution can be found in \cite{Alexandrov:2016tnf}.

\medskip

In this subsection we generalize this construction to the refined case, reversing the logic of this derivation.
Namely, we start with \eqref{treeFh-flh} and replace $\whh_{p_i,\mu_i}$ and
$V_\bfp\,\vartheta_{\bfp,\bfmu}\bigl(\whPhi^{{\rm tot}}_{n},n-2\bigr)$ by their
natural refinements, such that the resulting function $\cGr(w)$ remains modular. In the next subsection,
we shall elucidate its non-perturbative origin in terms of a suitable system of TBA-like equations,
which may encode a refinement of the vector multiplet moduli space $\cMr$.

\medskip

The kernel $\whPhi^{{\rm tot}}_{n}$ appearing in \eqref{treeFh-flh} is constructed from the functions $\tcEPhi_n$ and $\cEPhi_n$,
which are introduced in appendix \ref{ap-E} and uniquely determined by $\gf_n$. Therefore,
it is natural to expect that its refined counterpart $\whPhi^{{\rm ref}}_n$ {should be}
given by the same formula, but with $\tcEPhi_n$ and $\cEPhi_n$
replaced by the functions determined by $\gfr_n$ along the same procedure detailed on page \pageref{procfun}.
Thus, one obtains
\be
\whPhi^{{\rm ref}}_n= \intPhi_1\sum_{T\in\IT_n^{\rm S}}(-1)^{n_T-1}
\(\tPhi_{v_0}-\Phi_{v_0}\)\prod_{v\in V_T\setminus{\{v_0\}}}\Phi_{v},
\label{kertotsolr}
\ee
where, upon denoting by $x^a=\kappa^{ab} \sum_i \kappa_{bcd} \, p_i^c x^d_i$ the sum of the entries in
$\bfx$,
\be
\intPhi_1(x)=\frac{e^{-\frac{\pi(pxt)^2}{(pt^2)}}}{2\pi\sqrt{2\tau_2(pt^2)}}, \quad
\label{defPhi1-main}
\ee
whereas $\Phi_{v}$ and $\tPhi_{v}$ are defined in the usual way from
\be
\Phi_n(\bfx)=\frac{1}{2^{n-1}}\Phi^E_{n-1}(\{ \bfv_{\ell}\};\bfx),
\qquad
\tPhi_n(\bfx)=\frac{1}{2^{n-1}}\Phi^E_{n-1}(\{ \bfu_{\ell}\};\bfx).
\label{defPhiref}
\ee
Here $\bfv_{\ell}$ are the vectors introduced in \eqref{defbfvk} and $\bfu_{\ell}$ are defined similarly
from the moduli-dependent vectors $\bfu_{ij}$ \eqref{defvij},
\be
\bfu_k=\sum_{i=1}^k\sum_{j=k+1}^n\bfu_{ij}.
\label{shiftS}
\ee
The kernel $\whPhi^{{\rm ref}}_n$ {satisfies} Vign\'eras equation \eqref{Vigdif} with parameter $\lambda=-1$ and
therefore {leads to a modular theta series}
of weight $\tfrac12 n b_2(\CY)-1$.
Thus, we arrive at the following refined version of the instanton generating potential\footnote{Comparing to \eqref{treeFh-flh},
we multiplied each term by $2^{\frac{n-1}{2}}$. This factor takes into account the change in the $\lambda$-parameter
of the theta series and the mismatch between the prefactors in \eqref{Vignerasth} and \eqref{rescEnPhi}. }
\be
\cGr(w)=\frac{1}{2\pi\sqrt{\tau_2}}\sum_{n=1}^\infty\[\prod_{i=1}^{n}
\sum_{p_i,\mu_i}\sigma_{p_i}\, y^{-\beta m(p_i)}\,\whhr_{p_i,\mu_i}\]
e^{-S^{\rm cl}_p}\hV_\bfp\,\vthref_{\bfp,\bfmu}\bigl(\whPhi^{\rm ref}_{n},-1\bigr),
\label{treeFh-flhr}
\ee
where $\vthref_{\bfp,\bfmu}$ is the theta series \eqref{thetaref},
\be
\hV_\bfp=\expe{\hf\, \hbfv\cdot\hbfb}=\expe{\haf\,\beta w\,\ptt^2} V_\bfp\, ,
\label{hJacfac}
\ee
and $m(p)$ is a function of $p^a$ which will be fixed momentarily.
Assuming that the non-holomorphic completion $\whhr_{p,\mu}$ is a vector-valued
Jacobi form of weight $- \frac12 b_2(\CY)$ and index $m(p)$, the function
\eqref{treeFh-flhr} will then be a Jacobi form of weight $(-\hf,\hf)$ and index 0,
whose residue at $y=1$ will coincide with the unrefined instanton generating potential $\cG$.
Indeed, the factor $y^{-\beta m(p_i)}$  cancels the phase factor $\expe{ m(p_i)\, \frac{c\, w^2}{c\tau+d}}$ appearing
in the modular transformation of  $\whhr_{p_i,\mu_i}$, while  the factor  $\hV_\bfp$ cancels
a similar factor $\expe{  -\hf\, \frac{c\, \hbfv^2}{c\tau+d}}$  appearing in the transformation of the theta series.

Evaluating the norm of the vector $\ptt$ \eqref{vecptt}
\be
\ptt^2=\sum_{i=1}^{n-1}\sum_{j=1}^{i}\sum_{k=1}^{i+1}(p_jp_{i+1}p_k),
\label{normp}
\ee
we see that all $y$-dependent factors in \eqref{treeFh-flhr} can be naturally combined into
a single factor $y^{-\beta m(p)}$, which depends only on the total charge $p^a=\sum_i p_i^a$,
provided $m(p)$ is chosen as
\be
\label{indexconjDT}
m(p)=-\frac16 \,p^3+\rho_a p^a,
\ee
where $\rho_a$ is an arbitrary constant vector.\footnote{A natural candidate for $\rho_a$ is a multiple of the second
Chern class $c_{2,a}$, but in the context of local surfaces we shall find an additional contribution in \eqref{mp2} which is not of this form.}
Furthermore, for this choice of $m(p)$ (and only then) it is possible,
starting from the theta series decomposition \eqref{treeFh-flhr},
to perform the manipulations done in \cite{Alexandrov:2018lgp} in reverse.
Some steps of this procedure are reported in appendix \ref{ap-G}.
As a result, one arrives at the following expansion in terms of iterated integrals
\be
\cGr=
\frac{1}{2\pi\I}
\sum_{n=1}^\infty
\[\prod_{i=1}^{n} \sum_{\gamma_i\in \Gamma_+}\frac{\I\sigma_{\gamma_i}}{2\pi}\,
\frac{\bOm(\gamma_i,y)\, y^{-\beta m(p_i)}}{y-y^{-1}}
\int_{\ell_{\gamma_i}}\de z_i\, \cXcl_{\gamma_i}(z_i) \]
\frac{y^{\frac{\beta}{2}\,\ptt^2+\sum_{i<j} \gamma_{ij}} (y\by)^{-\I\sum_{i<j} (p_ip_jt)(z_i-z_j)}}{\prod_{i=1}^{n-1}(z_i-z_{i+1})},
\label{treeFh}
\ee
where $\cXcl_{\gamma}$ is the same classical Fourier mode as in \eqref{Xtheta}.

\subsection{Star product and TBA-like equations}
\label{subsec-TBA}

{It is natural to expect that a {\it non-perturbative} representation
similar to \eqref{defcF2}--\eqref{expcX}
should exist for $\cGr$ as well.}
Technically we need  to {devise} an integral equation whose {iterative}
 solution generates the expansion \eqref{treeFh}.
The main obstacle on this way are the $y$-dependent factors in the numerator.
On the other hand, typically switching on a $y$-dependent deformation corresponds to making some of the structures non-commutative.
For instance, in \cite{Gaiotto:2010be,Cecotti:2010fi,Cecotti:2014wea} the Darboux coordinates $\cX_\gamma$ become
non-commutative operators, whereas in \cite{Ito:2011ea,Hayashi:2019rpw} a non-commutative star product arises on the moduli space.

Inspired by these constructions, we now suggest a simple way to generate the expansion \eqref{treeFh} which in a sense combines
the above mentioned non-commutative structures.
Let us define the following {\it modular invariant} star product
\be
f \star g =  f \exp\[ \frac{1}{2\pi\I}\( \overleftarrow{\fp}_{\!\!a}\overrightarrow{\p}_{\!\tc_a}-
\overleftarrow{\p}_{\!\tc_a} \overrightarrow{\fp}_{\!\! a}\) \] g,
\label{starproduct-alt}
\ee
where
\be
\fp_a=\alpha\p_{c^a}+\beta\p_{b^a}=w\p_{v^a}+\bw\p_{\bv^a}
\ee
with $v^a$ as in \eqref{defthSN}. The modular invariance follows from the fact that $w$ and $v^a$ transform as
elliptic variables, whereas $\p_{\tc_a}$ is modular invariant.
The crucial observation is that the classical {Fourier modes} \eqref{Xtheta} satisfy
\be
\cXcl_{\gamma_1}(z_1)\star\cXcl_{\gamma_2}(z_2)=
y^{\gamma_{12}+\frac{\beta}{2}\, (p_1p_2 (p_1+p_2))}
(y\by)^{-\I(p_1p_2t)(z_1-z_2)}\cXcl_{\gamma_1}(z_1)\cXcl_{\gamma_2}(z_2).
\label{starXX}
\ee
One immediately recognizes the same $y$-dependent factors which appear in the last ratio in \eqref{treeFh} for the case $n=2$.
Similarly, taking into account \eqref{normp},
it is straightforward to check that the star product of $n$ classical Darboux coordinates
reproduces the $y$-dependent factors in this ratio for generic $n$.

Next, we introduce {\it refined} Fourier modes $\cXr_\gamma$ which are defined by an integral equation involving
the above introduced star product,
\be
\Hr_{\gamma}(z)=\Hclr_{\gamma}(z)\star\[1+2\pi\I\sum_{\gamma'} \bOm(\gamma,y)\,
\int_{\ell_{\gamma'}}\frac{\de z'}{z-z'}\Hr_{\gamma'}(z')\],
\label{inteqH-star}
\ee
where
\be
\Hr_\gamma(z)= \frac{1}{(2\pi)^2}\,\frac{\sigma_\gamma y^{-\beta m(p)}}{y-y^{-1}}\, \cXr_\gamma(z)
\label{prepHnewr}
\ee
and $\Hclr_{\gamma}$ is defined as in \eqref{prepHnewr} with the refined Fourier mode replaced
by its classical counterpart $\cXcl_\gamma$.
Finally, it is easy to see that the expansion \eqref{treeFh} is generated by solving \eqref{inteqH-star} {iteratively}
and substituting this solution into the following simple non-perturbative definition of
the refined instanton generating potential,
\be
\cGr=\sum_{\gamma\in \Gamma_+}  \bOm(\gamma,y) \int_{\ell_{\gamma}}\de z\,\Hr_{\gamma}(z).
\label{nonpert-cGr}
\ee
Amazingly, this non-perturbative definition appears even simpler than in the unrefined case where \eqref{defcF2}
involves a  combination of a single and double integrals along BPS rays.

Note that the integral equation \eqref{inteqH-star} is quite different from the equations
describing the refined case which can be found in the literature \cite{Gaiotto:2010be,Cecotti:2010fi,Cecotti:2014wea}.
The main difference is that usually the integral appears exponentiated as in \eqref{expcX}, whereas here it enters linearly.
However, the price to pay for this simplification, and the reason for the above discrepancy, is that the refined Fourier modes
$\cXr_\gamma$ do not have a smooth limit $y\to 1$.
In particular, it is a non-trivial problem to extract from them the unrefined modes $\cX_\gamma$.
Also their relation to the quantum Darboux coordinates appearing in \cite{Gaiotto:2010be,Cecotti:2010fi,Cecotti:2014wea}
is not clear and will be investigated elsewhere.
Nevertheless, the above construction strongly suggests that the refinement
effectively quantizes the moduli space $\cM_V$ along with its twistor space, while preserving the isometric action of S-duality.
It would be interesting to understand better the physical significance
of this quantization and of its  invariance under S-duality.

\section{Local Calabi-Yau and Vafa-Witten invariants
\label{sec_localCY}}

Although the construction of the previous section is very natural mathematically and passes several consistency checks
including the existence of a refined instanton generating potential, its significance is shadowed by the fact that
for a compact Calabi-Yau threefold, there appears to be no natural,
deformation invariant notion of refined  BPS indices.
In such situation the refinement appears just as a device for simplifying the construction
of the modular completion of generating functions of `ordinary' BPS indices.
In contrast, for toric (hence non-compact) Calabi-Yau threefolds,
BPS indices can be refined in the presence of an $SU(2)$ R-symmetry.
When $\CY$ is the total space ${\rm Tot}(K_S)$ of the canonical bundle over a projective surface $S$,
the BPS indices of $N$ D-branes supported on the divisor $[S]$ are expected
to be equal to the Vafa-Witten invariants of $S$ for gauge group $U(N)$
\cite{Minahan:1998vr,Alim:2010cf,Gholampour:2013jfa,gholampour2017localized},
both at the unrefined and refined levels.\footnote{If
  $h^{1,0}(S)=h^{2,0}(S)=0$, the VW invariants agree with local DT
  invariants, while refined VW invariants agree with the K-theoretic DT
  invariants defined using a $\mathbb{C}^\times$ action \cite{thomas2018equivariant}. However if $h^{2,0}>0$, the
  relation between definitions for refined DT invariants and refined
  VW invariants remains unclear. In fact, the
  numerical DT invariants vanish for $h^{2,0}>0$, while this is not
  the case for VW invariants. We thank Richard Thomas for correspondence on this issue.}
Indeed, the latter determine the partition function of (topologically twisted) $\cN=4$
super-Yang-Mills (SYM) theory, which describes the world-volume dynamics of
$N$ D4-branes wrapped on $S$.
When $b_2^+(S)=1$, these invariants depend on a choice of polarization of $S$
(except when $b_2(S)=1$), reflecting the moduli dependence
of the BPS indices. S-duality of $\cN=4$ SYM implies that  the generating functions
of VW invariants should be  modular, though only after including suitable non-holomorphic contributions
from reducible connections  \cite{Vafa:1994tf}.
In this section, we shall propose a general prescription for determining
the non-holomorphic completion of the generating function of {\it refined} VW invariants
for any rank $N$, by taking the decompactification limit of the completion for the compact case.
For $N=2,3$, we shall see that this prescription precisely reproduces earlier results for $\IP^2$ from \cite{Vafa:1994tf,Manschot:2017xcr},
giving strong evidence that it may be valid for any $N$ and for a large class of surfaces with $b_2^+(S)=1$.

\subsection{Vafa-Witten invariants}

Recall that Vafa-Witten theory arises as one of three possible topological twists of $\cN=4$ super-Yang-Mills theory,
which is available on an arbitrary smooth compact Riemannian 4-manifold $S$ \cite{Vafa:1994tf}.
Motivated by applications to local Calabi-Yau geometries,
we shall assume that $S$ is a connected
almost Fano  surface with $b_1(S)=0$ and $b_2^+(S)=1$; since $b_2^+(S)=1+2 h_{2,0}(S)$
for any complex surface, this implies that $h_{2,0}(S)=0$.
We further assume that the gauge group is $U(N)$, and that $S$ is
equipped with a polarization $J\in H^2(S,\IR)$.
In this case, the path integral localizes on solutions of hermitian Yang-Mills equations  for the
field strength $F$ \cite{Vafa:1994tf,Dijkgraaf:1997ce}.
Solutions are classified by $c_1(F)=-\mu \in \Lambda_S:= H^2(S,\IZ)$ and
$\int_S c_2(F)= n\in \IZ$, and span a moduli space $\cM_{N,\mu,n,J}$ of expected complex dimension
\be
\label{dimM}
d_\IC (\cM_{N,\mu,n,J}) = 2N^2\Delta(F) - (N^2-1) \chi(\cO_S),
\qquad
\Delta(F):= \frac{1}{N} \left( n - \frac{N-1}{2N} \mu^2 \right),
\ee
isomorphic to the moduli space of Gieseker\footnote{Gieseker stability is a finer notion than
slope-stability; the latter depends only on the rank $N$ and first Chern class $\mu$, whereas
the former also depend on $n$.}  semi-stable torsion-free coherent
sheaves with respect to the polarization $J$. Here, $\Delta(F)$ is known as the Bogomolov discriminant,
$\mu^2:=\int_S \mu^2$ is the intersection pairing on $\Lambda_S$, and
$\chi(\cO_S)=1-h_{0,1}(S)+h_{0,2}(S)$ is the holomorphic Euler characteristic, equal to 1 for
the surfaces under consideration.

The moduli space $\cM_{N,\mu,n,J}$ is invariant upon tensoring $F$ with a line bundle $\cL$,
under which  the first Chern class shifts as $\mu\to \mu-N c_1(\cL)$ while  $\Delta(F)$ and $N$ stay invariant.
The parameter $\mu$ (known as the 't Hooft flux) can therefore be restricted to
$\Lambda_S/ N \Lambda_S$, which is isomorphic to $\IZ_N^{b_2(S)}$ since the lattice $\Lambda_S$ is unimodular.
As a result, the partition function of twisted $\cN=4$ Yang-Mills theory
has a theta series decomposition
\be
\label{defZVW}
Z_{N,J}^{\rm VW}(\tau,v) = \sum_{\mu\in \Lambda_S/ N \Lambda_S}
\hVW_{N,\mu,J}(\tau)\, \vartheta^{\rm VW}_{N,\mu,J}(\tau,v),
\ee
where\footnote{The exponential prefactor is necessary to match our conventions for
theta series in Appendix \ref{ap-thetamod}. Without this prefactor, the theta series would
possess both holomorphic and anti-holomorphic index.
When $b_2(S)=1$, the series \eqref{defthetaVW} becomes antiholomorphic (up to the
same exponential prefactor), and $Z_{N,J}^{\rm VW}$ is the complex conjugate of
a  skew-holomorphic Jacobi form \cite{Cheng:2017dlj}.}
\be
\label{defthetaVW}
\vartheta^{\rm VW}_{N,\mu,J}(\tau,v) = e^{-2\pi \frac{(\Im v_+)^2}{\tau_2}}\!\!\!\!\!\!
\sum_{k\in H^2(S,N\IZ) + \mu + \frac{N}{2} K_S}\!\!
(-1)^{K_S\cdot k}\,
\expe{-\frac{\tau\, k_-^2}{2N} -  \frac{\bar\tau\, k_+^2}{2N}  +
v\cdot k_- + \bar v\cdot k_+}
\ee
with $k_+ = \frac{(k\cdot J) J}{J\cdot J}$ and $k_-=k-k_+$. Furthermore, $K_S=-c_1(S)$ is the canonical class of $S$
and $v$ is a chemical potential conjugate to (minus) the first Chern class. Ignoring contributions from
boundaries of moduli space, the functions $\hVW_{N,\mu,J}(\tau)$ are holomorphic generating functions of invariants of
$\cM_{N,\mu,n,J}$ for fixed rank $N$ and  first Chern class $\mu$,
\be
\label{defhVW}
\hVW_{N,\mu,J}(\tau) = \sum_{n\geq 0} c_{N,\mu,n,J} \,
\q^{n- \tfrac{N-1}{2N}\mu^2 - \tfrac{N \chi(S)}{24} },
\ee
where $\q=\expe{\tau}$ and $\tau=\frac{\theta}{2\pi} + \frac{4\pi\I}{g^2_{\rm YM}}$ is the complexified gauge coupling.
When $(N,\mu,n)$ are coprime, the coefficients are equal to the
Euler number $\chi(\cM_{N,\mu,n,J})$.
When $(N,\mu,n)$ is not
primitive, the moduli space $\cM_{N,\mu,n,J}$ is singular, and it is expected that the
coefficients  $c_{N,\mu,n,J}$ (known as rational VW  invariants)
are given by \cite[(6.1)]{Manschot:2017xcr}
\be
c_{\gamma,J} =
\sum_{m|\gamma} (-1)^{\dim_\IC(\cM_{\gamma/m,J})-\dim_\IC(\cM_{\gamma,J})}\frac{\chi(\cM_{\gamma/m,J})}{m^2}\, ,
\label{defCNnJ}
\ee
where $\gamma=(N,\mu,-n+\frac12\, \mu^2)$ and  $\chi(\cM_{\gamma,J})$ is defined using
intersection cohomology \cite{Manschot:2016gsx}. To our knowledge this prescription has not yet been
derived from the path integral, but it is needed for modularity, as will become clear below.

Remarkably, the Hodge numbers of $\cM_{N,\mu,n,J}$ turn to be independent of the \kahler class $J$,  when $b_2^+(S)>1$. However,
this is no longer the case when $b_2^+(S)=1$ and $b_2(S)>1$.
In the following we shall restrict to the particular choice $J=-K_S$,
which corresponds to the attractor chamber for generalized DT invariants.

Electric-magnetic duality suggests that \eqref{defZVW} is a multivariate Jacobi form of weight
$(-\frac32,\frac12)$ with a suitable multiplier system. Since \eqref{defthetaVW} transforms as a
vector-valued modular form of weight $(\hf\,(b_2(S)-1),\hf)$,  this implies that \eqref{defhVW}
transforms as vector-valued modular form of weight $-\frac{1}{2}\, b_2(S)-1$,
which for a simply connected surface coincides with the weight $-\frac12\chi(S)$ predicted in \cite{Vafa:1994tf}.
This expectation is borne out for $N=1$, since in this case
\be
\label{h10anyS}
\hVW_{1,0}(\tau) = \eta(\tau)^{-\chi(S)}
\ee
for any manifold \cite{Gottsche:1990}, where the Dedekind eta function $\eta(\tau)=\q^{1/24}\prod_{n>0}(1-\q^n)$
is modular with weight $1/2$. It is also supported
by explicit computations for $S=K3$ at arbitrary rank \cite{Vafa:1994tf}, provided the
multicover formula \eqref{defCNnJ} is used for non-primitive Chern vectors.
However, for $N \geq 2$ and  $b_2^+(S)=1$, modularity can only be expected after including non-holomorphic
contributions from boundaries of moduli space where  the connections become reducible \cite{Vafa:1994tf}.
For $S=\mathbb{B}_9$ the del Pezzo surface of degree 0 (also known as $\frac12 K3$,
see \S\ref{sec_geodata} for a brief summary of the geometry of del Pezzo and Hirzebruch surfaces),
generating functions $\hVW_{N,\mu,J}(\tau)$ were computed for any rank in a particular chamber
in \cite{Minahan:1998vr}\footnote{The generating functions in other chambers have been studied in \cite[Section 6]{Klemm:2012sx},
and involve genuine mock modular forms.},
and turned out to be weak quasi-modular forms, whose
modular completion is obtained straightforwardly by replacing the quasi-modular Eisenstein
series $E_2(\tau)$ by $\widehat{E_2}(\tau)=E_2(\tau)-\frac{3}{\Im\tau}$. For $S=\IP^2$,
the non-holomorphic contribution for $N=2$ is more involved and was obtained long ago in \cite{Vafa:1994tf}
by borrowing results from the mathematics literature  \cite{Zagier:1975,hirzebruch1976intersection}.
For $N=3$, it was obtained only recently \cite{Manschot:2017xcr}, using the method developed in \cite{Alexandrov:2016enp}.

\subsection{Refined Vafa-Witten invariants}

Refined VW invariants are defined by replacing the signed Euler number $\tilde\chi(\cM)$ in \eqref{defCNnJ}
by the $\chi_{y^2}$-genus defined in footnote \ref{fooP}.
Since
the cohomology of $\cM$ is expected to be supported only in Dolbeault degree $(p,p)$
for the surfaces under consideration (as shown for $S=\IP^2$ in \cite{ellingsrud1993towards})
one may use the Poincar\'e-Laurent polynomial   $P(\cM,y)$
defined in \eqref{defOmref}. Thus,
after including contributions from boundary of moduli space, the
partition function of  twisted $\cN=4$ Yang-Mills theory with a fugacity $y=\expe{w}$ conjugate to
the R-symmetry current is expected to be given by
\be
\label{defZVWref}
\widehat{Z}_{N,J}^{\rm VW,ref}(\tau,v,w) = \sum_{\mu\in \Lambda_S/N\Lambda_S}
\whhrVW_{N,\mu,J}(\tau,w)\, \vartheta^{\rm VW}_{N,\mu,J}(\tau,v),
\ee
where $\whhrVW_{N,\mu}(\tau,w)$ is the modular completion  of the
holomorphic generating function of refined rational VW invariants
\be
\label{defhVWref}
\hrVW_{N,\mu,J}(\tau,w) =
\sum_{n\geq 0}
\frac{ c_{N,\mu,n,J}^{\rm ref}(y)}{y-y^{-1}}\,
\q^{n -\tfrac{N-1}{2N} \mu^2 - \tfrac{N \chi(S)}{24}},
\ee
given by the standard multicover prescription \cite{joyce2008configurations}
\be
c_{\gamma,J}^{\rm ref}(y) =  \sum_{m|\gamma} \frac{(-1)^{m-1}(y-1/y) }{m(y^m-y^{-m})}\, P(\cM_{\gamma/m,J}, - (-y)^{m})\, ,
\label{defcref}
\ee
where the Poincar\'e polynomial was defined in \eqref{defOmref}.
Note that in contrast to the definition \eqref{defbOm} of refined Donaldson-Thomas invariants, and
in line with conventions in the literature,  the refined Vafa-Witten invariants for primitive charge
vector are defined as the Poincar\'e polynomial $P(\cM_{\gamma},\cdot )$
evaluated at $+y$, so that the coefficients are positive integers.
For the surfaces of interest the cohomology is supported in even degree
(in fact, in Dolbeault degree $(p,p)$) so that
\be
c_{\gamma,J}^{\rm ref}(-y)=(-1)^{d_\IC(\cM_{\gamma,J})}c_{\gamma,J}^{\rm ref}(y) .
\label{signcref}
\ee
The unrefined invariants  \eqref{defCNnJ} therefore arise in the limit $y\to 1$, whereas
the limit $y\to -1$ produces the same invariants up to an overall sign $(-1)^{\dim_\IC \cM_{\gamma,J}}$.
Since the parity of the dimension given in \eqref{dimM} is independent of the
second Chern class $n$, the relation \eqref{signcref} implies that the generating functions \eqref{defhVWref} satisfy
\be
\hrVW_{N,\mu,J}\(\tau,w+\hf\)=(-1)^{(N-1)(\mu^2-\chi(\cO_S))+1}\hrVW_{N,\mu,J}(\tau,w).
\label{signfliphVW}
\ee

It is natural to expect that \eqref{defZVWref} should transform as a multivariate
Jacobi form of weight $(-\frac12,\frac12)$ and suitable index. This implies
that $\whhrVW_{N,\mu}$ (omitting
$J$ from the notation) should transform as a vector-valued Jacobi
form of weight $-\frac12b_2(S)$,
\bea
\whhrVW_{N,\mu}\left(-\frac{1}{\tau},\frac{w}{\tau}\right)&=&
\I (-1)^{N-1}\,
\frac{(-\I\tau)^{-\frac{b_2(S)}{2} } }{N^{b_2(S)/2}}\,
\expe{ m_S(N)\, \tfrac{w^2}{\tau}
}\!\!\!
\sum_{\nu\in \Lambda_S/N\Lambda_S}\!\!\!
\expe{-\frac{\mu\cdot \nu}{N}} \whhrVW_{N,\nu}(\tau,w),
\nn\\
\whhrVW_{N,\mu}(\tau+1,w)&=&
\expe{-\tfrac{N-1}{2N}\, \mu^2 -\tfrac{N \chi(S)}{24}} \whhrVW_{N,\mu}(\tau,w),
\label{STrefVW}\\
\whhrVW_{N,\mu}(\tau,w+k\tau+\ell)&=& \expe{- m_S(N) \( k^2\tau + 2 k w\)} \whhrVW_{N,\mu}(\tau,w),
\nn
\eea
where the index $m_S(N)$ remains to be specified.
In support of this expectation,  for $N=1$, the Betti numbers are known for any $S$ \cite{Gottsche:1990}, leading when
$b_1(S)=0$ to a  Jacobi form
of weight $-\frac12 b_2(S)$ and index $-2$,
\be
\label{h10anySref}
\hrVW_{1,0}(\tau,w) = \frac{\I}{\theta_1(\tau,2w)\, \eta(\tau)^{b_2(S)-1}},
\ee
where the Jacobi theta function\footnote{Note that
if $\phi(\tau,w)$ is a Jacobi form of weight $k$ and index $m$ for a subgroup $\Gamma\subset SL(2,\IZ)$, then $\phi(r\tau,sw)$ is
a Jacobi form of weight $k$ and index $m s^2/r$ for a level $r$
congruence subgroup of $\Gamma$.}
$\theta_1(\tau,w)=\I \sum_{r\in\IZ+\frac12}(-1)^{r-\frac12} \q^{r^2/2} y^r$
is a Jacobi form of weight $1/2$ and index $1/2$.
In the limit $w\to 1/2$, this reduces as it should to
\eqref{h10anyS} upon using   $\theta_1(\tau,z+1) = -
\theta_1(\tau,z) \sim 2\pi z\,\eta(\tau)^3$ as $z\to 0$.

\medskip

In order to predict the  index of the generating functions \eqref{defhVWref}  (or rather, their
non-holomorphic completion) under modular transformations,
we shall rely on the known answer for stacky invariants\footnote{Stacky invariants
are polynomial combinations of the rational invariants $c_n^{\rm ref}(y)$ which
transform simply under wall-crossing. Whenever the
Chern vector is primitive and $J$ lies inside the \kahler cone they agree with $c_n^{\rm ref}(y)$.}
of the moduli space of semi-stable sheaves on the Hirzebruch surface $\mathbb{F}_1$
(see \S\ref{sec_geodata} for basic properties of these surfaces).
For arbitrary  rank $N$ and in the chamber where $J$ is aligned along the elliptic fiber class $f$ of  $\mathbb{F}_1$,
the generating function of stacky invariants is given by
(\cite[Conjecture 4.1]{Manschot:2011ym}, proven in \cite{Mozgovoy:2013zqx})
\be
\label{defHN}
H_{N}(\tau,w) = \frac{\I (-1)^{N-1} \eta(\tau)^{2N-3}}
{\theta_1(\tau,2Nw)\, \prod_{k=1}^{N-1} \theta_1(\tau,2kw)^2}
\ee
for $\mu \cdot f=0 \mod N$, or 0 otherwise, which
is a Jacobi form of weight $-1$ and index $-\frac13(4N^3+2N)$.
We do not expect the weight or index to depend on whether $(N,\mu)$ is primitive or not. Therefore,
the generating function of rational VW invariants on $S=\mathbb{F}_1$ should have the same weight and index in any chamber.
Since $\mathbb{F}_1$ is isomorphic to the one-point blow-up of $\IP^2$, one can obtain the generating function of refined
VW invariants on $\IP^2$ by applying the blow up formula \cite{Yoshioka:1996,0961.14022}, which amounts
to  dividing by
\be
\label{defBNk}
B_{N,k}(\tau,w) = \frac{1}{\eta(\tau)^N} \sum_{\sum_{i=1}^N a_i=0 \atop a_i\in\IZ+\frac{k}{N}}
\expe{-\tau\, \sum_{i<j} a_i a_j}\, y^{\sum_{i<j}(a_i-a_j)}.
\ee
This factor arises from bundle moduli associated to the exceptional divisor,
and transforms as a vector-valued Jacobi form of weight $-\frac12$ and index\footnote{This follows by recognizing
the numerator of \eqref{defBNk} as a theta series of the form \eqref{Vignerasth} with $\Phi=1$ for the positive
definite lattice $A_{N-1}$, with $\bfv=w \varrho$ where $\varrho$ is  the  Weyl vector $(\frac{N-1}{2},\frac{N-3}{2}, \dots,
\frac{3-N}{2},\frac{1-N}{2})$ of norm $(N^3-N)/3$.}
$\frac16(N^3-N)$. We conclude that  the generating function \eqref{defhVWref} for $S=\IP^2$
(or rather, its non-holomorphic modular completion)
has weight $-\frac12=-\frac12 b_2(S)$ and index $-\frac{3N^3+N}{2}$.
This agrees with the results of  \cite{Manschot:2017xcr} for $N\le 3$.
Similarly, by blowing up $1\leq k\leq 8$ generic points on $\IP^2$, one obtains that
the generating function \eqref{defhVWref} for the del Pezzo surface  $S=\mathbb{B}_k$ has
weight $-\frac12(k+1)$ and index $-\frac16(9-k)(N^3-N)-2N$ for any $N$.
Since $K_S^2=9-k$ for  $S=\mathbb{B}_k$, this supports the conjecture
that for any  almost Fano surface $S$ with $b_1(S)=0$ and $b_2^+(S)=1$, the generating function \eqref{defhVWref}
should have the weight and index given by
\be
\label{indexconjVW}
w_S = -\frac12\, b_2(S),
\qquad
m_S(N) = -\frac16\, K_S^2 (N^3-N)-2N ,
\ee
In the next subsection, we shall compare the index \eqref{indexconjVW} with the general prediction \eqref{indexconjDT}
and confirm the universal value of the coefficient of the cubic term.

Since refined VW invariants are expected to be well-defined on any almost  complex surface,
it is natural to ask what should be the weight $w_S(N)$ and index $m_S(N)$ of the generating function\footnote{When $b_2^+(S)>1$, both
$\whhrVW_{N,\mu}$ and its unrefined counterpart include additional contributions from
the `monopole branch', analyzed recently in \cite{thomas2018equivariant,Laarakker:2018isn}.}
$\whhrVW_{N,\mu}$ for an almost
complex surface $S$ with arbitrary values of $b_1(S)$ and $b_2^+(S)$.
Using similar arguments as in \cite{Vafa:1994tf}, we expect that the weight and index should be linear
combinations of the Euler number $\chi(S)$ and signature $\sigma(S)$,
which are the only topological invariants which can be expressed as an integral over a local field,
\be
w_S(N)=a_N \chi(S)+b_N \sigma(S),
\qquad
m_S(N)=c_N\chi(S)+d_N\sigma(S).
\ee
The coefficients $a_N,b_N,c_N,d_N$
can be determined by computing the weight and index for two (non-birationally equivalent)
four-manifolds. We can take for these four-manifolds, for example,
$\mathbb{F}_n$ and $K3$, which gives the formula \eqref{genindex0} stated in the introduction,
\be
\label{genindex}
w_S(N)=\frac14(\sigma(S)-\chi(S)),
\qquad
m_S(N)=-\frac{1}{6}(2N^3+N)\,\chi(S)-\frac{1}{2}N^3\sigma(S).
\ee
These expressions are in agreement with \eqref{indexconjVW} for a Fano surface due to relations
$\chi(\mathcal{O}_S)=\frac14\,(\chi(S)+\sigma(S))=1$ and $K_S^2=2\chi(S)+3\sigma(S)=12-\chi(S)$.
As an independent check, we verified that \eqref{genindex} also
matches with the explicit generating functions for ruled surfaces with the base given by
a genus $g$ Riemann surface \cite{Manschot:2011ym}.
In this case the weights of the unrefined and refined
partition functions \eqref{defZVW} and \eqref{defZVWref} become
$\(b_1(S)-\frac12 b_2^+(S)-1, \hf\, b_2^+(S)\)$ and
$\(\hf(b_1(S)-1),\hf\,b_2^+(S)\)$, respectively, consistently with the fact
that $\widehat{Z}_{N,J}^{\rm VW}$ coincides with the limit of $(y-1/y)^{\chi(\cO_S)} \widehat{Z}_{N,J}^{\rm VW,ref}$ as $y\to -1$.
Note that the weight in \eqref{genindex} is in agreement with the weight $w_S=-5\chi(\cO_S)+K_S^2/2$ for an algebraic surface $S$
proposed in \cite[Conjecture 1.10]{gottsche2018refined}, while the index $m_S(N)$ only agrees
with their proposal for $N=1,2,3$ thanks to the numerical coincidence
$\frac{N^3-N}{6} = (N-1)^2$ for these values.

\subsection{Local limit of elliptically fibered CY threefolds \label{sec_locallim}}
\label{subsec-local}

In order to extract the modular completion of VW invariants from that of generalized DT invariants
for compact CY threefold, we consider a smooth elliptic fibration $\CY\to S$ with a single section over a compact, smooth
almost Fano base $S$ with $b_1(S)=0$ and $b_2^+(S)=1$.
This implies that $\chi(\cO_S)=1$, such that the divisor $S$ is rigid inside $\CY$.
This restricts $S$ to be either a Hirzebruch surface $\mathbb{F}_k$ with $0\leq k\leq 2$
or a del Pezzo surface $\mathbb{B}_k$ with $0\leq k \leq 8$ (if we were to allow orbifold singularities on the base,
then there are more possibilities corresponding to reflexive two-dimensional polytopes).
The \kahler moduli space of $\CY$ includes \kahler deformations of the base and of the fiber, hence has
dimension $h_{1,1}(\CY)=b_2(S)+1$.
The \kahler cone  of $\CY$ is generated by $(\omega_e,\omega_\alpha)$
where $\alpha=1,\dots, h^{1,1}(S)$, while the Mori cone is generated by the dual divisors
$(D_e,D_\alpha)$. These divisors satisfy
\be
\label{Dinter}
D_e^3 = C_{\alpha\beta} c_1^\alpha c_1^\beta,
\quad
D_e^2 \cap D_\alpha = C_{\alpha\beta} c_1^\beta,
\quad
D_e \cap D_\alpha \cap D_\beta = C_{\alpha\beta},
\quad
D_\alpha \cap D_\beta \cap D_\beta =0,
\ee
where $C_{\alpha\beta}$  is the intersection matrix on $S$ and $c_1^\alpha$ are
the components of the first Chern class, given by
\be
C_{\alpha\beta}=\int_S \omega_\alpha \wedge \omega_\beta,
\qquad
c_1(S) = c_1^\alpha \, \omega_\alpha.
\ee
The divisor
\be
\label{deffrakB}
[S] = D_e - c_1^\alpha D_\alpha
\ee
can be identified with the base $S$ of the elliptic fibration, or equivalently
with its unique section, since it
satisfies
\be
[S] ^3 = D_e^3 ,
\qquad [S] ^2 \cap  D_\alpha =
[S]  \cap  D_\alpha \cap D_\beta = 0.
\ee
The Euler number of $\CY$ is
\be
\chi(\CY) = 2 ( h_{1,1}(\CY) - h_{2,1}(\CY) ) =
-60 \int_S c_1^2 = -60 \, C_{\alpha\beta} c_1^{\alpha} c_1^{\beta},
\ee
while the components of the second Chern class $c_2(T\CY)$ are
\be
\label{c2loc}
c_{2,e} = \int_{S} ( 11 c_1^2 + c_2),
\qquad
c_{2,\alpha} = 12 \, C_{\alpha\beta} c_1^{\beta},
\ee
which ensures that $[S]^3 + c_2(T\CY) \cap [S]$ coincides with the
Euler number $\chi(S) = \int_S c_2(S)$ of the base. Using \cite[Eq.(3.3)]{Maldacena:1997de} and
$\chi(\cO_S)=\frac{1}{12}  \int_{S} ( c_2+ c_1^2)=1$, we see  that the dimension of the space
of deformations of $S$ inside $\CY$  vanishes,
\be
\label{nbdef}
\frac13\, S^3 + \frac16\, c_2(T\CY)\cdot S -2 = \frac16 \int_{S} ( c_2(S) + c_1(S)^2  ) - 2
=2 \(\chi(\cO_S) - 1 \) = 0,
\ee
in agreement with the fact that the divisor $S$ is rigid.
The local limit is obtained by taking $t_e\to\infty$,
so that the CY threefold degenerates
to the total space ${\rm Tot}(K_S)$ of the canonical bundle over $S$.\footnote{In the case of
elliptic fibrations with $r>1$ sections one finds \cite{Klemm:2012sx}
that the intersection numbers in \eqref{Dinter} are rescaled by a factor of $r$,
$c_{2,\alpha}$ is still given by the second equation in \eqref{c2loc}, while the formulae for
$\chi(\CY)$ and for $c_{2,e}$ are modified in such a way that $[S]^3 + c_2 \cap [S]$ coincides with $r\chi(S)$.
The number of complex deformations \eqref{nbdef} then
becomes $2r-2$, so the divisor $S$ is not rigid unless $r=1$.}

For $N$ D4-branes wrapping $S$, the magnetic charge is given by $p^a = N p_0^a$ where
$p_0^a = (1, -c_1^{\alpha})$ are the components of the divisor $[S]$ in \eqref{deffrakB}, while
the electric charges are given by \cite{Diaconescu:2007bf},
\cite[Eq.(3.5)]{Alexandrov:2012au}
\be
\begin{split}
q_\alpha =&\, - \int_S\, i^*(\omega_\alpha) \(c_1(F) + \frac{N}{2}\, c_1(S) \),
\qquad
q_e=0,
\\
q_0 =&\, \frac{1}{2N} \left( c_1(F) + \frac{N}{2}\, c_1(S) \right)^2 + \frac{N}{24}\, \chi(S) - N\, \Delta(F),
\end{split}
\label{geom-q}
\ee
where $\Delta(F)$ is the Bogomolov discriminant defined in \eqref{dimM}
and $i^*(\omega_\alpha)$ is the pull back of the Poincar\'e dual of the divisor $\omega_\alpha$ under the inclusion $S \to \CY$.
The consistency of these formulae with the  quantization conditions \eqref{fractionalshiftsD5}
can be checked by using Wu's formula \cite{Vafa:1994tf}
\be
c_1(F) \(c_1(F)+c_1(S)\)\in 2 \IZ,
\label{Wu}
\ee
the relation $c_1^2(S)=p_0^3 $, and the fact that the function
\be
L_0(p)=\frac{1}{6}\, p^3 + \frac{1}{12}\, c_{2,a}p^a
\label{delL0}
\ee
is integer-valued since it coincides with $\sum_{i=0}^4 (-1)^i \dim H^i(\CY,\CL)$ where $\CL$
is the line bundle associated to the divisor $\cD$ defined by the vector $p^a$ \cite{Maldacena:1997de}.
When $\cD$ is very ample, it is equal to the number of complex deformations of $\cD$ inside $\CY$.
In our case, $[S]$ is not ample, but due to \eqref{nbdef} one finds
\be
L_0(p)=N\chi(\cO_S)+\frac{N(N^2-1)}{6}\,K_S^2,
\label{valL0}
\ee
which is indeed integer. In particular, for $N=1$ one has $L_0(p_0)=\chi(\cO_S)=1$.

Using \eqref{geom-q}, the Dirac product between two charge vectors
$\gamma_i=( 0, N_i p_0^a , q_{i,a}, q_0)$ (or equivalently, the reduced charge vectors $\gama_i=( N_ip_0^a , q_{i,a})$)
can be written as
\be
\label{gam12}
\langle\gamma_1,\gamma_2\rangle=\,
c_1^{\alpha} (N_1 q_{2,\alpha} -N_2 q_{1,\alpha}) = K_S \cdot ( N_1\, c_1(F_2) - N_2\, c_1(F_1) )
\ee
in agreement with \cite[Eq.(3.12)]{Diaconescu:2007bf}.
Furthermore, the invariant D0-brane charge \eqref{defqhat} becomes
\be
\label{defqhatloc}
\hat q_0 = q_0 - \frac{1}{2N}\,C^{\alpha\beta}q_\alpha q_\beta =
 -\int_S \left[ c_2(F) - \frac{N-1}{2N}\, c_1^2(F) \right] + \frac{N}{24} \,\chi(S) .
\ee
Substituting this into the generating function \eqref{defhDT}, one finds that the power of $\q=\expe{\tau}$ precisely matches
the one in the VW generating function \eqref{defhVW}.

Next, we compute the induced quadratic form on the lattice $\Lambda$,
\be
\kappa_{ab} = \kappa_{abc} p^c = N\(\begin{array}{cc}
0 & 0
\\
0 & C_{\alpha\beta}
\end{array}\).
\ee
It is degenerate along the fiber direction, corresponding to the fact that the divisor $[S]$ is not ample.
The standard decomposition of the electric charge \eqref{decq}
no longer holds in this case, since the spectral flow leaves the charge $q_e$
along the fiber invariant. Fortunately, $q_e$ automatically vanishes
for a D4-brane wrapped on $S$, while other D4-brane configurations carrying $q_e\neq 0$
necessarily wrap the elliptic fiber and
are therefore exponentially suppressed in the limit $t_e\to \infty$.
The remaining part can be decomposed as
\be
\label{quant-q}
q_\alpha =  \mu_\alpha  + N \, C_{\alpha\beta} \epsilon^{\beta}  - \frac{N^2}{2} C_{\alpha\beta} c_1^\beta,
\ee
where $\epsilon^\alpha$ runs over $\Lambda_S=H^2(S,\IZ)$. Since the lattice $\Lambda_S$
equipped with the intersection form $C_{\alpha\beta}$ is unimodular, $\mu_\alpha$ runs
over  $\Lambda_S/ (N \Lambda_S)$. The sum over the spectral flow parameter $\epsilon^\alpha$
leads to the Siegel-Narain theta series \eqref{defthSN} which now takes the form
\be
\label{thetaloc}
\thetasn_{Np_0,\mu}(\tau,v) =\sum_{q\in N \Lambda_S+ \mu+\frac{N^2}{2}K_S}
 (-1)^{N K_S \cdot q}\, \expe{ -\frac{\tau}{2N}\, q^2 + \frac{\I \tau_2}{N} \, (q+N b)_+^2 + q_a v^a}.
\ee
This differs from the theta series \eqref{defthetaVW} by the choice of the characteristic vector,
$N^2K_S$ and $NK_S$ respectively, which leads to the following  relation (see footnote \ref{foochar})
\be
\thetasn_{Np_0,\mu}
=\, (-1)^{(N-1)\( \mu^2 +\frac{N}{2}\, K_S^2\)}\,\vartheta^{\rm VW}_{N,\mu+\frac{1}{2}\,N(N-1) K_S},
\label{thetarel}
\ee
where we used Wu's formula \eqref{Wu} to replace $K_S\cdot \mu$ by $\mu^2$.

The relation \eqref{thetarel} shows that the translation from CY to VW conventions
involves a shift of  the residual flux.
Besides, comparing the definitions of the refined invariants \eqref{defbOm} and \eqref{defcref},
one observes that the corresponding refinement parameters are related by a sign flip.
Therefore, we expect that the generating functions of MSW and VW invariants should be related by
\be
\hr_{Np_0,\mu}(\tau,w)  = \hrVW_{N,\mu+\frac12N(N-1) K_S }(\tau,w+\haf)\, ,
\label{relhh}
\ee
possibly up to an overall $\mu$-independent sign.
As a consistency check of this relation, one can verify with help of the property \eqref{signfliphVW} that
the modular transformations \eqref{STrefVW} agree with those in \eqref{STref}.

Most importantly, the relation \eqref{relhh} implies that the modular completions of the
respective generating functions should
satisfy the same identity. In appendix \ref{ap-vwp2}, we show
that this is indeed the case for $\IP^2$ at rank $N\leq 3$, with the relative sign chosen as in \eqref{relhh}.\footnote{Actually,
assuming that both DT and VW invariants are given by the
$\chi_{y^2}$-genus of the respective moduli spaces, one finds
 that the relation \eqref{relhh}
between the generating functions \eqref{defhDTr} and \eqref{defhVWref} should
have an extra minus sign. 
However, this would lead to an additional factor $(-1)^{N-1}$ in the relation \eqref{gtogp}
and correspondingly to sign discrepancies in the modular completions.
Thus, the comparison of the two completions suggests that
 the DT invariants should be identified with {\it minus} the VW invariants.
We do not yet understand yet the reason for this overall sign
flip, nonetheless, we observe that the VW invariants for $\IP^2$ descend via blow-down
from the VW invariants of
$\mathbb{F}_1$ in a particular chamber which
contains {\em more} states than nearby chambers \cite{Manschot:2010nc, Manschot:2011dj},
unlike the usual  attractor chamber in supergravity.
 This observation might be relevant for this sign issue. \label{foot-sign}}
This remarkable agreement gives strong evidence that our prescription for constructing the modular completion
of generating functions of refined VW invariants should
hold for any rank $N\geq 2$ and for any smooth almost Fano surface with $b_1(S)=0$, $b_2^+(S)=1$.
In appendix \ref{ap-rank4} we make our prescription explicit for $S=\IP^2$, $N=4$.

Finally, it is suggestive to rewrite the (conjectural) formula \eqref{indexconjVW} for the
index of the generating function of refined VW invariants in terms of the  local Calabi-Yau geometry.
Using \eqref{valL0}, it is straightforward to see that the index \eqref{indexconjVW} evaluates to
\be
\label{mp2}
m_S(N) =  -L_0(p)-N,
\ee
where $L_0$ is the function defined in \eqref{delL0}, which is known to be integer.
This is consistent with \eqref{indexconjDT} upon setting $\rho_a=-\frac1{12}c_{2,a}$,
up to an integer shift by $N$, which can be viewed as the largest integer such that $p^a/N$ is
a primitive vector. This shift is consistent with the linearity of $m(p)+\frac16 p^3$
and, upon restricting to the BPS states with $p^a=Np_0^a$,
does not spoil the derivation of \eqref{treeFh} and hence of the integral representation
for the instanton generating potential. It is conceivable that this shift is related to the existence
of a degenerate direction in the quadratic form $\kappa_{ab}$.   It would be interesting to confirm the formula \eqref{mp2}
for a wider class of non-compact Calabi-Yau geometries, possibly with more singular quadratic form.

\section{Holomorphic anomaly for refined Vafa-Witten invariants}
\label{sec-holan}

In the previous section, we gave strong support that the modular completion of the
generating function of (refined) VW invariants should be given by the general prescription \eqref{exp-whhr}
for a wide class of complex surfaces $S$.
Despite being quite attractive, this conjecture apparently raises a puzzle: as shown
in \cite[Prop. 9]{Alexandrov:2018lgp} in the unrefined case, the modular completion $\whh_{p,\mu}$
generically satisfies a holomorphic anomaly equation of the form
\be
\p_{\bar\tau}\whh_{p,\mu}(\tau)= \sum_{n=2}^\infty
\sum_{\sum_{i=1}^n \gama_i=\gama}
\cJ_n(\{\gama_i\},\tau_2)
\, e^{\pi\I \tau Q_n(\{\gama_i\})}
\prod_{i=1}^n \whh_{p_i,\mu_i}(\tau),
\label{exp-derwh0}
\ee
where the coefficients $\cJ_n(\{\gama_i\},\tau_2)$ are generically non-vanishing for all $n$'s.
In contrast, for the case $S=\hf K3$ studied in  \cite{Minahan:1998vr},
it was found that the non-holomorphic completion satisfies an equation of the form
\be
\label{naivehol}
\p_{\bar\tau}\widehat{Z}^{\rm VW}_{N}= \frac{\rm cte}{\tau_2} \sum_{N=N_1+N_2}  N_1 N_2\,
\widehat{Z}^{\rm VW}_{N_1} \widehat{Z}^{\rm VW}_{N_2},
\ee
with only products of two partition functions appearing on the r.h.s. In this section,
we shall resolve this puzzle and establish a precise version of the holomorphic
equation \eqref{naivehol}, as well as its refined version, assuming
that the modular completion of $\hrVW_{p,\mu}$ is indeed given by the
prescription \eqref{exp-whhr} for all $N>1$.

\subsection{Refined holomorphic anomaly for compact CY threefold}

Let us first return to the compact case, and establish the  holomorphic anomaly
equation satisfied by the modular completion $\whhr_{p,\mu}(\tau)$ of the generating
function of refined DT invariants, ignoring the fact that these invariants may not be well-defined.
Since the anomaly coefficients $\cJ_n$ \eqref{exp-derwh0} given explicitly in \cite[Eq.(5.36)]{Alexandrov:2018lgp}
are expressed solely in terms of the functions $\cE_n$, it is clear that the anomaly
can be directly translated to $\whhr_{p,\mu}$ by replacing $\cE_n$ by $\cEr_n$.
As a result, one obtains
\be
\p_{\bar\tau}\whhr_{p,\mu}(\tau,w)= \sum_{n=2}^\infty
\sum_{\sum_{i=1}^n \gama_i=\gama}
\cJr_n(\{\gama_i\},\tau_2,y)
\, e^{\pi\I \tau Q_n(\{\gama_i\})}
\prod_{i=1}^n \whhr_{p_i,\mu_i}(\tau,w),
\label{exp-derwh}
\ee
where
\be
\cJr_n(\{\gama_i\},\tau_2,y)= \frac{\I}{2}\,\Sym\left\{\sum_{T\in\IT_n^{\rm S}}(-1)^{n_T-1}
\p_{\tau_2}\cEr_{v_0}\prod_{v\in V_T\setminus{\{v_0\}}}\cEr_{v}\right\}.
\label{solJn}
\ee
Hence, for a reducible divisor class $[\cD]=p^a \omega_a$,
the antiholomorphic derivative of the modular completion $\whhr_{p,\mu}$ is given by a
sum of monomials in $\whhr_{p_i,\mu_i}$ for all splittings $p^a=\sum_{i=1}^n p_i^a$, multiplied
by an indefinite theta series of signature $(n-1) (1,b_2-1)$ with the kernel given by $\cJr_n$.
Using the techniques in \cite{Alexandrov:2016enp}, one may express this theta series as an
iterated Eichler integral of an ordinary Gaussian theta series, justifying the name `mock
{Jacobi}
form of depth $n-1$' for the holomorphic generating functions $\hr_{p,\mu}$.

The new feature here compared to the unrefined case
is that the $\tau_2$-dependence in $\cEr_n$ originates not only from the overall factor in the argument $\bfx$
of generalized error functions (see \eqref{argxref}), but also due to the $\beta$-dependent shift
in this argument since the $\tau_2$ derivative in \eqref{solJn} {is evaluated by keeping $y$
and $\bar y$ constant,}
while $\beta=\log(y\by)/(4\pi\tau_2)$. As a result, taking into account that
the derivative lowers the {order}
of generalized error functions \cite{Nazaroglu:2016lmr}, the $\tau_2$-derivative gives
\bea
\p_{\tau_2}\cEr_n(\{\gamma_i\},y)&=&\frac{(-y)^{\sum_{i<j} \gamma_{ij} }}{2^{n}\tau_2}\,
\bfx' \cdot\p_\bfx\Phi^E_{n-1}(\{ \bfv_{\ell}\};\bfx)
\nn\\
&=&\frac{(-y)^{\sum_{i<j} \gamma_{ij} }}{2^{n-1}\tau_2}\sum_{k=1}^{n-1}\frac{(\bfv_k,\bfx')}{|\bfv_k|}\,
e^{-\pi\,\frac{(\bfv_k,\bfx)^2}{\bfv_k^2} }\Phi^E_{n-2}(\{ \bfv_{\ell\perp k}\}_{\ell\ne k};\bfx),
\label{derivPhiE}
\eea
where
\be
\bfx'=\bfx-2\sqrt{2\tau_2}\,\beta \ptt=\sqrt{2\tau_2}(\bfq+\bfb-\beta\ptt)
\ee
and $\bfv_{\ell\perp k}$ denotes the projection of $\bfv_\ell$ on the hyperplane orthogonal to $\bfv_k$.

\subsection{Collinear charges}

Let us now assume that the only possible allowed splittings of the magnetic charge $p^a$
are of the form $p^a = \sum_i N_i p_0^a$ where $p_0^a$ is a fixed vector such that $p^a=N p^a_0$,
and $N_i$ run over all possible partitions of $N$. This case is relevant in particular for compact
CY threefolds with $b_2=1$ such as the quintic, or for local Calabi-Yau threefolds, where $p_0^a$
is the class of the divisor $S$. An important consequence of this assumption is that DT invariants
$\bOm(\gamma,z^a)$ with $p^a=N p^a_0$ become independent of the $b$-field, which cancels from
the relative central charge in the large volume limit \cite[Eq.(2.11)]{Alexandrov:2016tnf}
\be
\label{ImZZb}
\begin{split}
\Im(Z_{\gamma_1}\bZ_{\gamma_2}) =&  -\frac12\, \Bigl[
(p_2 t^2)\, (q_{a,1}+(p_1 b)_a) t^a - (p_1 t^2)\, (q_{a,2}+(p_2 b)_a) t^a \Bigr]
\\
=& -\frac12 (p_0 t^2) \(N_2 q_{1,a} - N_1 q_{2,a} \) t^a.
\end{split}
\ee
This is consistent with the fact that VW invariants depend only on the \kahler class
$t^a$ and not on $b^a$.
Another remarkable consequence of the above restriction is that it ensures the vanishing of all terms of degree
higher than two in the holomorphic anomaly equation \eqref{exp-derwh}. Namely, in appendix \ref{ap-theorem}
we prove the following

\begin{theorem}\label{theor}
For $n>2$, $\cJr_n=0$.
\end{theorem}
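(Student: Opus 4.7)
The plan is to exploit collinearity to reduce all the relevant vectors to a one-dimensional subspace, and then show that the sum over Schr\"oder trees cancels for $n>2$ as a combinatorial identity. First I would establish the geometric reduction: when $p_i^a=N_i p_0^a$, the relative central charges \eqref{ImZZb} depend on the $q_i$'s only through the scalar projections $\hat q_i:=t^a q_{i,a}$, so all the Dirac pairings $\gamma_{ij}$ are encoded in a single one-dimensional direction. Applied to the definition of $\bfv_{ij}$ (and hence of $\bfv_k$ via \eqref{defbfvk}), this shows that the whole family $\{\bfv_k\}_{k=1}^{n-1}$ used to build $\cEr_n$ in \eqref{Erefsim} is proportional to a common vector $\bfe$, with scalar coefficients $c_k$ determined by the $N_i\hat q_j-N_j\hat q_i$.

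The second step is to use this collinearity to degenerate the generalized error functions. For a rank-$m$ function $\Phi^E_m(\{c_\ell\bfe\};\bfx)$ whose defining vectors all lie along a single direction, the defining integral collapses onto the 1D axis and the function reduces to a simple expression in the projection $(\bfe,\bfx)/|\bfe|$ times a product of $\sgn(c_\ell)$ factors, with the $e_m$ constants of \eqref{valek} appearing on the loci where some $c_\ell$ vanish. Consequently $\cEr_n$ becomes a product of a 1D error function with an explicit combinatorial prefactor built from $\sgn(\Gamma_k)$. When we apply the derivative formula \eqref{derivPhiE}, the projections $\bfv_{\ell\perp k}$ vanish for all $\ell\neq k$, so the residual factor $\Phi^E_{n-2}(\{\bfv_{\ell\perp k}\};\bfx)$ collapses to the constant $e_{n-2}$. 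The $\tau_2$-derivative of $\cEr_n$ therefore takes the explicit form of a 1D Gaussian in $(\bfe,\bfx)$ times a sum of sign products over the vertex $k$.

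The last and hardest step is to insert these degenerated expressions into the Schr\"oder tree sum \eqref{solJn} and show that, after symmetrization over $\{\gama_i\}$, the result vanishes identically for $n>2$. Because all vertex factors $\cEr_v$ now depend on the charges only through their 1D projections $\hat q_i$, the sum reduces to a purely combinatorial identity involving Schr\"oder trees weighted by $(-1)^{n_T-1}$, products of $\sgn(\cs_j)$, the $e_m$ constants, and a single Gaussian insertion at the root. My plan is to proceed by induction on $n$: split the sum according to the degree $k_{v_0}$ of the root vertex, use the recursive structure in which each child subtree produces a lower-rank $\cEr$, and isolate the Gaussian insertion at the root. The induction hypothesis then reduces the inner subtrees to the quadratic case, and the outer sum over root splittings becomes an identity of the same type as \eqref{eqB} --- a consequence of the $\tanh/\mathrm{arctanh}$ generating-function relation that already forced the values \eqref{valek} of $e_m$. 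The main obstacle I anticipate is the careful bookkeeping at this last step: ensuring that all sign factors from the collinear limit of $\Phi^E_m$, the symmetrization over charges, and the tree combinatorics line up with precisely the $\tanh/\mathrm{arctanh}$ identity. Should the direct induction become intricate, an alternative route would be to explicitly verify the cancellation at $n=3$ by hand and then exploit the fact that $\cJr_n$ satisfies a Vigneras-type differential equation in $\bfx$, so that its vanishing at a single representative point propagates to the full kernel.
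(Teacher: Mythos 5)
Your proposal rests on a geometric claim that is false, and this breaks the argument at the first step. For collinear magnetic charges $p_i^a=N_ip_0^a$ the Dirac pairings $\gamma_{ij}$ do indeed depend on the electric charges only through the scalars $p_0^aq_{i,a}$, but this does \emph{not} make the vectors $\bfv_k=\sum_{i\le k<j}\bfv_{ij}$ proportional to a common vector in the big lattice $\Lat=\oplus_i\Lambda_i$. Each $\bfv_{ij}$ has support only in the $i$-th and $j$-th slots of $\Lat$, so the $\bfv_k$, $k=1,\dots,n-1$, remain linearly independent and span an $(n-1)$-dimensional positive-definite subspace — as they must for $\Phi^E_{n-1}(\{\bfv_\ell\};\bfx)$ in \eqref{Erefsim} to be a genuine rank-$(n-1)$ generalized error function (see the nondegenerate triangular matrix \eqref{matM0}). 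Consequently your second step fails as well: the projections $\bfv_{\ell\perp k}$ for $\ell\neq k$ do not vanish, and $\Phi^E_{n-2}(\{\bfv_{\ell\perp k}\};\bfx)$ does not collapse to the constant $e_{n-2}$. What collinearity actually buys — and this is the key lemma of the paper's proof (Proposition \ref{prop-orthcolin}) — is that after projecting orthogonally to a single $\bfv_k$, the remaining vectors split into two \emph{mutually orthogonal} families supported on the index ranges $\{1,\dots,k\}$ and $\{k+1,\dots,n\}$. This makes $\Phi^E_{n-2}$ \emph{factorize} into a product of two lower-rank error functions rather than degenerate, yielding $\p_{\tau_2}\cEr_n=\sum_k A_k\,\cEr_k\,\cEr_{n-k}$ as in \eqref{derivPhiEsplit}.

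Your third step is therefore aimed at the wrong cancellation mechanism. The paper's cancellation is not an analogue of the $\tanh/\mathrm{arctanh}$ identity \eqref{eqB}: it is a pairing argument in which each product of two cut subtrees is produced by exactly four Schr\"oder trees whose contributions cancel in the alternating sum \eqref{solJn} (degenerating to two trees at the boundary values of $k$, and to a single surviving tree only when $n=2$). Finally, your fallback is not viable: verifying $\cJr_3=0$ at one point and invoking a Vign\'eras-type equation cannot propagate the vanishing, since solutions of \eqref{Vigdif} form an infinite-dimensional space and are not determined by their value at a single $\bfx$ (and $\cJr_n$, being built from $\p_{\tau_2}\cEr_{v_0}$, is not itself a Vign\'eras kernel). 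To repair the proof you would need to replace the one-dimensional reduction by the orthogonal-splitting statement and then run the four-tree cancellation.
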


Due to this theorem, the holomorphic anomaly for the completion of the generating function of refined MSW invariants
contains only one term quadratic in $\whhr_{N_i,\mu_i}$. Substituting the explicit expression for $\cJr_2$ \eqref{valJr2},
one therefore obtains
\be
\p_{\bar\tau}\whhr_{N,\mu}(\tau,w)=
\sum_{N_1+N_2=N\atop q_{1,a}+q_{2,a}=q_a}
\frac{\I(-y)^{\gamma_{12}}}{2\sqrt{2\tau_2}}\, \frac{\gamma_{12}-\beta p_0^3 NN_1N_2}{\sqrt{p_0^3 NN_1N_2}}\,
e^{-2\pi\tau_2\,\frac{(\gamma_{12}+\beta p_0^3 NN_1N_2)^2}{p_0^3 NN_1N_2} }
\, e^{\pi\I \tau Q_2(\gama_i,\gama_2)}
\whhr_{N_1,\mu_1}\whhr_{N_2,\mu_2}.
\label{exp-derwh-col}
\ee

This result can be used to get the holomorphic anomaly in the unrefined case which must emerge in the limit $y\to 1$.
However, one should be careful evaluating this limit since the result depends on the direction along which $y$ approaches to 1.
The correct way to proceed is to take the holomorphic limit,
i.e. set $\by=1$ while sending $y$ to 1.
This is because the holomorphic limit commutes with
any modular operation such as construction of the completion or evaluation of the shadow.
Indeed, since $w=\frac{1}{2\pi\I}\,\log y$ is a modular variable, Taylor coefficients of an expansion
in $w$ of a modular function are also modular.
In particular, the holomorphic limit of \eqref{exp-derwh-col} gives
\be
\p_{\bar\tau}\whh_{N,\mu}(\tau)=
\sum_{N_1+N_2=N\atop q_{1,a}+q_{2,a}=q_a}
\frac{(-1)^{\gamma_{12}}}{8\pi\I(2\tau_2)^{3/2}}\,\sqrt{p_0^3 NN_1N_2}\,
e^{-\frac{2\pi\tau_2 \gamma_{12}^2}{p_0^3 NN_1N_2} }
\, e^{\pi\I \tau Q_2(\gama_i,\gama_2)}
\whh_{N_1,\mu_1}\whh_{N_2,\mu_2}.
\label{exp-derwh-lim}
\ee
This result indeed coincides with the direct evaluation of the holomorphic anomaly in the unrefined case:
the right-hand side of \eqref{exp-derwh-lim} is the explicit form of the quadratic term in \eqref{exp-derwh0},
whereas the vanishing of higher order terms for collinear D4-brane charges can also be shown with some effort.
However, this is more difficult than in the refined case due to the more complicated form of the functions $\cE_n$
(cf. \eqref{rescEnPhi}, \eqref{rescEn} and \eqref{Erefsim}), which demonstrates once more the simplicity and power of
the refined construction proposed here.

\subsection{Partition function: unrefined case}
\label{subsec-pfunref}

We now examine the holomorphic anomaly of the partition function
\eqref{defZp} in the collinear case $p^a=N p_0^a$, restricting to the attractor point
$t^a=\lambda p_0^a$ with $\lambda\gg 1$. It will be convenient to introduce
\be
\cXsn_{p,q}=e^{-S^{\rm cl}_p-2\pi\tau_2(q+b)_+^2} \expe{- \tfrac{\tau}{2}\,q^2+q_a v^a}
\label{SNseries}
\ee
and denote $\whcZ_N=V_p^{-1}\whcZ_p$ where $\whcZ_p$ was defined in \eqref{defZp}, so that
\be
\whcZ_N = \sum_{q}\sigma_\gamma\,\whh_{p,\mu}\cXsn_{p,q}.
\label{pfZBN-compl}
\ee
The non-holomorphic dependence
on $\tau$ comes both from  the completion $\whh_{p,\mu}$,
governed by  \eqref{exp-derwh-lim}, and from the explicit dependence on $\tau_2$
 in \eqref{SNseries}. To obtain the  holomorphic anomaly equation, it suffices
to multiply \eqref{exp-derwh-lim}, by $\sigma_\gamma\cXsn_{p,q}$ and
convert the product of $\whh_{N_i,\mu_i}$ appearing on the right-hand side into a
product of two partition functions.
To this end, we use the following identity valid for $\gama_1+\gama_2=\gama$
\be
\cXsn_{p,q}=e^{-\pi\I \tau Q_2(\gama_1,\gama_2)}\, e^{2\pi\tau_2\, \frac{\(2\Im(Z_{\gamma_1}\bZ_{\gamma_2})\)^2}{(pt^2)(p_1t^2)(p_2t^2)}}
\cXsn_{p_1,q_1}\cXsn_{p_2,q_2}.
\label{decompSN}
\ee
Upon specifying \eqref{ImZZb} to the attractor point  $t^a=\lambda p^a_0$,
the argument of the exponential in \eqref{decompSN}  simplifies to
\be
\frac{\(2\Im(Z_{\gamma_1}\bZ_{\gamma_2})\)^2}{(pt^2)(p_1t^2)(p_2t^2)}=\frac{\gamma_{12}^2}{(pp_1p_2)}\, .
\ee
As a result, all exponential factors cancel and one remains with
\be
\overline{\mathcal{D}}\whcZ_N=\frac{\sqrt{2\tau_2}}{32\pi\I}\sum_{N_1+N_2=N}\sqrt{p_0^3 NN_1N_2}\,\whcZ_{N_1}\whcZ_{N_2},
\label{holanom-ZN}
\ee
where
\be
\overline{\mathcal{D}}=\tau_2^2\left( \partial_{\bar \tau}
-\frac{\I}{4\pi}\, (\partial_{c_+}+2\pi\I b_+)^2 \right)
\ee
is designed to commute with the Siegel-Narain theta series and, acting on the completion of a mock modular form,
decreases its holomorphic weight by 2.
This agrees with the fact that both sides of \eqref{holanom-ZN} are modular forms of weight $(-\frac32,\hf)$.
Finally, rescaling the partition function as
\be
\whcZ'_N=\frac{\sqrt{ p_0^3}}{\sqrt{N}}\,\whcZ_N,
\ee
one can further simplify the anomaly equation, which becomes
\be
\label{holanomZp}
\overline{\mathcal{D}}\whcZ'_N=\frac{\sqrt{ 2\tau_2}}{32\pi\I}\sum_{N_1+N_2=N} N_1N_2\, \whcZ'_{N_1}\whcZ'_{N_2},
\ee
giving a precise and general version of the conjecture \eqref{naivehol} from \cite{Minahan:1998vr}.

\subsection{Refined partition function}
\label{subsec-pfref}

Given the elegant form of the holomorphic anomaly equation for  the unrefined partition
function \eqref{defZp}, it is natural to ask whether its refined
counterpart
\be
\label{defZrN}
\whcZr_N= \sigma_p \, e^{-S^{\rm cl}_p} \,
\sum_{\mu\in\Lambda^*/\Lambda}\,
\whhr_{p,\mu}\, \thetasn_{p,\mu}
=\sum_{q_a}\sigma_\gamma\,\whhr_{p,\mu}\cXsn_{p,q},
\ee
satisfies a similar equation. Indeed, \eqref{defZrN}
 arises as the first term in  the expansion \eqref{treeFh-flhr} of the refined
instanton generating potential $\cGr$ and coincides up to a trivial factor with the partition function \eqref{defZVWref}
in the local CY case $\CY={\rm Tot}(K_S)$,\footnote{This follows from applying
\eqref{relhh}, \eqref{signfliphVW} and \eqref{thetarel}. Importantly, the two $\mu$-dependent sign factors coming from
the sign flip of the refinement parameter in the generating function and from the change
of the characteristic vector in the theta series cancel each other.
Note also that in contrast to \eqref{relhh}, this relation does not involve any shift of $w$.}
\be
\whcZr_N(w)=\sigma_p \, e^{-S^{\rm cl}_p} \, (-1)^{N +\frac{N(N-1)}{2}\, K_S^2}\, \widehat{Z}_{N}^{\rm VW,ref}(w).
\ee
The main difficulty is again to absorb all exponential factors appearing in the anomaly equation \eqref{exp-derwh-col}
for the refined generating function.
As we now demonstrate, this goal can be achieved using the same star product which has already appeared
in the discussion of the instanton generating potential in section \ref{sec-potential}.

The key observation is that the exponentials $\cXsn_{p,q}$ defined in \eqref{SNseries}
satisfy (for collinear charges and for attractor values of the K\"ahler moduli)
\be
\cXsn_{p_2,q_2}\star\cXsn_{p_1,q_1}=\by^{-\gamma_{12}}e^{-2\pi\tau_2\beta^2 (pp_1p_2)}\cXsn_{p_2,q_2}\cXsn_{p_1,q_1}
\ee
with respect to the star product defined in \eqref{starproduct-alt}.
To reproduce not only the exponential factor, but also the prefactor in \eqref{exp-derwh-col}, we actually need
a bit more complicated property, which reads as
\be
\begin{split}
&
\frac{1}{4\pi^2}\,\Bigl(\p_{\bv^a} \cXsn_{p_2,q_2}\star\p_{\tc_a}\cXsn_{p_1,q_1}- \p_{\tc_a}\cXsn_{p_2,q_2}\star\p_{\bv^a}\cXsn_{p_1,q_1}\Bigr)
\\
&\qquad
=\(\gamma_{12}-\beta(pp_1p_2)\)\by^{-\gamma_{12}}e^{-2\pi\tau_2\beta^2 (pp_1p_2)}\cXsn_{p_2,q_2}\cXsn_{p_1,q_1}.
\end{split}
\ee
Combining it with \eqref{decompSN}, one finds
\be
\begin{split}
\(\gamma_{12}-\beta(pp_1p_2)\)\cXsn_{p,q}=&\,
\frac{\by^{\gamma_{12}}}{4\pi^2}\,
e^{-\pi\I \tau Q_2(\gama_i,\gama_2)}\, e^{2\pi\tau_2\( \frac{\gamma_{12}^2}{(pp_1p_2)}+\beta^2 (pp_1p_2)\)}
\\
&\,
\times
\Bigl(\p_{\bv^a} \cXsn_{p_2,q_2}\star\p_{\tc_a}\cXsn_{p_1,q_1}- \p_{\tc_a}\cXsn_{p_2,q_2}\star\p_{\bv^a}\cXsn_{p_1,q_1}\Bigr).
\end{split}
\label{decompSN-star}
\ee
The factors on the right-hand side of this identity precisely cancel those appearing in the anomaly equation \eqref{exp-derwh-col}
so that one obtains the following holomorphic anomaly equation for the refined partition function
\be
\overline{\mathcal{D}}\whcZr_N=\frac{\I(2\tau_2)^{3/2}}{32\pi^2}\sum_{N_1+N_2=N}
\frac{1}{\sqrt{p_0^3 NN_1N_2}}\,
\Bigl(\p_{\bv^a}\whcZr_{N_2}\star\p_{\tc_a}\whcZr_{N_1}- \p_{\tc_a}\whcZr_{N_2}\star\p_{\bv^a}\whcZr_{N_1}\Bigr).
\ee
Since the star product is modular invariant, both sides of this equation are now modular forms of weight $(-1,\hf)$.
Rescaling the partition function as
\be
\whcZrp_N=\frac{\whcZr_N}{\sqrt{Np_0^3}}\, ,
\ee
the anomaly equation can be further simplified and becomes
\be
\label{holanomZpref}
\overline{\mathcal{D}}\whcZrp_N=\frac{\I(2\tau_2)^{3/2}}{32\pi^2 p_0^3 N}\sum_{N_1+N_2=N}
\Bigl(\p_{\bv^a}\whcZrp_{N_2}\star\p_{\tc_a}\whcZrp_{N_1}- \p_{\tc_a}\whcZrp_{N_2}\star\p_{\bv^a}\whcZrp_{N_1}\Bigr).
\ee
It is straightforward to check that in the (holomorphic) limit $y\to 1$ this equation reduces to \eqref{holanomZp}.
Furthermore, it is expected that the higher terms in the Taylor expansion around $y=1$ satisfy the
holomorphic anomaly equation for the refined topological string proposed in \cite[Eq.(8.16)]{Huang:2013yta}.
It would be very interesting to verify whether this is indeed the case.

\section{Conclusions and future directions}
\label{sec-concl}

In this paper we constructed a natural non-holomorphic completion $\whhr_{p,\mu}(\tau,w)$ \eqref{exp-whhr}
of the generating function of refined BPS invariants on a CY threefold $\CY$, evaluated in
the large volume attractor chamber, and proposed that this completion must be a vector valued Jacobi
form of specific weight and index. Since refined BPS invariants are mathematically well-defined only when $\CY$ admits a $\IC^\times$ action
(which implies that it is non compact), this proposal applies in such cases, which include toric Calabi-Yau threefolds and in particular
local CY geometries of the form ${\rm Tot}(K_S)$, although it is conceivable that it holds even
when the refined invariants are not protected.

In the limit $w\to 0$, this completion reduces to the one constructed in our earlier work \cite{Alexandrov:2018lgp},
allowing to sidestep many of the complications in this construction, which are now relegated to the final step
of  extracting the residue at the pole at $w=0$. In this sense, the proposed completion is natural.
However, unlike in the unrefined case where modularity followed from general S-duality constraints
on the vector multiplet moduli space $\cM_V$, an analogue of this constraint involving refined invariants
is not known at present,  and the modularity of $\whhr_{p,\mu}(\tau,w)$ should be viewed as conjectural.

Strong support for this conjecture comes from the local CY case $\CY={\rm Tot}(K_S)$,
where BPS indices of D-branes supported on the base $S$ are expected to be equal to
VW invariants of $S$. In this case, the  partition function \eqref{defZVWref}
built out of the generating functions is expected to be modular, as a consequence of
S-duality of $\cN=4$ Yang-Mills theory. Unfortunately, it is not known at present how to
derive the non-holomorphic completion from a gauge theory
computation,\footnote{Similar holomorphic anomalies are known for
  topological versions of $\cN=2$, $N_f=4$ and $\cN=2^*$ theory,
and can be understood from the physical path integral in these cases \cite{MooreStringMath2018,ManschotMooretoappear:2019}.}
and the known completions for $U(2)$ and $U(3)$ VW partition functions on $\IP^2$ rely on expressing
the holomorphic generating functions as Appell-Lerch sums and applying
mathematical recipes to find their modular completions.
Remarkably, we have shown that our natural construction reproduces
exactly these modular completions (up to the sign flip discussed in footnote \ref{foot-sign}),
without prior knowledge of the invariants.  Moreover, we have
found that at {\it arbitrary} rank $N$, the holomorphic anomaly equation reproduces the earlier proposal of \cite{Minahan:1998vr}
in the unrefined limit, and have found its generalization \eqref{holanomZpref} for refined invariants.
Clearly, it would be very useful to test these results at higher rank or on other surfaces.

 If the completed generating functions $\whhr_{p,\mu}(\tau,w)$ are indeed modular, then the construction
of \S\ref{sec-potential} produces a natural function on  $\cM_V \times \IC^\times$ (where the second
factor keeps track of the dependence on the chemical potential $y$) which transforms as a Jacobi
form of weight $(-\frac12,\frac12)$ and index 0, generalizing the `instanton generating potential'
from \cite{Alexandrov:2018lgp}. Moreover, this function has an extremely simple representation
\eqref{nonpert-cGr} in terms of solutions to the non-commutative integral equations \eqref{inteqH-star}.
These results suggest that the refinement induces a quantization of the moduli space $\cM_V$
along with its twistor space, such that the modularity of
generating functions of refined invariants might follow from requiring a consistent action of
S-duality on these deformed spaces.
It is also worth noting that the non-commutative integral equations \eqref{inteqH-star}
are reminiscent of \cite{Cecotti:2014wea}, where a non-commutative deformation of the TBA equations from \cite{Gaiotto:2008cd}
involving three parameters $\theta$, $\epsilon_1$ and $\epsilon_2$ was proposed. Comparing
the non-commutativity relation \eqref{starXX} with \cite{Cecotti:2014wea}
suggests that one should identify our parameter $w=\alpha-\tau\beta$ as
\be
\alpha \sim \theta,
\qquad
\beta\sim \epsilon_1.
\ee
We hope to report on a detailed comparison of the two constructions  elsewhere.

Finally, returning to our main motivation, our results open the way to a detailed
understanding of degeneracies of BPS black holes $\cN=2$ in  string theory.   Having characterized
the precise modular properties of generating functions of BPS indices, one can in principle determine
them from the knowledge of their polar coefficients, which could be computed by generalizing
ideas in \cite{Gaiotto:2005rp,Gaiotto:2007cd}. Corrections to the Bekenstein-Hawking area law
could in principle be computed by applying the circle method for the
completed partition functions \cite{Bringmann:2010sd, Bringmann:2018cov}.
It would be very interesting to understand the physical origin
of the coefficient $R_n$ in the non-holomorphic completion, which presumably arises
from a spectral asymmetry in the continuum of the superconformal field theory describing
wrapped five-branes, or alternatively  from boundaries of the moduli space
of anti-selfdual configurations in the gauge theory description.

\section*{Acknowledgements}
We thank Lothar G\"ottsche, Anton Mellit, Gregory Moore, Richard Thomas and Yan Soibelman for discussions or
correspondence, and Sibasish Banerjee for collaboration on the related works \cite{Alexandrov:2016tnf,Alexandrov:2016enp,Alexandrov:2017qhn}.
S.A. and J.M. are grateful to the Institut Henri Poincar\'e for financial support
under the Research in Paris program ``Quantum black holes and mock modular forms",
during which this project was initiated. The research of J.M. is
supported by IRC Laureate Award 15175 ``Modularity in Quantum Field
Theory and Gravity''.

\appendix

\section{Theta series and modularity}
\label{ap-thetamod}

\subsection{Theta series and refinement}
\label{ap-theta}

In this work we consider theta series of the following type
\be
\label{Vignerasth}
\vartheta_{\bfp,\bfmu}(\Phi,\lambda;\tau, \bfv)=\tau_2^{-\lambda/2}
\!\!\!\!
\sum_{{\bfq}\in \Lat+\bfmu+\hf\bfp}\!\!
(-1)^{\bfq\cdot\bfp}\,\Phi(\sqrt{2\tau_2}(\bfq+\bfb))\, \expe{- \tfrac{\tau}{2}\,\bfq^2+\bfq\cdot \bfv},
\ee
where $\bfv=\bfc-\tau\bfb$. Here $\Lat$ is a $d$-dimensional lattice equipped with a bilinear form
$(\bfx,\bfy)\equiv \bfx\cdot\bfy$, where $\bfx,\bfy\in \Lat \otimes \IR$, such that its associated quadratic form
has signature $(n,d-n)$ and is integer valued, i.e. $\bfq^2\equiv \bfq\cdot \bfq\in\IZ$ for $\bfq\in\Lat$.
Furthermore, $\bfp$ is a characteristic vector\footnote{A characteristic vector is
an element  $\bfp\in \Lat$ such that $\bfq\cdot(\bfq+ \bfp)\in 2\IZ$, $\forall \,\bfq \in \Lat$.
For distinct choices of characteristic vectors, the theta series are related by
$\vartheta_{\bfp,\bfmu} = (-1)^{(\bfmu+\frac12\bfp)\cdot(\bfp-\bfp')}
\vartheta_{\bfp',\bfmu+\frac12(\bfp-\bfp')}$. \label{foochar}},
$\bfmu\in\Lat^*/\Lat$ a glue vector, and $\lambda$ an arbitrary integer.
Provided the kernel $\Phi(\bfx)$ satisfies the following differential equation
\be
\label{Vigdif}
\Vop_\lambda\cdot \Phi(\bfx)=0,
\qquad
\Vop_\lambda= \partial_{\bfx}^2   + 2\pi \( \bfx\cdot \pa_{\bfx}  - \lambda\) ,
\ee
which we call Vign\'eras equation,
and subject to suitable decay conditions on $\Phi(\bfx) e^{\pi \bfx^2}$, then
under $SL(2,\IZ)$ transformations
\be\label{SL2Z}
\tau \mapsto \frac{a \tau +b}{c \tau + d} \, ,
\qquad
\begin{pmatrix} \bfc \\ \bfb \end{pmatrix} \mapsto
\begin{pmatrix} a & b \\ c & d  \end{pmatrix}
\begin{pmatrix} \bfc \\ \bfb \end{pmatrix} ,
\qquad \bfv \mapsto \frac{\bfv}{c\tau+d}
\ee
the theta series transforms
as a vector-valued Jacobi form of weight $(\lambda+d/2,0)$ and index $-1/2$ \cite{Vigneras:1977}.
Namely,
\bea
\vartheta_{\bfp,\bfmu}\(\Phi,\lambda; -\frac{1}{\tau}, \frac{\bfv}{\tau}\)
&=&\frac{(-\I\tau)^{\lambda+\frac{n}{2}}}{\sqrt{|\Lat^*/\Lat|}}\,
\expe{\frac14 \,\bfp^2-\frac{\bfv^2}{2\tau} } \sum_{\bfnu\in\Lat^*/\Lat}
\expe{\bfmu\cdot\bfnu}
\vartheta_{\bfp,\bfnu}(\Phi,\lambda;\tau, \bfv),
\nn\\
\vartheta_{\bfp,\bfmu}(\Phi,\lambda;\tau+1, \bfv)
&=&\expe{-\tfrac12 \,(\bfmu+\tfrac12 \bfp)^2}
\vartheta_{\bfp,\bfmu}(\Phi,\lambda;\tau, \bfv).
\label{eq:thetatransforms}
\eea
Note that \eqref{Vignerasth} differs from  the definition used in
\cite{Alexandrov:2012au,Alexandrov:2016tnf,Alexandrov:2017qhn,Alexandrov:2018lgp} by the factor
\be
V_\bfp=\expe{\hf\, \bfv\cdot\bfb}.
\label{Jacfac}
\ee
It changes the index of the theta series so that
$V_\bfp\vartheta_{\bfp,\bfmu}$ transforms as a standard modular form with vanishing index.

We are interested in the case where $\Lat=\oplus_{i=1}^n \Lambda_i$ and $\Lambda_i$ are charge lattices associated to divisors $\cD_i$.
Thus, the charges appearing in the description of the theta series \eqref{Vignerasth} are
of the type $\bfq=(q_1^a,\dots,q_n^a)$, whereas the vectors $\bfb$ and $\bfc$ are taken with $i$-independent components, namely,
$\bfb_i^a=b^a$, $\bfc_i^a=c^a$ for $i=1,\dots, n$, where $b^a$ and $c^a$ are the integrals of the
Neveu-Schwarz and Ramond-Ramond fields on the basis of two-cycles  $\omega^a\in H_2(\CY,\IZ)$.
The lattices $\Lambda_i$ carry the bilinear forms $\kappa_{i,ab}=\kappa_{abc}p_i^c$ which are all of signature $(1,b_2(\CY)-1)$.
This induces a natural bilinear form on $\Lat$
\be
\bfx\cdot\bfy=\sum_{i=1}^n (p_ix_iy_i)
\label{biform}
\ee
of signature $(n,nb_2-n)$.

Let us now study the effect of inserting a factor $y^{\sum_{i<j} \gamma_{ij}}$ into the
summand of the theta series.
Defining $y=\expe{w}$ and the $nb_2$-dimensional vector $\ptt$ with components
\be
\ptt_i^a=-\sum_{j<i}p_j^a+\sum_{j>i}p_j^a,
\label{vecptt}
\ee
this factor can be rewritten as $\expe{w \bfq\cdot\ptt}$. Hence, it
can be absorbed into a redefinition of the Jacobi variable,
$\hbfv=\bfv+w\ptt$,
\be
y^{\sum_{i<j} \gamma_{ij}}\,\expe{\bfq\cdot \bfv}
=
\expe{\bfq\cdot \hbfv} .
\ee
This observation suggests that under $SL(2,\IZ)$ transformations the parameter $w$ must transform
like $\bfv$, i.e.
\be
w\to\frac{w}{c\tau+d}\, .
\label{trans-w}
\ee
Equivalently,  upon decomposing $w$ as $w=\alpha-\tau\beta$,
the real parameters $\alpha$ and $\beta$ must transform as a doublet
\be
\begin{pmatrix} \alpha \\ \beta \end{pmatrix} \mapsto
\begin{pmatrix} a & b \\ c & d  \end{pmatrix}
\begin{pmatrix} \alpha \\ \beta \end{pmatrix},
\label{transab}
\ee
and similarly for the vectors
\be
\hbfb=\bfb+\beta\ptt,
\qquad
\hbfc=\bfc+\alpha\ptt.
\label{yshift}
\ee
Indeed, it follows from the theorem stated at the beginning of this section that
provided the kernel $\Phi(\bfx)$ satisfies Vign\'eras equation \eqref{Vigdif} and suitable
decay conditions,
then the following {\it refined} theta series
\bea
&&
\vthref_{\bfp,\bfmu}(\Phi,\lambda;\tau, \bfv,y)=
\vth_{\bfp,\bfmu}(\Phi,\lambda;\tau, \hbfv)
\label{thetaref}\\
&=&\tau_2^{-\lambda/2}
\!\!\!\!
\sum_{{\bfq}\in \Lat+\bfmu+\hf\bfp}\!\!
(-1)^{\bfq\cdot\bfp}\,\Phi\(\sqrt{2\tau_2}\(\bfq+\bfb+\frac{\Re(\log y )}{2\pi\tau_2}\,\ptt\)\)
\,y^{\sum_{i<j} \gamma_{ij}}\, \expe{- \tfrac{\tau}{2}\,\bfq^2+\bfq\cdot \bfv}
\nn
\eea
transforms as a vector-valued Jacobi form of weight $(\lambda+nb_2(\CY)/2,0)$ and  index
$-1/2$.
Note that modularity requires the $y$-dependent shift in the argument of the kernel which
leads to various important consequences.

\subsection{Generalized error functions}
\label{ap-generr}

In this appendix we recall the definition and some properties of the generalized error functions
introduced in \cite{Alexandrov:2016enp,Nazaroglu:2016lmr}
and revisited from a more conceptual viewpoint in \cite{kudla2016theta}, and prove a new identity which is used
in the study of the instanton generating potential.

First, we define the generalized (complementary) error functions
\bea
E_n(\cM;\vu)&=& \int_{\IR^n} \de \vu' \, e^{-\pi(\vu-\vu')^{\rm tr}(\vu-\vu')} \sign(\cM^{\rm tr} \vu'),
\label{generr-E}
\\
M_n(\cM;\vu)&=&\(\frac{\I}{\pi}\)^n |\det\cM|^{-1} \int_{\IR^n-\I \vu}\de^n z\,
\frac{e^{-\pi \vz^{\rm tr} \vz -2\pi \I \vz^{\rm tr} \vu}}{\prod(\cM^{-1}\vz)},
\label{generr-M}
\eea
where $\vz=(z_1,\dots,z_n)$ and $\vu=(u_1,\dots,u_n)$ are $n$-dimensional vectors, $\cM$ is $n\times n$ matrix of parameters,
and we used the shorthand notations $\prod \vz=\prod_{i=1}^n z_i$ and $\sign(\vu)=\prod_{i=1}^n \sign(u_i)$.
The detailed properties of these functions can be found in \cite{Nazaroglu:2016lmr}.
Here we just note that the information carried by $\cM$ is highly redundant. For instance, the generalized error functions
are invariant (up to sign) under rescaling of its columns. As a result, for $n=1$ the dependence on $\cM$ drops out, whereas
for $n=2$ (respectively $n=3$) they can always be expressed in terms of functions
parametrized only by one (respectively 3) parameters, e.g.
\be
\label{defE23}
E_2(\alpha;u_1,u_2):=  E_2\left(
\( \begin{array}{cc} 1 & 0 \\ \alpha & 1 \end{array}\); \vu\right),
\quad\
E_3(\alpha,\beta,\gamma;u_1,u_2,u_3):=  E_3\left(\(\begin{array}{ccc} 1 & 0  & 0 \\ \alpha & 1 & 0  \\
\beta & \gamma & 1 \end{array}\); \vu\right),
\ee
and similarly for $M_{2}$ and $M_3$. This parametrization will be used to express the explicit results
for modular completions in appendix \ref{ap-vwp2}. In the case of vanishing arguments one has \cite[Eq.(3.23)]{Alexandrov:2016tnf}
\be
E_2(\alpha;0,0)=\frac{2}{\pi}\, {\rm Arctan}(\alpha).
\label{E2zero}
\ee

Next, we define the boosted versions of the generalized error functions.
To write them down, let us consider $d\times n$ matrix $\cV$ which can be viewed as a collection of $n$ vectors,
$\cV=(\bfv_1,\dots,\bfv_n)$. We assume that these vectors span a positive definite subspace,
i.e. $\cV^{\rm tr}\cdot\cV$ is a positive definite matrix.
Let $\cB$ be $n\times d$ matrix whose rows define an orthonormal basis for this subspace.
Then we define the {\it boosted} generalized error functions
\be
\Phi_n^E(\cV;\bfx)=E_n(\cB\cdot \cV;\cB\cdot \bfx),
\qquad
\Phi_n^M(\cV;\bfx)=M_n(\cB\cdot \cV;\cB\cdot \bfx).
\label{generrPhiME}
\ee
Both types of generalized error functions can be shown to be independent of $\cB$ and
solve Vign\'eras equation \eqref{Vigdif}  {with $\lambda=0$}.
However, whereas $\Phi_n^E(\{\bfv_i\};\bfx)$ are smooth functions of $\bfx\in \IR^{n b_2(\CY)}$, asymptotic to
$\prod_{i=1}^n \sign (\bfv_i,\bfx)$, the complementary functions $\Phi_n^M(\{\bfv_i\};\bfx)$
are smooth only away from the real-codimension-1  loci  on $\IR^{n b_2}$, and are
exponentially suppressed for $|\bfx|\to\infty$.
The $\Phi_n^E$'s  provide the kernels for
modular completions of indefinite theta series, while the $\Phi_n^M$'s  provide the non-holomorphic
terms that must be added to reach that modular completions.

Note that the generalized error functions can be lifted to solutions of Vign\'eras equation with $\lambda\in\Nint$
by using the differential operator
\be
\cD(\bfv)=\bfv\cdot\(\bfx+\frac{1}{2\pi}\,\p_\bfx\)
\label{defcDif}
\ee
acting on the functions on $\IR^{n b_2(\CY)}$.
Its main feature is that it maps solutions of Vign\'eras equation with parameter $\lambda$ to another solution with $\lambda+1$.

An important fact is that the functions $\Phi_n^E$ can be expressed as linear combinations of products of
$\Phi_m^M$ and $n-m$ sign functions with $0\leq m\leq n$, {and vice-versa} \cite{Nazaroglu:2016lmr}:
\be
\Phi_n^E(\{\bfv_i\};\bfx)=\sum_{\cI\subseteq \Zv_n}\Phi_{|\cI|}^M(\{\bfv_i\}_{i\in\cI};\bfx)
\prod_{j\in \Zv_n\setminus \cI}\sign(\bfv_{j\perp \cI},\bfx),
\label{expPhiE}
\ee
where the sum goes over all possible subsets (including the empty set) of the set $\Zv_{n}=\{1,\dots,n\}$, $|\cI|$ is the cardinality of $\cI$,
and $\bfv_{j\perp \cI}$ denotes the projection of $\bfv_j$ orthogonal to the subspace spanned by $\{\bfv_i\}_{i\in\cI}$.
However, in the derivation of the integral form of the instanton generating potential we will need a slightly modified version of
this decomposition where $\bfx$ in the argument of the sign functions is shifted by a certain vector.
To state the corresponding result, let us introduce a modification of the boosted complementary generalized error function
replacing the function $M_n$ in its definition by a similar function with an additional shift of the integration contour
\bea
\hPhi_n^M(\cV;\bfx,\bfchi)&=&\hM_n(\cB\cdot \cV;\cB\cdot \bfx,\cB\cdot \bfchi),
\\
\hM_n(\cM;\vu,\vchi)&=&\(\frac{\I}{\pi}\)^n |\det\cM|^{-1} \int_{\IR^n-\I (\vu-\vchi)}\de^n z\,
\frac{e^{-\pi \vz^{\rm tr} \vz -2\pi \I \vz^{\rm tr} \vu}}{\prod(\cM^{-1}\vz)}\, .
\eea
Then we have

\begin{proposition}
\be
\Phi_n^E(\{\bfv_i\};\bfx)=\sum_{\cI\subseteq \Zv_n}\hPhi_{|\cI|}^M(\{\bfv_i\}_{i\in\cI};\bfx,\bfchi)
\prod_{j\in \Zv_n\setminus \cI}\sign(\bfv_{j\perp \cI},\bfx-\bfchi),
\label{expPhiE-mod}
\ee
\label{prop-decomp}
\end{proposition}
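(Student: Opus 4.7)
The plan is to prove that the right-hand side of \eqref{expPhiE-mod} is independent of the auxiliary parameter $\bfchi$; once this is established, the identity follows by specializing to $\bfchi=0$, where $\hPhi^M_{|\cI|}(\cdot;\bfx,0)$ reduces to $\Phi^M_{|\cI|}(\cdot;\bfx)$ and the sign factors recover their unshifted arguments. Then \eqref{expPhiE-mod} collapses to the known decomposition \eqref{expPhiE}, whose right-hand side equals $\Phi_n^E(\{\bfv_i\};\bfx)$.

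First, I would verify that both sides are well-defined smooth functions of $\bfx$ away from the loci where the sign factors jump. The contour $\IR^n-\I(\vu-\vchi)$ in the definition of $\hM_n$ is chosen precisely to avoid the poles of the integrand on the real hyperplanes $(\cM^{-1}\vz)_i=0$, and these hyperplanes correspond to the discontinuity loci of the sign factors in the term labelled by $\cI$. For any fixed $\bfchi$, the jumps of the sign factors and those of $\hPhi^M_{|\cI|}$ coming from poles being crossed must compensate each other, giving a smooth total.

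Second, I would compute $\partial_\bfchi$ applied to the right-hand side. Each sign factor $\sign(\bfv_{j\perp\cI},\bfx-\bfchi)$ contributes a Dirac mass supported on the hyperplane $\bfv_{j\perp\cI}\cdot(\bfx-\bfchi)=0$. Simultaneously, $\partial_\bfchi\hPhi^M_{|\cI|}$ is computed by shifting the integration contour in the definition \eqref{generr-M}; by Cauchy's theorem, the shift in the $k$-th direction (for $k\in\cI$) picks up a residue at the pole $(\cM^{-1}\vz)_k=0$. A direct residue computation, most transparent after the change of variables $\vw=\cM^{-1}\vz$, shows that this residue produces a Gaussian factor supported on the hyperplane $\bfv_{k\perp(\cI\setminus\{k\})}\cdot(\bfx-\bfchi)=0$, multiplied by a lower-rank modified complementary error function $\hPhi^M_{|\cI|-1}$ built from the reduced system $\{\bfv_i\}_{i\in\cI\setminus\{k\}}$ projected orthogonally to $\bfv_k$.

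Third, I would pair up the two families of contributions. For each pair $(\cJ,k)$ with $\cJ\subseteq\Zv_n$ and $k\notin\cJ$, the Dirac contribution from differentiating the factor $\sign(\bfv_{k\perp\cJ},\bfx-\bfchi)$ in the term labelled by $\cJ$ is matched against the residue contribution coming from differentiating $\hPhi^M_{|\cJ\cup\{k\}|}$ in the term labelled by $\cI=\cJ\cup\{k\}$. The compatibility of iterated orthogonal projections, namely that the orthogonal projection of $\bfv_{j\perp\cJ}$ onto the complement of $\bfv_{k\perp\cJ}$ coincides with $\bfv_{j\perp(\cJ\cup\{k\})}$, together with the normalization $(\I/\pi)^n|\det\cM|^{-1}$ in \eqref{generr-M} and the contour orientation, matches the two contributions up to a common sign, so that they cancel pairwise in the sum over $\cI$. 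Hence $\partial_\bfchi(\mathrm{RHS})=0$, and setting $\bfchi=0$ completes the proof. The main technical obstacle is the residue computation in the second step combined with the sign and normalization bookkeeping of the third: one must verify that the residue of the $n$-dimensional integrand at $(\cM^{-1}\vz)_k=0$ reproduces exactly the $(n-1)$-dimensional integral defining $\hPhi^M_{|\cI|-1}$ on the reduced, orthogonally projected vector system, with the correct Gaussian prefactor on the residue hyperplane. Induction on $n$ would likely streamline the combinatorics, reducing the inductive step to a one-dimensional contour deformation.
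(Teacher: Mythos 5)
Your overall strategy coincides with the paper's: show that the right-hand side of \eqref{expPhiE-mod} is independent of $\bfchi$ and then specialize to $\bfchi=0$, where the identity reduces to the known decomposition \eqref{expPhiE}. Where you differ is in how the $\bfchi$-independence is established. You propose to compute $\partial_\bfchi$ of the right-hand side as a distribution, pairing the Dirac contribution from the factor $\sign(\bfv_{k\perp\cJ},\bfx-\bfchi)$ in the term labelled by $\cJ$ against the residue picked up by $\hPhi^M_{|\cJ\cup\{k\}|}$ when its contour crosses the pole indexed by $k$, and then verifying cancellation by an explicit residue and normalization computation. This would work --- the loci match because $(\cM^{-1}\vu)_k\propto(\bfv_{k\perp(\cI\setminus\{k\})},\bfx-\bfchi)$, and the projection compatibility you invoke is correct --- but it essentially re-derives, in the shifted setting, the wall-crossing cancellations that already underlie \eqref{expPhiE}. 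The paper sidesteps this entirely with a soft argument: (i) away from the discontinuity loci the contours can be deformed freely, so the right-hand side is locally constant in $\bfchi$; (ii) since the integrands do not depend on $\bfchi$, the shift merely translates the discontinuity loci of the signs and of the contour integrals by the same amount without changing the magnitude of any individual jump; hence the cancellation of all jumps, which is guaranteed at $\bfchi=0$ by the smoothness of $\Phi^E_n$ via \eqref{expPhiE}, transports verbatim to arbitrary $\bfchi$. Local constancy plus smoothness in $\bfchi$ then gives global constancy with no residue bookkeeping at all. Your route is more self-contained at the level of the singular terms but pays for it with exactly the computation you flag as the main obstacle; if you pursue it, note that your step two ("the jumps must compensate") is not an independent observation but precisely what your step three must prove, so the logical burden sits entirely on the residue calculation.
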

\begin{proof}
First, we note that the dependence on $\bfchi$ of the right-hand side of \eqref{expPhiE-mod} is locally constant
because the integration contours can be safely deformed provided they do not cross the poles of the integrands.
Next we note that the smoothness of $\Phi_n^E$ implies that all discontinuities due to signs and contour integrals in \eqref{expPhiE} cancel.
Then the same should be true for the right-hand side of \eqref{expPhiE-mod} as well. Indeed, the shift induced by $\bfchi$ just changes
the position of the discontinuities of both signs and integrals in the same way, and it does not affect the jumps of individual terms
since the integrands are independent of $\bfchi$. It is clear that smoothness in $\bfx$ also implies the smoothness in $\bfchi$.
Combined with the above fact that the dependence on $\bfchi$ is locally constant,
one obtains that the resulting function is actually constant in $\bfchi$ and hence coincides with $\Phi_n^E$.
\end{proof}

\subsection{Matrix of parameters}
\label{ap-matrix}

The main building blocks of the construction proposed in this work are the boosted generalized error functions $\Phi^E_{n-1}(\{ \bfv_{\ell}\};\bfx)$
where the vectors $\bfv_\ell$ are defined in \eqref{defbfvk}.
In this appendix we express them through the original generalized error functions \eqref{generr-E}.

According to \eqref{generrPhiME}, we should find the matrix $\cB$ representing an orthonormal basis in the subspace spanned by $\bfv_\ell$
and evaluate the scalar products $\cB\cdot \cV$ and $\cB\cdot\bfx$. Note that the vectors $\bfv_\ell$ coincide with the vectors $\bfv_e$ \eqref{defue}
computed for the {trivial} unrooted tree
$\bullet\!\mbox{---}\!\bullet\!\mbox{--}\cdots \mbox{--}\!\bullet\!\mbox{---}\!\bullet$,
with vertices labelled by charges $\gamma_1,\dots,\gamma_n$ consecutively.
In \cite[Appendix E]{Alexandrov:2018lgp} it was shown that for any set of vectors defined by an unrooted tree, an orthonormal basis
can be constructed from a rooted ordered binary tree with leaves labelled by the charges,
which is derived in a certain way from the initial unrooted tree.
Namely \cite[Lemma 2]{Alexandrov:2018lgp}:\footnote{That Lemma was proven actually for the moduli dependent vectors $\bfu_\ell$.
However, it is easy to see that the same results apply to $\bfv_\ell$ upon replacement of vectors $\bfu_{ij}$ by $\bfv_{ij}$
and combinations $(p_it^2)(p_jt^2)(p_kt^2)$ by $(p_ip_jp_k)$.}
\be
\begin{split}
\cB=&\, \(\frac{\tbfv_1}{\sqrt{\Delta_1}},\dots, \frac{\tbfv_{n-1}}{\sqrt{\Delta_{n-1}}}\)^{\rm tr},
\\
\cV=&\, \Bigl(\sqrt{\Delta_1}\bfv_1,\dots, \sqrt{\Delta_{n-1}}\bfv_{n-1}\Bigr),
\end{split}
\label{matrix-gener}
\ee
where
\be
\begin{split}
\tbfv_k=&\,\sum_{i\in\cI_{\Lv{v_k}}}\sum_{j\in\cI_{\Rv{v_k}}}\bfv_{ij},
\\
\Delta_k=&\,\tbfv_k^2=(p_{v_k} p_{\Lv{v_k}}p_{\Rv{v_k}}),
\end{split}
\label{dataechoice}
\ee
$\Lv{v}$, $\Rv{v}$ are the two children of the vertex $v$, and $\cI_v$ is the set of leaves which are descendants of $v$.
In our case, one can choose the binary tree to be as in Fig. \ref{fig-basis0}.
Then one finds
\be
\begin{split}
\tbfv_k\cdot \bfv_\ell=&\,\left\{\begin{array}{cc}
\sum_{j=1}^{\ell}(p p_j p_{k+1}),
&\ k\ge \ell,
\\
0, & \ k<\ell,
\end{array}\right.
\\
\Delta_k=&\,\sum_{i=1}^k \sum_{j=1}^{k+1}(p_i p_j p_{k+1}),
\end{split}
\label{vectorstree}
\ee
while the matrix of parameters is lower triangular, given by
\be
\cM_{k\ell}=\left\{\begin{array}{cc}
 \sqrt{\frac{\Delta_\ell}{\Delta_k}} \, \tbfv_k\cdot \bfv_\ell,
&\ k\ge \ell,
\\
0, & \ k<\ell.
\end{array}\right.
\label{matM}
\ee
For the special case of equal charges where $p_i=p_0$ for $i=1,\dots,n$, relevant for the discussion around \eqref{EnM0},
the matrix reduces to
\be
\cM^{(0)}_{k\ell}=\left\{\begin{array}{cc}
n \ell \sqrt{\frac{\ell(\ell+1)}{k(k+1)}}\, p_0^3,
&\ k\ge \ell,
\\
0, & \ k<\ell.
\end{array}\right.
\label{matM0}
\ee

\lfig{The rooted binary tree encoding the orthonormal basis $\cB$ and corresponding to the unrooted tree of trivial topology shown at the bottom.}
{basis0}{7.9cm}{fig-basis0}{-1cm}

\section{Relevant functions}
\label{ap-E}

In this section, we provide the definition of various functions determining the completion $\whh_{p,\mu}$
and the theta series decomposition of the instanton generating potential $\cG$.

In fact, all these functions are uniquely determined by one set of functions $\gf_n(\{\gama_i,c_i\})$.
To define {those}, we take $\IT_{n,m}^\ell$ to be the set of {\it marked} unrooted labelled trees with $n$ vertices and $m$ marks assigned to vertices.
Let $m_\ver\in \{0,\dots m\}$ be the number of marks carried by the vertex $\ver$, so that $\sum_\ver m_\ver=m$.
Furthermore, the vertices are decorated by charges from the set $\{\gamma_1,\dots,\gamma_{n+2m}\}$ such that a vertex $\ver$ with
$m_\ver$ marks carries $1+2m_\ver$ charges $\gamma_{\ver,s}$, $s=1,\dots,1+2m_\ver$ and we set $\gamma_\ver=\sum_{s=1}^{1+2m_\ver}\gamma_{\ver,s}$.
Given a tree $\cT\in \IT_{n,m}^\ell$, we denote  the set of its edges by $E_{\cT}$, the set of vertices by $V_{\cT}$, and
the source and target vertex of an edge $e$ by $s(e)$ and $t(e)$, respectively.\footnote{The orientation of edges on a given tree
can be chosen arbitrarily, the final result does not depend on this choice.}
Then $\gf_n$ is given by a sum over marked unrooted labelled trees as follows \cite[Eq.(5.27)]{Alexandrov:2018lgp}
\be
\gf_n(\{\gama_i,c_i\})=\frac{(-1)^{n-1+\sum_{i<j} \gamma_{ij} }}{2^{n-1} n!}
\sum_{m=0}^{[(n-1)/2]}\sum_{\cT\in\, \IT_{n-2m,m}^\ell}
\prod_{\ver\in V_\cT}\tcV_{m_\ver}(\{\gama_{\ver,s}\})
\prod_{e\in E_{\cT}}\gamma_{s(e) t(e)}\,\sign(\cs_e),
\label{defDf-gen}
\ee
where $\cs_k=\sum_{i=1}^k c_i$  and
\be
\tcV_{m}(\{\gama_{s}\})=\sum_{\cT'\in\, \IT_{2m+1}^\ell} a_{\cT'}\prod_{e\in E_{\cT'}}\gamma_{s(e)t(e)}.
\label{deftcV-mw}
\ee
Here for each tree $\cT$ we introduced rational coefficients $a_\cT$ determined recursively by the relation
\be
a_\cT=\frac{1}{n_\cT}\sum_{\ver\in V_\cT} \epsilon_\ver\prod_{s=1}^{n_\ver} a_{\cT_s(\ver)},
\label{res-aT}
\ee
where $n_\cT$ is the number of vertices, $n_\ver$ is the valency of the vertex $\ver$,
$\cT_s(\ver)$ are the trees obtained from $\cT$ by removing this vertex, and $\epsilon_\ver$ is the sign determined
by the choice of orientation of edges,
$
\epsilon_\ver=(-1)^{n_\ver^+}
$
with $n_\ver^+$ being the number of incoming edges at the vertex.
Finally, we use the following definition of the sign function
\be
\sgn(x)=\left\{  \begin{array}{cl}  -1, &\quad x<0, \\ 0, & \quad x=0, \\ 1, &\quad x>0.       \end{array}  \right.
\label{defsign}
\ee

Given $\gf_n$, all other functions can be obtained via the following procedure:
\begin{itemize}
\item\label{procfun}
Setting the stability parameters to the attractor values, $c_i=\beta_{ni}$,
one obtains moduli-independent functions $\cEf_n(\{\gama_i\})$, see \eqref{rel-gE}.

\item
Dividing both $\gf_n$ and $\cEf_n$ by a factor $\frac{(-1)^{\sum_{i<j} \gamma_{ij} }}{(\sqrt{2\tau_2})^{n-1}}$,
one finds that the resulting functions depend on D2-brane charges, $\tau_2$ and the real part $b^a$ of the \kahler moduli
only through the combinations
\be
x_i^a=\sqrt{2\tau_2}(\kappa_i^{ab} q_{i,b}+b^a) ,
\label{argx}
\ee
where $\kappa_i^{ab}$ is the inverse of the quadratic form $\kappa_{i,ab} = \kappa_{abc} p_i^c$.
Therefore, they can be viewed as kernels of theta series of the type considered in appendix \ref{ap-theta}
with $x_i^a$ being the components of $nb_2(\CY)$-dimensional vector $\bfx$.

\item
By adding contributions exponentially suppressed at large $\bfx$, these kernels can be promoted to
smooth solutions of Vign\'eras equation \eqref{Vigdif} {with $\lambda=0$},
which we call $\tcEPhi_n$ and $\cEPhi_n$, respectively.

\item
Finally, restoring the factor $\frac{(-1)^{\sum_{i<j} \gamma_{ij} }}{(\sqrt{2\tau_2})^{n-1}}$, one defines $\cE_n$ by
\be
\cE_n(\{\gama_i\},\tau_2)=
\frac{(-1)^{\sum_{i<j} \gamma_{ij} }}{(\sqrt{2\tau_2})^{n-1}}\, \cEPhi_n(\bfx).
\label{rescEnPhi}
\ee
\end{itemize}

To present the results for $\tcEPhi_n$ and $\cEPhi_n$ following from \eqref{defDf-gen},
we have to define several sets of $nb_2(\CY)$-dimensional vectors.
The two basic sets are given by
\be
\begin{split}
(\bfv_{ij})_k^a=&\, \delta_{ki} p_j^a-\delta_{kj} p_i^a \qquad\qquad\quad\
\mbox{such that} \quad \bfv_{ij}\cdot\bfx=(p_ip_j(x_i-x_j)),
\\
(\bfu_{ij})_k^a=&\, \delta_{ki}(p_jt^2)t^a-\delta_{kj} (p_it^2)t^a \quad
\mbox{such that} \quad
\bfu_{ij}\cdot\bfx=(p_jt^2)(p_ix_it)-(p_it^2)(p_jx_jt),
\end{split}
\label{defvij}
\ee
where $t^a=\Im z^a$ and the bilinear form is defined in \eqref{biform}.
For $\bfx$ as in \eqref{argx}, $\bfv_{ij}\cdot\bfx=\sqrt{2\tau_2}\gamma_{ij}$ and
$\bfu_{ij}\cdot\bfx=-2\sqrt{2\tau_2}\Im\[Z_{\gamma_i}\bZ_{\gamma_j}\]$.
Furthermore, for a tree $\cT\in\IT_{n,m}^\ell$, denote by $\cT_e^s$ and $\cT_e^t$ the two disconnected trees obtained from
$\cT$ by removing the edge $e$.
Then we introduce another two sets of vectors
\be
\bfv_e=\sum_{i\in V_{\cT_e^s}}\sum_{j\in V_{\cT_e^t}}\bfv_{ij},
\qquad
\bfu_e=\sum_{i\in V_{\cT_e^s}}\sum_{j\in V_{\cT_e^t}}\bfu_{ij}.
\label{defue}
\ee

With these definitions, one has
\cite[Eq.(5.32)]{Alexandrov:2018lgp}
\be
\begin{split}
\cEPhi_n(\bfx)=&\,
\frac{1}{2^{n-1} n!}\sum_{m=0}^{[(n-1)/2]}
\sum_{\cT\in\, \IT_{n-2m,m}^\ell}
\[\prod_{\ver\in V_\cT}\cD_{m_\ver}(\{\gama_{\ver,s}\})\]
\\
&\, \times \[\prod_{e\in E_\cT} \cD(\bfv_{s(e) t(e)})\]
\Phi^E_{n-2m-1}(\{ \bfv_e\};\bfx),
\end{split}
\label{rescEn}
\ee
where
\be
\cD_{m}(\{\gama_s\})=\sum_{\cT'\in\, \IT_{2m+1}^\ell} a_{\cT'}\prod_{e\in E_{\cT'}}\cD(\bfv_{s(e)t(v)})
\label{defcDcT}
\ee
is a differential operator constructed from \eqref{defcDif} and $\Phi^E_n$ are (boosted) generalized error functions reviewed
in appendix \ref{ap-generr}.
The functions $\tcEPhi_n(\bfx)$ are given by the same expression with the vectors $\bfv_e$
appearing as parameters in $\Phi^E_{n-2m-1}$ replaced by $\bfu_e$.

\medskip

Note that in \cite{Alexandrov:2018lgp} it was shown that all contributions of trees with a non-zero number of marks {remarkably} cancel
in the sum over Schr\"oder trees like \eqref{solRn} or \eqref{soliterg}.
Therefore, we could omit them from the very beginning arriving at a simpler set of functions $\gf_n$, $\cE_n$, $\tcEPhi_n$ and $\cEPhi_n$.
However, it is the function \eqref{defDf-gen} with contributions of marks included
that is reproduced in the  limit $y\to 1$ of $\gref_n$ defined in \eqref{whgF}.

\section{Refined instanton generating potential}
\label{ap-G}

In this appendix we rewrite the theta series decomposition \eqref{treeFh-flhr} of the refined instanton generating potential
as a sum of iterated integrals of the same type which arise in the unrefined case.
To this end, we {retrace the steps} taken in \cite{Alexandrov:2018lgp}, which allowed to rewrite the unrefined potential
as in \eqref{treeFh-flh}.

First, we rewrite $\cGr$ as an expansion in the holomorphic generating functions $\hr_{p_i,\mu_i}$.
This changes the kernels of the theta series, which now fail to be modular due to the modular anomaly of $\hr_{p,\mu}$.
The result (proven below in \S\ref{ap-proofG})\footnote{A similar statement in the unrefined case was stated as Conjecture 1 in \cite{Alexandrov:2018lgp}.
Here we prove this claim in a more general situation.}
is given by
\be
\cGr=\frac{1}{2\pi\sqrt{\tau_2}}\sum_{n=1}^\infty\[\prod_{i=1}^{n}
\sum_{p_i,\mu_i}\sigma_{p_i}\hr_{p_i,\mu_i}\]
e^{-S^{\rm cl}_p}y^{-\beta m(p)} V_\bfp\,\vthref_{\bfp,\bfmu}\bigl(\Phi^{\rm ref}_{n},-1\bigr),
\label{treeFh-r}
\ee
where
\be
\Phi^{{\rm ref}}_n(\bfx)=\sum_{n_1+\cdots n_m=n \atop n_k\ge 1}  \intPhi_m(\bfx')
\prod_{k=1}^m \trPhi_{n_k}(x_{j_{k-1}+1},\dots,x_{j_k}).
\label{totker-ref}
\ee
The first factor in \eqref{totker-ref},
\be
\intPhi_n(\bfx)=
\frac{\intPhi_1(x)}{2^{n-1}}\,
\hPhi^M_{n-1}(\{ \bfu_\ell\};\bfx,\sqrt{2\tau_2}\beta\ptt),
\label{Phin-final}
\ee
is constructed from the function \eqref{defPhi1-main} and the modified version of the complementary
error functions introduced in appendix \ref{ap-generr}.
In the above formula it appears with the index $m$ equal to the number of {parts in the
partition of $n$}.
Correspondingly, it depends on $\bfx'$ and $\bfp'$ (the later dependence is not indicated explicitly)
which are both $mb_2(\CY)$-dimensional vectors with components
\be
p'^a_k=\sum_{i=j_{k-1}+1}^{j_k}p^a_i,
\qquad
x'^a_k=\kappa'^{ab}_k\sum_{i=j_{k-1}+1}^{j_k} \kappa_{i,bc} x^c_i,
\label{defprimevar}
\ee
where $\kappa'_{k,ab}=\kappa_{abc}p'^c_k$ and $j_k$ are defined below \eqref{defFref}.
The other factors in \eqref{totker-ref} are given by
\be
\begin{split}
\trPhi_{n}(\bfx)
=&\,
(-y)^{-\sum_{i<j} \gamma_{ij} }
\sum_{T\in\IT_n^{\rm S}}(-1)^{n_T-1} \(\gfr_{v_0}-\cErf_{v_0}\)\prod_{v\in V_T\setminus{\{v_0\}}}\cErf_{v},
\end{split}
\label{kerPhiintr}
\ee
where $\bfx$ is related to charges via \eqref{argxref}. Comparing with \eqref{soliterg-tr}, (the symmetrization of) these functions
can be recognized as a rescaled version of the refined tree index relating
the refined DT and MSW invariants. Namely,
\be
\label{Omsumtreeref}
\bOm(\gamma,z^a,y) =
\sum_{\sum_{i=1}^n \gamma_i=\gamma}
\frac{(-y)^{\sum_{i<j} \gamma_{ij} }}{(y-y^{-1})^{n-1}}\,\trPhi_{n}(\bfx)\,
\prod_{i=1}^n \bOmMSW(\gamma_i,y),
\ee
where the symmetrization is ensured by the sum over charges.
Note that the power of $y-y^{-1}$ disappears once this relation is rewritten in terms
of the generating functions \eqref{defchimur} and \eqref{defhDTr}.

Given the relation \eqref{Omsumtreeref}, $\cGr$ can be rewritten as an expansion in the generating functions of refined DT invariants
$\hrDT_{p_i,q_i}$. It is easy to see that such expansion is given by
\be
\cGr=\frac{1}{2\pi}\sum_{n=1}^\infty\[\prod_{i=1}^{n}
\sum_{p_i,q_i}\sigma_{p_i,q_i}\hrDT_{p_i,q_i}  \cXt_{p_i,q_i} \]
y^{-\beta m(p)+\sum_{i<j} \gamma_{ij} }\intPhi_n(\bfx),
\label{expGrDT}
\ee
where $\sigma_{p,q}\equiv\sigma_\gamma =\expe{\hf\, p^a q_a}\sigma_p$ is the quadratic refinement specified for our set of charges and
\be
\cXt_{p,q} = e^{-S^{\rm cl}_p}\,
\expe{- \frac{\tau}{2}\,(q+b)^2+c^a (q_a +\haf (pb)_a)}
\label{Xthetapq}
\ee
is a combination of three contributions evaluated for a single charge:
the classical D3-brane action, the exponential defining the theta series \eqref{Vignerasth},
and the factor $V_p$ \eqref{Jacfac}.

Next, we use the result proven in appendix E of \cite{Alexandrov:2018lgp} which states that
for any unrooted labelled tree $\cT$ one has the following identity
\be
\frac{\intPhi_1(x)}{2^{n-1}}\, \Phi^M_{n-1}(\{ \bfu_e\};\bfx)=
\frac{\I^{n-1}}{(2\pi)^n}\[\prod_{i=1}^n\int_{z_i^\star+\IR}\de z_i \, W_{p_i}(x_i,z_i)\]
\frac{1}{\prod_{e\in E_\cT}\(z_{s(e)}-z_{t(e)}\)}\, ,
\ee
where
\be
W_{p}(x,z)=e^{-2\pi\tau_2 z^2(pt^2)-2\pi\I\sqrt{2\tau_2}\, z\, (pxt)}
\ee
and
$
z_i^\star=- \frac{\I\,(p_ix_i t)}{\sqrt{2\tau_2}(p_it^2)}
$
is the saddle point {governing} the integral over $z_i$.
On the left hand side the data about $\cT$ are encoded in the set of vectors $\bfu_e$ \eqref{defue}.
In our case this set is given by $\bfu_\ell$ defined in \eqref{shiftS}
which can be seen as vectors $\bfu_e$ for the {trivial} unrooted tree
$\bullet\!\mbox{---}\!\bullet\!\mbox{--}\cdots \mbox{--}\!\bullet\!\mbox{---}\!\bullet\,$.
Furthermore, it is easy to see that if one replaces $\Phi^M_{n-1}$ by its modified version $\hPhi^M_{n-1}$
appearing in \eqref{Phin-final},
on the right-hand side one simply changes $z_i^\star$ by
\be
z_{\gamma_i}=- \frac{\I\,(p_i(x_i-\sqrt{2\tau_2}\beta\ptt_i) t)}{\sqrt{2\tau_2}(p_it^2)}
=-\I\,\frac{(q_a+(pb)_a)\,t^a}{(pt^2)}\, .
\ee
Therefore, we conclude that the functions $\intPhi_n$ can be represented in the following integral form
\be
\intPhi_n(\bfx) =
\[\prod_{i=1}^n\int_{z_{\gamma_i}+\IR}\frac{\de z_i}{2\pi} \, W_{p_i}(x_i,z_i)\]
\frac{\I^{n-1}}{\prod_{i=1}^{n-1}(z_i-z_{i+1})}\, .
\label{kerPhinr}
\ee
Besides, due to the shift of the $b$-field produced by the refinement, one has
\be
\prod_{i=1}^n W_{p_i}(\sqrt{2\tau_2}(q_i+\hat b_i),z_i)=\expe{-2\tau_2\beta\sum_{i<j} (p_ip_jt)(z_i-z_j)}
\prod_{i=1}^n W_{p_i}(\sqrt{2\tau_2}(q_i+ b_i),z_i).
\label{shiftW}
\ee
Furthermore, since we work in the large volume limit $t^a\to \infty$, the contours $z_{\gamma_i}+\IR$ can be deformed into the standard BPS rays
$\ell_{\gamma_i}$ \cite{Gaiotto:2008cd,Alexandrov:2008gh}
which in the $z$-plane go along the arcs running from $-1$ to $1$ and passing through $z_{\gamma_i}$.
Thus, substituting \eqref{kerPhinr} and \eqref{shiftW} into \eqref{expGrDT} and taking into account the definition of $\hrDT_{p,q}$,
one obtains the representation \eqref{treeFh}, where we decomposed the $\beta$-dependent power of $y$ as in the original formula \eqref{treeFh-flhr}.

\subsection{Proof of Eq.(C.1)}
\label{ap-proofG}

Substituting the explicit formula for the completion of the generating function \eqref{exp-whhr} into the refined potential \eqref{treeFh-flhr} and
rewriting it as an expansion in powers of $\hr_{p_i,\mu_i}$, it is easy to see that one gets \eqref{treeFh-r} where the kernel $\Phi^{{\rm ref}}_n$
is given by
\be
\Phi^{{\rm ref}}_n(\bfx)=\!\!\sum_{n_1+\cdots n_m=n \atop n_k\ge 1} \!\! \whPhi^{\rm ref}_{m}(\bfx')
\prod_{k=1}^m \[
\sum_{T_k\in\IT_{n_k}^{\rm S}}(-1)^{n_{T_k}-1} \Phip_{v_0}
\!\!\prod_{v\in V_{T_k}\setminus{\{v_0\}}}\!\! \Phif_{v}\],
\label{totker-ref1}
\ee
where we introduced the rescaled versions of $\cErf_{n}$ and $\cErp_{n}$\footnote{The shift of $\bfx$ in the argument of the sign function
compensates the shift in \eqref{argxref} so that the resulting function is $\beta$-independent.}
\be
\Phif_{n}=\frac{1}{2^{n-1}}\prod_{k=1}^{n-1}\sgn(\bfv_k,\bfx-\sqrt{2\tau_2}\,\beta \ptt),
\qquad
\Phip_{n}=\Phi_{n}-\Phif_{n},
\ee
the trees $T_k$ are labelled by the charges $\gama_{j_{k-1}+1},\dots,\gama_{j_{k}}$,
whereas the first factor depends on the sums of charges in each subset $\gama'_k=\gama_{j_{k-1}+1}+\cdots +\gama_{j_{k}}$,
or equivalently on the $mb_2$-dimensional vectors \eqref{defprimevar}.
Taking into account the explicit form of $\whPhi^{\rm ref}_{n}$ \eqref{kertotsolr}, we see that the kernel \eqref{totker-ref1}
can be visualized as a sum over Schr\"oder trees, {the leaves of which themselves
sprouting further Schr\"oder trees}.
Regarding all these trees as parts of one big tree, one arrives at the following representation
\be
\Phi^{{\rm ref}}_n=
\intPhi_1\sum_{T\in\IT_n^{\rm S}}(-1)^{n_{T}-1}\(\tPhi_{v_0}-\Phi_{v_0}\)\sum_{T'\subseteq T}
\prod_{v\in V_{T'}\setminus{\{v_0\}}}\Phi_{v}
\prod_{v\in L_{T'}\setminus L_T} (-\Phip_{v})
\prod_{v\in V_T\setminus (V_{T'}\cup L_{T'})} \Phif_{v},
\label{totker-ref2}
\ee
where the second sum goes over all subtrees $T'$ of $T$ containing its root\footnote{The sum also
includes the contribution of the trivial subtree in which case
the product over vertices should read $ \Phip_{v_0}\prod_{v\in V_T} \Phif_{v}$. \label{foot-root}}
and $L_{T}$ denotes the set of leaves of $T$.

Let us now fix a tree $T$ and a vertex $v$ whose only children are leaves of $T$, i.e. $v$ has height 1.
Then for each subtree $T'$ with $v\in L_{T'}$ one can put into a correspondence another subtree for which the children of $v$ are added to the subtree,
i.e. now $v\in V_{T'}$ and the rest of $T'$ is the same.
The contributions of two such subtrees in \eqref{totker-ref2} differ only by the factor assigned to the vertex $v$:
it is $-\Phip_{v}$ in the first contribution, whereas $\Phi_{v}$ in the second.
Thus, the two contribution recombine giving $\Phif_{v}$ as the weight assigned to the vertex.

After performing such recombination for all vertices of height 1, one moves to the next level and considers $v$ of height 2.
Here again one picks up pairs of subtrees with $v\in L_{T'}$ and $v\in V_{T'}$, respectively.
But now the contribution of the latter is already the one after the recombination done at the first step.
As a result, the two contributions again differ only by the factor assigned to the vertex $v$ and the result of their recombination
is the same as above: the new factor is $\Phif_{v}$.

In this way one covers all vertices of $T$ up to the root. At the root one again compares two contributions:
one of the trivial subtree (see footnote \ref{foot-root}) and another one from all previous recombinations.
Their sum leads to the weight $\tPhi_{v_0}-\Phif_{v_0}$ assigned to the root. As a result, one remains with the following kernel
\be
\Phi^{{\rm ref}}_n=
\intPhi_1\sum_{T\in\IT_n^{\rm S}}(-1)^{n_{T}-1}\(\tPhi_{v_0}-\Phif_{v_0}\)
\prod_{v\in V_{T}\setminus{\{v_0\}}}\Phif_{v}.
\label{totker-ref3}
\ee
{It it worth noting that this formula makes it manifest that the kernel $\Phi^{{\rm ref}}_n$ is smooth
across walls of marginal stability, since all moduli dependence comes from $\tPhi_{v_0}$.}

\medskip

Next, we should take into account the relation \eqref{expPhiE-mod}
between functions $\Phi_n^E$ and $\hPhi_n^M$ with $\bfchi=\sqrt{2\tau_2}\beta\ptt$.
{In our case} $n=k_{v_0}-1$ where $k_{v_0}$ is the number of children of the root vertex, and the vectors $\bfv_i$ coincide with $\bfu_\ell$
defined in \eqref{shiftS}. For such vectors one has

\begin{proposition}
Let $\cI=\{j_1,\dots,j_{m-1}\}$ where $0\equiv j_0<j_1<\cdots <j_{m-1}< j_{m}\equiv n$. For $\ell\in \Zv_{n-1}\setminus \cI$
find $k$ such that $j_{k-1}< \ell<j_k$. Then one has
\be
\bfu_{\ell \perp \cI}=
\frac{(pt^2)}{\sum_{i=j_{k-1}+1}^{j_k} (p_it^2)}\,\bfu^{j_{k-1}+1, j_k}_\ell
\qquad \mbox{where}\qquad
\bfu^{ij}_\ell=\sum_{k=i}^\ell\sum_{k'=\ell+1}^j \bfu_{kk'},
\quad
i\le \ell< j.
\nn
\ee
\label{prop-orthu}
\end{proposition}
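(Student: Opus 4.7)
}
My plan is to work in a basis adapted to the block decomposition determined by $\cI$. Concretely, introduce vectors $e_i$, for $i=1,\dots,n$, each supported in the $i$-th slot of $\Lat\otimes\IR$ and directed along $t^a$, normalized so that $e_i\cdot e_j=\delta_{ij}(p_it^2)$. Then by \eqref{defvij} one has $\bfu_{ij}=(p_jt^2)\,e_i-(p_it^2)\,e_j$ and hence, with the abbreviations $S_{a,b}=\sum_{i=a}^{b}(p_it^2)$ and $E_{a,b}=\sum_{i=a}^{b}e_i$, the compact formula $\bfu_\ell=S_{\ell+1,n}E_{1,\ell}-S_{1,\ell}E_{\ell+1,n}$. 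The partition $0=j_0<j_1<\dots<j_{m-1}<j_m=n$ singles out the block sums $E^{(k)}:=E_{j_{k-1}+1,j_k}$, which are mutually orthogonal with $\|E^{(k)}\|^2=S_k:=S_{j_{k-1}+1,j_k}$. Set $W:=\mathrm{span}\{E^{(1)},\dots,E^{(m)}\}$ and note that
\be
\bfu_{j_a}=S_{j_a+1,n}\sum_{k\le a}E^{(k)}-S_{1,j_a}\sum_{k>a}E^{(k)}\;\in\;W,
\nn
\ee
so that $\mathrm{span}\{\bfu_{j_a}\}_{a=1}^{m-1}\subset W$ and its orthogonal complement in the full space splits as $W^{\perp}\oplus W'$, where $W'\subset W$ is one-dimensional.

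The next step is to identify $W'$. A vector $\sum_jc_jE^{(j)}\in W$ is orthogonal to every $\bfu_{j_a}$ iff $C_a\,P=C_m\,S_{1,j_a}$ for $a=1,\dots,m-1$, where $C_a:=\sum_{j\le a}c_jS_j$ and $P=S_{1,n}$; taking consecutive differences shows these constraints force $c_j$ to be $j$-independent, hence $W'=\mathrm{span}\{E^{(1)}+\dots+E^{(m)}\}=\mathrm{span}\{\textstyle\sum_i e_i\}$. A direct computation then gives $\bfu_\ell\cdot\sum_i e_i=S_{\ell+1,n}S_{1,\ell}-S_{1,\ell}S_{\ell+1,n}=0$, so the $W'$-component of $\bfu_\ell$ vanishes and $\bfu_{\ell\perp\cI}$ equals the $W^{\perp}$-component of $\bfu_\ell$.

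It remains to compute that component explicitly. Decompose $E_{1,\ell}=\sum_{j<k}E^{(j)}+E_{j_{k-1}+1,\ell}$ and $E_{\ell+1,n}=E_{\ell+1,j_k}+\sum_{j>k}E^{(j)}$, and split the two partial sums inside block $k$ into their $E^{(k)}$-components plus their projections $X$ and $Y$ onto $V_k\cap W^{\perp}$, where $V_k=\mathrm{span}\{e_i:i\in B_k\}$. A short calculation shows $Y=-X$ with
\be
X=\tfrac{1}{S_k}\bigl(S_{\ell+1,j_k}\,E_{j_{k-1}+1,\ell}-S_{j_{k-1}+1,\ell}\,E_{\ell+1,j_k}\bigr).
\nn
\ee
Using $S_{\ell+1,n}+S_{1,\ell}=P$, the $W^{\perp}$-part of $\bfu_\ell$ is therefore $S_{\ell+1,n}X-S_{1,\ell}Y=P\,X=\tfrac{P}{S_k}\bfu_\ell^{j_{k-1}+1,j_k}$, which is precisely the claimed identity.

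The only place where anything subtle happens is the identification of $W'$ and the verification that $\bfu_\ell\cdot\sum_i e_i=0$; the rest is bookkeeping in an orthogonal decomposition. I expect no genuine obstacle, although a careful choice of notation (separating the in-block and cross-block contributions, and exploiting the cancellation $Y=-X$) is essential to keep the computation short.
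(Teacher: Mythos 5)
Your proof is correct. All the key steps check out: with $(e_i)_k^a=\delta_{ki}t^a$ one indeed has $e_i\cdot e_j=\delta_{ij}(p_it^2)>0$, so everything takes place inside the positive-definite subspace $V=\mathrm{span}\{e_i\}$; the formula $\bfu_\ell=S_{\ell+1,n}E_{1,\ell}-S_{1,\ell}E_{\ell+1,n}$ is right; the linear system $PC_a=C_mS_{1,j_a}$ does force all $c_j$ equal, so $U^\perp\cap W$ is spanned by $\sum_ie_i$, which is orthogonal to every $\bfu_\ell$; and the in-block projection gives $S_kX=\bfu^{j_{k-1}+1,j_k}_\ell$, whence $\bfu_{\ell\perp\cI}=PX$ with $P=(pt^2)$. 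However, your route is genuinely different from the paper's. The paper first proves the single-projection case $\cI=\{k\}$ as a Lemma, by grouping the $n$ charges into three effective charges and doing the explicit $n=3$ computation, and then treats general $\cI$ by iterating: it projects out $\bfu_{j_k}$ first, observes that the images of the remaining $\bfu_{j}$'s split into two mutually orthogonal families supported on $\{1,\dots,j_k\}$ and $\{j_k+1,\dots,n\}$, and peels off one direction at a time, reusing the Lemma with $n$ replaced by the block length at each stage. Your argument replaces this iteration by a single global linear-algebra computation in the adapted orthogonal basis: it identifies $\mathrm{span}\{\bfu_{j}\}_{j\in\cI}$ as a codimension-one subspace of the $m$-dimensional block-diagonal space $W$, pins down the missing direction as the fully diagonal vector, and reads off the projection as the $W^\perp$-component. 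This makes it structurally transparent why only the block containing $\ell$ survives (the cross-block pieces of $\bfu_\ell$ lie entirely in $W$), and it establishes in passing that the $\bfu_{j}$, $j\in\cI$, are linearly independent; the paper's version, by contrast, is designed to recycle the same three-charge Lemma that also underlies the collinear-charge Proposition used in the truncation theorem.
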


\begin{proof}

First, we prove the statement for the case of the set $\cI$ consisting of one element which we formulate as a Lemma.

\begin{lemma}
\be
\bfu_{\ell \perp k}= \left\{
\begin{array}{ll}
\frac{(pt^2)}{\sum_{i=1}^{k} (p_it^2)}\,\bfu^{1k}_\ell, & \ell<k,
\\
\frac{(pt^2)}{\sum_{i=k+1}^n (p_it^2)}\,\bfu^{k+1,n}_\ell,\quad & \ell>k.
\end{array}
\right.
\nn
\ee

\label{lemma-orthu}
\end{lemma}

\begin{proof}
First, let us consider the case $n=3$.
A straightforward calculation gives
\be
\begin{split}&
\bfu_1^2=(pt^2)(p_1t^2)(p_{2+3}t^2),
\qquad
\bfu_2^2=(pt^2)(p_{1+2}t^2)(p_3t^2),
\\ &\qquad\qquad\qquad
(\bfu_1,\bfu_2)=(pt^2)(p_1t^2)(p_3t^2),
\end{split}
\ee
where we introduced the convenient notations $p^a_{i+j}=p^a_i+p^a_j$.
Using these results, one finds
\be
\bfu_{1\perp 2}=\frac{(pt^2)}{(p_{1+2}t^2)}\,\bfu_{12},
\qquad
\bfu_{2\perp 1}=\frac{(pt^2)}{(p_{2+3}t^2)}\,\bfu_{23},
\ee
which agrees with the statement of the Lemma since $\bfu_1^{12}=\bfu_{12}$, $\bfu_2^{23}=\bfu_{23}$.

The case of arbitrary $n$ then reduces to the case $n=3$ by identifying
\be
\tilde \gamma_1=\sum_{i=1}^\ell \gamma_i,
\qquad
\tilde \gamma_2=\sum_{i=\ell+1}^{k} \gamma_i,
\qquad
\tilde \gamma_3=\sum_{i=k+1}^n \gamma_i,
\ee
for $\ell<k$, or
\be
\tilde \gamma_1=\sum_{i=1}^{k} \gamma_i,
\qquad
\tilde \gamma_2=\sum_{i=k+1}^{\ell} \gamma_i,
\qquad
\tilde \gamma_3=\sum_{i=\ell+1}^n \gamma_i,
\ee
in the opposite case.
\end{proof}

If $\cI$ has several elements, let us find $k$ as in the statement of the Proposition.
We note that the projection on the subspace orthogonal to the span of $\{\bfu_i\}_{i\in\cI}$
can be equivalently obtained by first projecting with respect to $\bfu_{j_k}$ and then with respect to $\{\bfu_{i\perp j_k}\}_{i\in\cI\setminus\{j_k\}}$.
According to the Lemma, the latter set is equivalent to $\{\bfu_{j_\ell}^{1j_k}\}_{\ell=1}^{k-1}\cup\{\bfu_{j_\ell}^{j_k+1, n}\}_{\ell=k+1}^{m-1} $,
whereas the first projection gives us
$\bfu_{\ell \perp j_k}=\frac{(pt^2)}{\sum_{i=1}^{j_k} (p_it^2)}\,\bfu^{1j_k}_\ell$.
Since $\bfu^{1j_k}_\ell$ is already orthogonal to any $\bfu_{j_\ell}^{j_k+1, n}$, it remains only to do the orthogonal projection with respect to
$\{\bfu_{j_\ell}^{1j_k}\}_{\ell=1}^{k-1}$.\footnote{If $k=1$, there is already nothing to do. Similarly, if $k=n-1$ one omits the previous step.}
This projection we again split into two steps: with respect to $\bfu_{j_{k-1}}^{1j_k}$ and $\{\bfu_{j_\ell\perp j_{k-1}}^{1j_k}\}_{\ell=1}^{k-2} $.
The projections can be evaluated using the Lemma provided one replaces $n$ by $j_k$ since all components beyond $j_k$ vanish.
As a result, the first projection gives
$\bfu^{1j_k}_{\ell\perp j_{k-1}}=\frac{\sum_{i=1}^{j_k}(p_it^2)}{\sum_{i=j_{k-1}+1}^{j_k} (p_it^2)}\,\bfu^{j_{k-1}+1,j_k}_{\ell} $,
whereas the remaining set of vectors is equivalent to the span of $\{\bfu_{j_\ell}^{1j_{k-1}}\}_{\ell=1}^{k-2} $.
All these vectors are orthogonal to $\bfu^{j_{k-1}+1,j_k}_{\ell}$ and therefore there is no need to do any further projection.
Combining the prefactors, one recovers the statement of the Proposition.
\end{proof}

\lfig{An example of the effect of the partition on the tree for the case where the root has 6 children and the partition is $6=2+1+3$.}
{Eff-part}{17.cm}{fig-eff-part}{-1.3cm}

Note that each set $\cI=\{j_1,\dots,j_{m-1}\}$ provides an ordered partition of $n=n_1+\cdots n_m$ with $n_k=j_{k}-j_{k-1}$.
Then according to the Proposition we have
\be
\intPhi_1\tPhi_n=\sum_{n_1+\cdots n_m=n \atop n_k\ge 1}  \intPhi_m(\bfx')
\prod_{k=1}^m \tPhif_{n_k}(x_{j_{k-1}+1},\dots,x_{j_k}),
\label{Phipart}
\ee
where
\be
\tPhif_{n}(\bfx)=\frac{1}{2^{n-1}}\prod_{k=1}^{n-1}\sgn(\bfu_k,\bfx-\sqrt{2\tau_2}\,\beta \ptt)
=(-y)^{-\sum_{i<j} \gamma_{ij} }\gfr_n(\{\gama_i,c_i\},y).
\ee
The contribution of the trivial partition ($m=1$) combined with $\Phif_{v_0}$ in \eqref{totker-ref3}
is equivalent to $\intPhi_1\trPhi_{n}$ and thus already has the required form \eqref{totker-ref}.
For non-trivial partitions, the effect of substitution of \eqref{Phipart} into \eqref{totker-ref3} can be interpreted
as a replacement of tree $T$ by a new tree $T_{k_1,\cdots,k_m}\equiv T' $ with $k_1+\cdots k_m=k_{v_0}$, which is constructed as follows.
Group all children of the original root $v_0$ according to the decomposition of $k_{v_0}$ under consideration.
If $k_j>1$, all children in the $j$th group are connected to a vertex $v_j$
which is itself connected to the root $v'_0$ of the new tree.
Otherwise the corresponding child is connected directly to the root (see Fig. \ref{fig-eff-part}).
The contribution assigned to the new tree is then given by
\be
(-1)^{n_{T'}-1}\intPhi_m\prod_{v\in \Ch^{\rm new}}(-\tPhif_v)\prod_{v\in \Ch^{\rm old}}\Phif_v
\prod_{v\in V_{T'}\setminus{\{v'_0\cup \Ch(v'_0)\}}}\Phif_{v},
\ee
where $\Ch^{\rm new}$ denotes the set of the added vertices $v_j$, whereas $\Ch^{\rm old}$ is the set of those children of the root
which have already been such children before the above operation and are not the leaves.
It is clear that the sum over trees $T$ and partitions is equivalent to the sum over trees $T'$ supplemented by the sum over all possible
assignments of  `new' and `old' to the children of the root. The latter sum can easily be evaluated and one obtains
\be
\sum_{T'\in\IT_n^{\rm S}}(-1)^{n_{T'}-1}\intPhi_{v'_0}\prod_{v\in Ch(v'_0)\setminus L_{T'}} \(\Phif_{v}-\tPhif_{v}\)
\prod_{v\in V_{T'}\setminus{\{v'_0\cup \Ch(v'_0)\}}}\Phif_{v}.
\label{totker-ref5}
\ee
This result precisely coincides with the contribution of non-trivial partitions to \eqref{totker-ref}. To see this,
it is enough to split $T'$ into subtrees corresponding to descendants of the root which then correspond to the trees in the formula \eqref{kerPhiintr},
whereas the effect of the root is captured by the sum over partitions.
This completes the proof of \eqref{treeFh-r}.

\section{Proof of the truncation theorem}
\label{ap-theorem}

In this appendix we prove Theorem \ref{theor} from section \ref{sec-holan}.
Our first step is to establish some useful properties of the orthogonal projections $\bfv_{\ell \perp k}$
appearing as parameters of the generalized error functions in \eqref{derivPhiE},
the factor assigned to the root vertex of each Schr\"oder tree contributing to the anomaly coefficient \eqref{solJn}.

\begin{proposition}
For collinear charges $p^a_i=N_i p_0^a$ with $N=\sum_{i=1}^n N_i$, one has
\be
\bfv_{\ell \perp k}= \left\{
\begin{array}{ll}
\frac{N}{\sum_{i=1}^{k} N_i}\,\bfv^{1k}_\ell, & \ell<k,
\\
\frac{N}{\sum_{i=k+1}^n N_i}\,\bfv^{k+1,n}_\ell,\quad & \ell>k,
\end{array}
\right.
\qquad \mbox{where}\qquad
\bfv^{ij}_\ell=\sum_{k=i}^\ell\sum_{k'=\ell+1}^j \bfv_{kk'},
\quad
i\le \ell< j.
\nn
\ee
\label{prop-orthcolin}
\end{proposition}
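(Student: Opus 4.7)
The proposition is the direct analog of Lemma~\ref{lemma-orthu} (proved for the moduli-dependent vectors $\bfu_\ell$), and the plan is to follow the same strategy. The cleanest route, which I will take, is to exploit the collinearity $p_i^a = N_i p_0^a$ to reduce the entire computation to a one-dimensional-per-block one, so that the $n=3$ base case and the regrouping argument of Lemma~\ref{lemma-orthu} are bypassed in favor of a single direct calculation of inner products.

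Setting $A_\ell := \sum_{i=1}^\ell N_i$ and $B_\ell := N - A_\ell = \sum_{i=\ell+1}^n N_i$, a direct unpacking of the definitions in \eqref{defvij} and \eqref{defbfvk} shows that the $m$-th block of $\bfv_\ell$ has the simple form
\begin{equation*}
(\bfv_\ell)_m^a = p_0^a \bigl(B_\ell\, \mathbf{1}_{m \le \ell} - A_\ell\, \mathbf{1}_{m > \ell}\bigr),
\end{equation*}
and analogously $(\bfv^{1k}_\ell)_m^a = p_0^a \bigl((A_k - A_\ell)\, \mathbf{1}_{m \le \ell} - A_\ell\, \mathbf{1}_{\ell < m \le k}\bigr)$. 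The bilinear form \eqref{biform} then splits into three block ranges, and for $\ell \le k$ a short sum gives
\begin{equation*}
\bfv_\ell \cdot \bfv_k = p_0^3\bigl[A_\ell B_\ell B_k - (A_k - A_\ell)A_\ell B_k + (N - A_k) A_\ell A_k\bigr]
= p_0^3\, N\, A_\ell\, B_k.
\end{equation*}
In particular $\bfv_k^2 = p_0^3 N A_k B_k$, so Gram-Schmidt yields, for $\ell < k$,
\begin{equation*}
\bfv_{\ell \perp k} = \bfv_\ell - \frac{A_\ell}{A_k}\,\bfv_k.
\end{equation*}

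Comparing this block-wise against $\frac{N}{A_k}\bfv^{1k}_\ell$ using the explicit formulas above, the three ranges $m \le \ell$, $\ell < m \le k$, $m > k$ each match (in the last range, the combination $-A_\ell + \frac{A_\ell}{A_k}\cdot A_k$ vanishes, consistent with $\bfv^{1k}_\ell$ having no support there). The case $\ell > k$ is handled identically after interchanging the roles of the two projected indices, or equivalently by applying the reflection $i \mapsto n{+}1{-}i$ which sends $\bfv^{1k}_\ell$ to $\bfv^{k+1,n}_\ell$. I do not anticipate any real obstacle: the only non-trivial input is the identity $\bfv_\ell \cdot \bfv_k = p_0^3 N A_\ell B_k$, whose verification is elementary once the block-wise form of $\bfv_\ell$ is extracted.
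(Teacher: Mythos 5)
Your proof is correct, but it takes a genuinely different route from the paper's. The paper proves this Proposition by declaring it the image of Lemma~\ref{lemma-orthu} under the dictionary $\bfu\to\bfv$, $t^a\to p_0^a$, $(p_it^2)\to N_i$; that Lemma is itself established by an explicit $n=3$ computation followed by a charge-regrouping argument that reduces general $n$ to $n=3$. You instead bypass the reduction entirely: collinearity lets you write the block form $(\bfv_\ell)^a_m=p_0^a\bigl(B_\ell\,\mathbf{1}_{m\le\ell}-A_\ell\,\mathbf{1}_{m>\ell}\bigr)$, from which the closed-form Gram matrix $\bfv_\ell\cdot\bfv_k=p_0^3\,N\,A_{\min(k,\ell)}B_{\max(k,\ell)}$ follows by a three-range sum, and a single Gram--Schmidt step plus block-wise comparison finishes the job. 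I checked the key identity and the block-wise match (including the vanishing on $m>k$); they are right, and the $\ell>k$ case does follow by the reflection $i\mapsto n+1-i$ (or by the symmetric computation with $B_\ell/B_k$ in place of $A_\ell/A_k$). What each approach buys: the paper's route reuses already-proven machinery and makes manifest why the same formula governs both the moduli-dependent vectors $\bfu_\ell$ and the charge vectors $\bfv_\ell$; yours is self-contained, avoids the regrouping step, and produces the explicit Gram matrix as a byproduct, consistent with the data in \eqref{vectorstree}--\eqref{matM0}. Note also that your block computation applies verbatim to $\bfu_\ell$ under the substitution $N_i\to(p_it^2)$, $p_0^a\to t^a$, so it would reprove the Lemma as well. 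One stylistic remark: you present this as a ``proof plan,'' but since every nontrivial identity is actually verified, it already constitutes a complete proof.
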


\begin{proof}
This Proposition is a direct analogue of Lemma \ref{lemma-orthu} and their proofs are identical provided
one uses the following dictionary: $\bfu\to\bfv$, $t^a\to p_0^a$, $(p_it^2)\to N_i$.
Note however that in contrast to the Lemma this Proposition holds only for collinear charges.
\end{proof}

This Proposition shows that after the orthogonal projection the set of vectors $\bfv_\ell$ is split into two sets of
mutually orthogonal vectors, $\{\bfv_{\ell\perp k}\}_{\ell<k}$ and $\{\bfv_{\ell\perp k}\}_{\ell>k}$ (of course, for $k=1$ and $k=n-1$
one of these sets is empty).
At the same time, the generalized error functions are known to possess the property that if the vectors defining them are split into two
mutually orthogonal sets, the function is given by a product of two generalized error functions
of lower ranks evaluated on the respective sets of vectors.
Therefore, we can rewrite \eqref{derivPhiE} as
\bea
\p_{\tau_2}\cEr_n
&=&\frac{(-y)^{\sum_{i<j} \gamma_{ij} }}{2^{n-1}\tau_2}\sum_{k=1}^{n-1}\frac{(\bfv_k,\check\bfx)}{|\bfv_k|}\,
e^{-\pi\,\frac{(\bfv_k,\bfx)^2}{\bfv_k^2} }
\Phi^E_{k-1}(\{ \bfv^{1k}_{\ell}\}_{\ell=1}^{k-1};\bfx)
\Phi^E_{n-k-1}(\{ \bfv^{k+1,n}_{\ell}\}_{\ell=k+1}^{n-1};\bfx)
\nn\\
&=&\sum_{k=1}^{n-1} A_k(\{\gama_i\}_{i=1}^{n},y)\,
\cEr_{k}(\{\gama_i\}_{i=1}^{k},y)\,\cEr_{n-k}(\{\gama_i\}_{i=k+1}^{n},y),
\label{derivPhiEsplit}
\eea
where we set $\Phi^E_{0}=1$, whereas in the last line we used the definition \eqref{Erefsim} of $\cEr_n$, the fact that
\be
\sum_{1\le i<j\le n} \gamma_{ij}=\sum_{1\le i<j\le k} \gamma_{ij}+\sum_{k+1\le i<j\le n} \gamma_{ij}+\sum_{i=1}^k\sum_{j=k+1}^n\gamma_{ij},
\ee
and introduced
\be
A_k(\{\gama_i\}_{i=1}^{n},y)=J\(\sum_{i=1}^k\gama_i,\sum_{i=k+1}^n\gama_i;y\),
\qquad
J(\gama_1,\gama_2;y)=\frac{(-y)^{\gamma_{12}}}{2\tau_2}\, \frac{(\bfv_{12},\check\bfx)}{|\bfv_{12}|}\,e^{-\pi\,\frac{(\bfv_{12},\bfx)^2}{\bfv_{12}^2} }\!.
\ee

As a result, each Schr\"oder tree produces a sum of contributions given by a product of two Schr\"oder trees,
obtained by cutting the original tree at the root between the $k$th and the $(k+1)$th children,
for which every vertex carries a factor of $\cEr_{v}$, and of the factor $A_{j_k}$ where $j_k$ is the number of leaves in the first subtree.
The latter number can also be expressed as $j_k=n(T_1)+\cdots +n(T_k)$ where $n(T)$ is the number of leaves of a rooted tree $T$ and
$T_i$ are the subtrees growing from descendants of the root $v_0$.
Then it is easy to see that for each such product there are four Schr\"oder trees which produce it,
and the resulting four contributions cancel each other as shown in Fig. \ref{fig-Acancel}.
In fact, if $k=1$, the second and fourth trees do not exist (they spoil the definition of a Schr\"oder tree)
and the cancelation happens just between two trees.
Similarly, if $k=k_{v_0}$, the third and fourth do not exist.

\lfig{Combination of contributions to the anomaly coming from four Schr\"oder trees. The green dashed lines show
how each tree is cut to produce the contribution at the bottom.}
{Acancel-new}{18.cm}{fig-Acancel}{-1.2cm}

The only special case when three of the four shown trees do not exist is $n=2$.
Then only the first tree contributes and there is no cancelation giving\footnote{Note that for $n=2$, $\ptt=\bfv_{12}$.}
\be
\begin{split}
\cJr_2=&\,\frac{\I}{2}\, \Sym J(\gama_1,\gama_2;y)
\\
=&\,\frac{\I (-y)^{\gamma_{12}}}{4\sqrt{2\tau_2}}\, \frac{\gamma_{12}-\beta(pp_1p_2)}{\sqrt{(pp_1p_2)}}\,
e^{-2\pi\tau_2\,\frac{(\gamma_{12}+\beta(pp_1p_2))^2}{(pp_1p_2)} }+(1\leftrightarrow 2),
\end{split}
\label{valJr2}
\ee
where we evaluated all contractions with the bilinear form.
For all other $n$'s, all contributions cancel and $\cJr_n$ vanish.
This proves the statement of the Theorem.

\section{Geometric data for Hirzebruch and del Pezzo surfaces}
\label{sec_geodata}

In this appendix, we provide the geometric data for the complex surfaces used in
constructing local Calabi-Yau threefolds in \S\ref{sec_locallim}.
\begin{itemize}
\item
For the Hirzebruch surface $S=\mathbb{F}_k$ (also known as ruled rational surface), defined as the projectivization of the
$\cO(k)\oplus \cO(0)$ bundle over $\IP^1$, one has $b_2(S)=2$ and $\chi(S)=4$.
Using the same basis as in \cite[\S 4.1.1]{Alexandrov:2017mgi}
(see also \cite[\S A.2]{Alim:2010cf}), we get
\be
\begin{split}
&
C_{\alpha\beta} =\begin{pmatrix} 0 & 1 \\ 1 & k \end{pmatrix},
\qquad
C^{\alpha\beta} = \begin{pmatrix} -k & 1 \\ 1 & 0 \end{pmatrix},
\\
c_1^\alpha =  ( 2-k , 2 ),&
\qquad
c_{2,e} = 92,
\qquad
c_{2,\alpha}=12C_{\alpha\beta}c_1^\beta = 12( 2, 2+k ),
\end{split}
\ee
hence
\be
K_S^2 = [S]^3 = 8,
\qquad
[S] \cap c_2(T\CY) = -4.
\ee
\item
For the del Pezzo surface $S=\mathbb{B}_k$, defined as the
blow-up of $\IP^2$ over $k$ generic points,
one has $b_2(S)=k+1$, $\chi(S)=k+3$.
Using the same basis as in  \cite[\S 4.1.2]{Alexandrov:2017mgi}, we get
\be
\begin{split} &\qquad
C_{\alpha\beta} = \begin{pmatrix} 0 & 1 & 1 & \dots & 1 \\
1 & 0 & 1 &\dots & 1  \\
\vdots &  & 0 & 1 & 1 \\
1 & \dots & 1 & 0 & 1\\
1 & \dots & 1 & 1 & 1
\end{pmatrix},
\qquad
C^{\alpha\beta} = \begin{pmatrix} -1 & 0 & 0 & \dots & 1 \\
0 & -1 & 0 &\dots & 1  \\
\vdots &  & -1 & 0 & 1 \\
0 & \dots & 0 & -1 & 1\\
1 & \dots & 1 & 1 & 2-k
 \end{pmatrix},
\\
c_1^\alpha = &\, ( 1,\dots, 1, 3-k ),
\quad\
c_{2,e} = 102-10 k,
\quad\
c_{2,\alpha}=12C_{\alpha\beta}c_1^\beta = 12( 2,\dots, 2, 3 ),
\label{dataBk}
\end{split}
\ee
hence
\be
K_S^2 = [S]^3 = 9-k,
\qquad  [S] \cap c_2(T\CY) = 2k-6.
\ee
\end{itemize}
Note that $\mathbb{B}_1=\mathbb{F}_1$, whereas $\mathbb{B}_0=\IP^2$.
Smooth elliptic fibrations for these two cases have
been discussed in detail in  \cite{Klemm:2012sx}. For $k=9$, $\mathbb{B}_9$ is almost Fano
and known as the rational elliptic surface or half-K3. Vafa-Witten invariants on $\mathbb{B}_9$
were studied in \cite{Minahan:1998vr,yoshioka1999euler,Klemm:2012sx}.

\section{Modular completion of Vafa-Witten invariants on $\IP^2$}
\label{ap-vwp2}

In this appendix we provide a detailed comparison of the modular completion of the generating function
of refined VW invariants on $S=\IP^2$ for ranks $N=2$ and 3 known in the literature  \cite{Manschot:2017xcr}
with the prediction of our general formula \eqref{exp-whhr},
and spell out prediction for $N=4$.

Applying the general formulae of \S\ref{subsec-local} to the case at hand supplemented by the data
in \eqref{dataBk} with $k=0$, one obtains that the D4-brane charge for the divisor $[\IP^2]$ is $p_0^a=(1,-3)$ and $p_0^3=9$,
such that the Dirac product \eqref{gam12} becomes
\be
\label{gam12P2}
\begin{split}
\langle\gamma_1,\gamma_2\rangle=&\,
3 (N_1 q_2 -N_2 q_1).
\end{split}
\ee
Since $b_2(\IP^2)=1$, the choice of  $J$ inside the \kahler cone is irrelevant,
and the first Chern class $\mu$ is an integer modulo the rank $N$.
It will be convenient to define
\be
\hrVW_{N,\mu}(\tau,w) = g_{N,\mu}(\tau,w)\, \(\hrVW_{1,0}(\tau,w)\)^N,
\label{defgVW}
\ee
where $\hrVW_{1,0}$ was given in \eqref{h10anySref},
and similarly for  the modular completion $\whhrVW_{N,\mu}$, so that  $\widehat{g}_{N,\mu}$
transforms as a vector valued Jacobi form of weight  $\frac12(N-1)$ and
index $-\frac32(N^3-N)$.
The identification \eqref{relhh} implies
\be
\label{gtogp}
g'_{N,\mu}(\tau,w)=
g_{N,\mu+\frac12 N(N-1)}(\tau,w+\haf),
\ee
where $g'_{N,\mu}$ is the function defined by $\hr_{Np_0,\mu}$ similarly to \eqref{defgVW}.
Below we verify that once this relation is satisfied, it continues to hold for the respective modular completions.
To this end, we borrow the results for $\widehat g_{N,\mu}$ at $N=2$ and $N=3$ from \cite{Vafa:1994tf,Manschot:2017xcr}.

\subsection{Rank 2}

For $N=2$, the generating functions of refined VW invariants were computed in \cite{Yoshioka:1994,Yoshioka:1995},
generalizing the unrefined case in \cite{Klyachko:1991}. They are closely
related to the generating function of Hurwitz class numbers \cite{Zagier:1975}.
The modular completion is given by~\cite{Zagier:1975,Vafa:1994tf,Bringmann:2010sd}
\be
\label{ghat2}
\widehat g_{2,\mu}=g_{2,\mu}
+\frac{1}{2}\sum_{\ell\in  \mathbb{Z}+\mu/2}\Bigl[E_1(\sqrt{\tau_2}(2\ell+6\beta))-\sgn(\ell)\Bigr]\,\q^{-\ell^2}y^{6\ell}.
\ee
This should be compared with the result
of our general prescription which, after extracting the square of $\hr_{1,0}(\tau,w)$ as in \eqref{defgVW}, reads
\be
\widehat g'_{2,\mu}=g'_{2,\mu}+
\sum_{q_1+q_2=\mu}
\frac{(-y)^{\gamma_{12}}}{2}\(E_1\(\frac{\sqrt{2\tau_2}(\gamma_{12}+\beta(pp_1p_2)) }{\sqrt{(pp_1p_2)}}\)-\sgn(\gamma_{12})\)
e^{\pi\I\tau Q_2(\gama_1,\gama_2)},
\label{corresp2}
\ee
where $\gama_1=(p_0^a,(0,q_1))$, $\gama_2=(p_0^a,(0,q_2))$.
For this set of charges \eqref{gam12P2} gives $\gamma_{12}=3(q_2-q_1)$, whereas $(pp_1p_2)=2p_0^3=18$ and $ Q_2(\gama_1,\gama_2)=-\hf\, (q_2-q_1)^2$.
According to \eqref{quant-q}, both charges are decomposed as $q_i=\eps_i+\hf$ where $\epsilon_i\in \IZ$.
Therefore, $q_2-q_1=2\eps_2-\mu+1$ and if we set $q_2-q_1\equiv 2\ell$, then $\ell\in \IZ+(\mu+1)/2$.
Substituting all these quantities into \eqref{corresp2}, one finds perfect agreement with \eqref{ghat2}
provided one uses the identification \eqref{gtogp}.

\subsection{Rank 3}

For $N=3$, the generating functions of refined VW invariants were
computed in \cite[Eqs.(6.18), (6.22)]{Manschot:2017xcr}, generalizing
results in the unrefined case in \cite{weist2011torus, Kool2015}.
The modular completion is given by \cite[Eqs.(6.20), (6.28)]{Manschot:2017xcr}
\be
\label{g3complete}
\begin{split}
\widehat g_{3,0}
=&\, g_{3,0}-\sum_{\mu=0,1}g_{2,\mu}\,R_{1,\mu/2}(\tau,6w)-\frac{1}{4}\,R_{2,0}(\tau,6w)-\frac{1}{12}\, ,
\\
\widehat g_{3,\pm 1}=&\, g_{3,1}
-\frac{g_{2,0}}{2} \[ R_{1,\frac13}(\tau,6w) - R_{1,\frac13}(\tau,-6w) \]
-\frac{g_{2,1}}{2}\[ R_{1,\frac 16}(\tau,6w) - R_{1,\frac16}(\tau,-6w) \]
\\
&\,-\frac{1}{4}\, R_{2,(-\frac13,\frac13)}(\tau,6w),
\end{split}
\ee
where
\bea
\label{R1alpha}
R_{1,\alpha}(\tau,w) &=& \sum_{\ell\in\IZ+\alpha} \Bigl[ \sign(\ell) -
E_1\(\sqrt{3\tau_2}(2\ell+\beta)\) \Bigr]\, y^{6\ell}\, \q^{-3\ell^2},
\\
R_{2,\alpha}(\tau,w) &=&  \sum_{(k_3,k_4)\in\IZ^2+\alpha}
\biggl[ \sign( k_3)\, \sign (k_4)\,
- E_2\left( \frac{1}{\sqrt3}; \sqrt{\tau_2}(2k_3-k_4+\beta), \sqrt{3\tau_2}(k_4+\beta)\right)
\nn\\
&&
- \left( \sign (k_4) - E_1\left(\sqrt{3\tau_2}(k_4+\beta)\right) \right) \sign(2k_3-k_4)
\label{defR2}\\
&&
- \left( \sign (k_3) - E_1\left(\sqrt{3\tau_2}(k_3+\beta)\right) \right) \sign(2k_4-k_3)\biggr]
y^{k_3+k_4}\, \q^{-k_3^2-k_4^2+k_3 k_4},
\nn
\eea
and we used the parametrization \eqref{defE23} for the generalized error function of rank 2.
Note  the identity
\be
\label{R1inv}
R_{1,-\alpha}(\tau,-w)=-R_{1,\alpha}(\tau,w).
\ee

\lfig{Schr\"oder trees contributing to the completion for $N=3$. We indicated in red the $N_i$'s assigned to the leaves of the trees.}
{N3}{12cm}{fig-N3}{-1.6cm}

This should be compared with the result of our general prescription, where each term originates from
one of the Schr\"oder trees shown in Fig. \ref{fig-N3},
\bea
\label{h3pred}
\widehat g'_{3,\mu} &=& g'_{3,\mu}
+\hf\sum_{q_1+q_2=\mu+\frac32}g'_{2,\mu_1}\,\expe{-\frac{\tau}{12}\,(2q_2-q_1)^2}
\\
&& \ \times\[(-y)^{3(2q_2-q_1)}
\(E_1\(\sqrt{\frac{\tau_2}{3}}(2q_2-q_1+18\beta)\)-\sgn(2q_2-q_1)\)
\right.
\nn\\
&&\left.\quad
+(-y)^{3(q_1-2q_2)}
\(E_1\(\sqrt{\frac{\tau_2}{3}}(q_1-2q_2+18\beta)\)-\sgn(q_1-2q_2)\)\]
\nn\\
&&
+\frac14   \sum_{q_1+q_2+q_3=\mu+\frac32}(-y)^{6(q_3-q_1)}
\,\expe{-\frac{\tau}{6}\((q_2-q_1)^2+(q_3-q_2)^2+(q_3-q_1)^2\)}
\nn\\
&&\ \times \biggl[\Phi^E_2(\bfv_1,\bfv_2,\bfx)-
\sgn(q_2+q_3-2q_1) \, \sgn(2q_3-q_1-q_2)-\frac13\, \delta_{q_1=q_2=q_3}
\nn\\
&&\quad
-\(E_1\(\sqrt{\frac{\tau_2}{3}}(2q_3-q_1-q_2+18\beta)\)-\sgn(2q_3-q_1-q_2)\)\sgn(q_2-q_1)
\nn\\
&&\quad
-\(E_1\(\sqrt{\frac{\tau_2}{3}}(q_2+q_3-2q_1+18\beta)\)-\sgn(q_2+q_3-2q_1)\)\sgn(q_3-q_2)
\biggr],
\nn
\eea
where
\be
\Phi^E_2(\bfv_1,\bfv_2,\bfx)=E_2\(\frac{1}{\sqrt{3}},\sqrt{\tau_2}(q_2-q_1+6\beta),\sqrt{\frac{\tau_2}{3}}(2q_3-q_1-q_2+18\beta)\).
\ee

In the second term on the r.h.s. of (\ref{h3pred}) the charges are decomposed as $q_1=2\eps_1+\mu_1$ and $q_2=\eps_2+\hf$.
Therefore $6\ell\equiv q_1-2q_2=6\eps_1-2\mu+3(\mu_1+1)$, which implies that $\ell\in\IZ-\mu/3+(\mu_1+1)/2$. The
second term then reads
\be
\begin{split}
&\frac{1}{2}\sum_{\mu_1=0,1} g'_{2,\mu_1+1}(\tau,w) \sum_{\ell\in \IZ-\frac{1}{3}\mu+\frac{1}{2}\mu_1}
\Bigl[ (-y)^{-18\ell} \(E_1(\sqrt{3\tau_2}(-2\ell+6\beta)-\sgn(-\ell)\)
\\
&\qquad + (-y)^{18\ell} \(E_1(\sqrt{3\tau_2}(2\ell+6\beta)-\sgn(\ell)\)\Bigr]\,\q^{-3\ell^2}
\\
=&\,\frac{1}{2}\sum_{\mu_1=0,1} g_{2,\mu_1}(\tau,w+\haf)\left[
R_{1,-\frac{1}{3}\mu+\frac{1}{2}\mu_1}(-6(w+\haf)) - R_{1,-\frac{1}{3}\mu+\frac{1}{2}\mu_1}(6(w+\haf))\right]
\\
=&\,-\frac{1}{2}  \,g_{2,0}(\tau,w+\haf)\[R_{1,\frac{1}{3}\mu}(6(w+\haf)) - R_{1,\frac{1}{3}\mu}(-6(w+\haf))\]
\\
&\,
-\frac{1}{2}  \,g_{2,1}(\tau,w+\haf) \[R_{1,-\frac{1}{3}\mu+\frac{1}{2}}(6(w+\haf))- R_{1,-\frac{1}{3}\mu+\frac{1}{2}}(-6(w+\haf)) \],
\end{split}
\ee
where we used \eqref{gtogp} at the second step and \eqref{R1inv} to get the last line.
Taking into account that for $\mu=0$ the functions in the brackets can be added by using again (\ref{R1inv}) and $R_{1,\alpha}=R_{1,\alpha+1}$,
this term agrees precisely with the corresponding contributions in \eqref{g3complete} with shifted $w$.

\medskip

Next we move to the third term in (\ref{h3pred}) where all charges are decomposed as $q_i=\eps_i+\hf$, $\epsilon_i\in \IZ$.
We set $k_1=-\eps_1+\mu/3$ and $k_2=-(\eps_1+\eps_2)+2\mu/3$. This term then reads
\bea
&&
\frac14  \sum_{k_1\in \IZ+\frac{\mu}{3}}\sum_{k_2\in \IZ+\frac{2\mu}{3}} (-y)^{6(k_1+k_2)}
\,\q^{-k_1^2-k_2^2+k_1k_2}
\biggl[E_2\(\frac{1}{\sqrt{3}},\sqrt{\tau_2}(2k_1-k_2+6\beta),\sqrt{3\tau_2}(k_2+6\beta)\)
\nn\\
&&\qquad
-\sgn(k_1) \, \sgn(k_2)
-\(E_1\(\sqrt{3\tau_2}(k_2+6\beta)\)-\sgn(k_2)\)\sgn(2k_1-k_2)
\label{h3rd}\\
&&\qquad
-\(E_1\(\sqrt{3\tau_2}(k_1+6\beta)\)-\sgn(k_1)\)\sgn(2k_2-k_1)
\biggr]
-\frac{\delta_{\mu=0}}{12}.
\nn
\eea
Note that the generalized error function $E_2$ appearing in \eqref{h3rd}
is invariant under $k_1\leftrightarrow k_2$ by \cite[Corollary
3.10]{Alexandrov:2016enp}. Therefore, identifying $(k_1,k_2)$ with $(k_4,k_3)$ in \eqref{defR2},
one finds that (\ref{h3rd}) equals
\be
-\frac14\, R_{2,\frac{\mu}{3}(-1,1)}(\tau,6(w+\tfrac12))
-\frac{\delta_{\mu=0}}{12},
\ee
thus reproducing the last terms in \eqref{g3complete}.
Given that for $N=3$ the shift of $\mu$ in \eqref{gtogp} is irrelevant,
we conclude that the completion \eqref{h3pred} perfectly agrees with the one found in \cite{Manschot:2017xcr}.

\subsection{Rank 4}
\label{ap-rank4}

\lfig{Schr\"oder trees contributing to the completion for $N=4$. In the last row all $N_i=1$ and are not indicated.
For each group of trees with the same set of $N_i$ we also show the decomposition of charges $q_i$
and the definition of the independent set of summation variables.}{N4charges}{17.5cm}{fig-N4}{-1.4cm}

For $N=4$, the generating functions of refined VW invariants were computed
in \cite{Manschot:2014cca}, as an example of a general procedure valid for arbitrary $N$.
Our general prescription \eqref{exp-whhr} predicts that the
modular completion should be given by a sum over the  trees shown in Fig. \ref{fig-N4},
where we also indicate the decomposition of charges and the definition of the variables to be summed up.
Taking into account the relation \eqref{gtogp}, one then arrives at the following prediction
\bea
\label{h4pred-final}
\widehat g_{4,\mu}&=&g_{4,\mu}
+\hf \sum_{\mu_1=0,1,2}g_{3,\mu_1}\sum_{\ell\in \IZ+\frac{\mu}{4}-\frac{\mu_1}{3}}\q^{-12\ell^2}
\Bigl[y^{36\ell}\, \tM(2\sqrt{6},3;\ell)
+y^{-36\ell}\,\tM(2\sqrt{6},3;-\ell)
\Bigr]
\nn
\eea
\bea
\quad &+&\hf\sum_{\mu_1=0,1}g_{2,\mu_1}\,g_{2,\mu-\mu_1}\,\sum_{\ell\in \IZ+\frac{\mu}{4}-\frac{\mu_1}{2}}\q^{-2\ell^2}
y^{24\ell}\,\tM(2\sqrt{2},6;\ell)
\nn\\
&+&\frac14  \sum_{\mu_1=0,1} g_{2,\mu_1}\sum_{k_1\in\IZ+\frac{\mu}{4}-\frac{\mu_1}{2}}\,
\sum_{k_2\in\IZ+\frac{1}{2}\,(\mu-\mu_1)}
\,\q^{-4k_1^2-k_2^2}
\nn\\
&&
\times \left\{y^{6(4k_1+k_2)}
\biggl[E_2\(\frac{1}{\sqrt{2}};2\sqrt{\frac{\tau_2}{3}}\,(2k_1-k_2+9\beta),2\sqrt{\frac{2\tau_2}{3}}(k_1+k_2+9\beta)\)
-s(k_1,k_1+k_2)
\right.
\nn\\
&&\quad
-\tM\(2\sqrt{\frac{2}{3}}\, ,9;k_1+k_2\)\sgn(2k_1-k_2)-\tM\(2\sqrt{2} ,6;k_1\)\sgn(2k_1+k_2)
\biggr]
\nn\\
&&
+y^{18k_2}
\biggl[-E_2\(\frac{1}{2\sqrt{2}};2\sqrt{\frac{\tau_2}{3}}\,(2k_1-k_2-18\beta),2\sqrt{\frac{2\tau_2}{3}}(k_1+k_2+9\beta)\)
-s(k_2-k_1,k_1+k_2)
\nn\\
&&\quad
+\tM\(2\sqrt{\frac{2}{3}}\, ,9;k_1+k_2\)\sgn(2k_1-k_2)
-\tM\(2\sqrt{\frac{2}{3}}\, ,9;k_2-k_1\)\sgn(2k_1+k_2)
\biggr]
\nn\\
&&
+y^{-6(4k_1+k_2)}
\biggl[E_2\(\frac{1}{\sqrt{2}};2\sqrt{\tau_2}(k_2-3\beta),2\sqrt{2\tau_2}\,(k_1-6\beta)\)
-s(k_1,k_1+k_2)
\nn\\
&&\left. \quad
-\tM\(2\sqrt{\frac{2}{3}}\, ,-9;k_1+k_2\)\sgn(2k_1-k_2)
-\tM\(2\sqrt{2},-6;k_1\)\sgn(k_2)
\biggr]\right\}
\nn
\\
&+&\frac18\sum_{k_1\in\IZ+\frac{\mu}{4}}\,
\sum_{k_2\in\IZ+\frac{\mu}{2}} \, \sum_{k_3\in\IZ-\frac{\mu}{4}}
\,\q^{-\(k_1^2-k_1k_2+k_2^2-k_2k_3+k_3^2\)}\,y^{6(k_1+k_2+k_3)}
\nn\\
&&
\times \biggl\{E_3\(\frac{1}{\sqrt3}, \frac{1}{\sqrt6}, \frac{1}{2\sqrt2} ;
\sqrt{\tau_2}(2k_1-k_2+6\beta),\sqrt{\frac{\tau_2}{3}}\,(3k_1-k_3+18\beta),
2\sqrt{\frac{2\tau_2}{3}}\,(k_3+9\beta)\)
\nn\\
&&\quad
-\sgn(k_1) \, \sgn(k_2)\, \sgn(k_3)
-\frac13\bigl(\delta_{k_2=k_3=0}\sgn(k_1)
+\delta_{k_1=k_3=0}\sgn(k_2)
+\delta_{k_1=k_2=0}\sgn(k_3)\bigr)
\nn
\\
&&
-\(E_2\(\frac{1}{\sqrt{2}};\sqrt{\frac{\tau_2}{3}}\,(3k_2-2k_3+18\beta),2\sqrt{\frac{2\tau_2}{3}}\,(k_3+9\beta)\)
-s(k_2,k_3)\) \sgn(2k_1-k_2)
\nn\\
&&
-\(E_2\(\frac{1}{2\sqrt{2}};\sqrt{\frac{\tau_2}{3}}\,(3k_1-k_3+18\beta),2\sqrt{\frac{2\tau_2}{3}}\,(k_3+9\beta)\)
-s(k_1,k_3)\) \sgn(2k_2-k_3-k_1)
\nn\\
&&
-\(E_2\(\frac{1}{\sqrt{2}};\sqrt{\tau_2}(2k_1-k_2+6\beta),\sqrt{2\tau_2}\,(k_2+12\beta)\)
-s(k_1,k_2)\)\sgn(2k_3-k_2)
\nn
\\
&&\quad
-\tM\(2\sqrt{\frac{2}{3}}\, ,9;k_3\)
\biggl[\sgn(3k_1-k_3)\sgn(3k_2-2k_3)+\frac13\,\delta_{k_1=\frac{k_2}{2}=\frac{k_3}{3}}
\nn\\
&& \qquad
-\sgn(2q_3-q_1-q_2)\sgn(q_2-q_1)
-\sgn(q_2+q_3-2q_1)\sgn(q_3-q_2)\biggr]
\nn\\
&&\quad
-\tM\(2\sqrt{\frac{2}{3}}\, ,9;k_1\)
\biggl[\sgn(3k_2-2k_1)\sgn(3k_3-k_1)+\frac13\,\delta_{\frac{k_1}{3}=\frac{k_2}{2}=k_3}
\nn\\
&&\qquad
-\sgn(3k_2-2k_1)\sgn(2k_3-k_2)
-\sgn(3k_3-k_1)\sgn(2k_2-k_3-k_1)\biggr]
\nn
\\
&&
+\tM\(\sqrt{2},12;k_2\)\sgn(2k_1-k_2)\sgn(2k_3-k_2)
\biggr\},
\label{resN4}
\eea
where we introduced convenient notations
\bea
\tM(a,b;\ell)&=& E_1\(a\sqrt{\tau_2}(\ell+b\beta)\)-\sgn(\ell),
\eea
\be
s(k_1,k_2)=\sgn(k_1)\,\sgn(k_2)+\frac13\, \delta_{k_1=k_2=0}\, .
\ee

\providecommand{\href}[2]{#2}\begingroup\raggedright\endgroup


\end{document}